\newtheorem{theorem}{Theorem}
\newtheorem{proposition}{Proposition}
\newtheorem{lemma}{Lemma}
\theoremstyle{remark}
\theoremstyle{application}
\newtheorem*{application}{Example}
\newcommand{\R}{\mathbb{R}}
\newcommand{\N}{\mathbb{N}}
\newcommand{\calA}{\mathcal{A}}
\newcommand{\calD}{\mathcal{D}}
\newcommand{\Finc}{\mathcal{F}_{\mathrm{inc}}}
\newcommand{\Fincz}{\mathcal{F}^{0}_{\mathrm{inc}}}
\newcommand*{\eqdef}{\stackrel{\mathrm{def}}{=}}
\newcommand*{\alphapcr}[1]{
    \ifthenelse{\isempty{#1}}
    % {\floor{\alpha_{\mathrm{PCR}}}}
    % {\floor{\alpha'_{\mathrm{PCR}}}}
    {\alpha_{\mathrm{pkt}}}
    {\alpha_{\mathrm{pkt},#1}}
}
\newcommand{\A}{\mathrm{A}}
\newcommand{\G}{\mathrm{G}}
\newcommand{\pkt}{\mathrm{pkt}}
\newcommand{\us}{~\mu\mathrm{s}}
\newcommand{\mbps}{~\mathrm{Mbps}}
\newcommand{\mfa}{\mbox{ for all }}
\newcommand{\mand}{\mbox{ and }}
\DeclarePairedDelimiter\floor{\lfloor}{\rfloor}
\DeclarePairedDelimiter\ceil{\lceil}{\rceil}
\begin{document}

% \title{Formal Analysis of Traffic Regulators and Tight Delay Bounds in Time-Sensitive Networks}
\title{Improved Network Calculus Delay Bounds in Time-Sensitive Networks}

% author names and affiliations
% use a multiple column layout for up to three different
% affiliations
% \author{
% \IEEEauthorblockN{Author}
% \IEEEauthorblockA{EPFL \\
% Email: }
% \and
% \IEEEauthorblockN{Author}
% \IEEEauthorblockA{EPFL \\
% 	Email: }
% \and
% \IEEEauthorblockN{Author}
% \IEEEauthorblockA{EPFL \\
% 	Email: }
% \and
% \IEEEauthorblockN{Author}
% \IEEEauthorblockA{EPFL \\
% 	Email: }
% }

\author{\IEEEauthorblockN{
		Ehsan Mohammadpour, Eleni Stai, Jean-Yves Le Boudec
		\\}
\IEEEauthorblockA{\'Ecole Polytechnique F\'ed\'erale de Lausanne, Switzerland\\
$\{$firstname.lastname$\}$@epfl.ch}}

% make the title area
\maketitle

% As a general rule, do not put math, special symbols or citations
% in the abstract
\begin{abstract}
%Network calculus uses bit-level arrival and service curve characterizations to prove delay bounds. By exploiting the information on packet transmission time, such bounds are improved when the service curve is of rate-latency. In time-sensitive networks (IEEE TSN or IETF DetNet), flows are constrained on the number of packets rather than bit. A common approach to obtain a delay bound is to derive a bit-level arrival curve for these flow and use the state-of-the-art network calculus bound. In this paper, we obtain novel, improved delay bounds for flows with packet-level constraints sharing a FIFO system with bit-level service curve and fixed transmission rate $c$. We further show that this bound is tight for any $c$-Lipschitz service curves.
%Our method can also be applied to flows with $g$-regulation, in particular Length-Rate Quotient; it extends the existing results on delay bounds for flows with such constraints.
%Furthermore, we generalize the state-of-the-art network calculus delay bounds to be applied for non rate-latency service curves.
In time-sensitive networks, bounds on worst-case delays are typically obtained by using network calculus and assuming that flows are constrained by bit-level arrival curves. However, in IEEE TSN or IETF DetNet, source flows are constrained on the number of packets rather than bits. A common approach to obtain a delay bound is to derive a bit-level arrival curve from a packet-level arrival curve. However, such a method is not tight: we show that better bounds can be obtained by directly exploiting the arrival curves expressed at the packet level. Our analysis method also obtains better bounds when flows are constrained with g-regulation, such as the recently proposed Length-Rate Quotient rule. It can also be used to generalize some recently proposed network-calculus delay-bounds for a service curve element with known transmission rate.
\end{abstract}

\begin{IEEEkeywords}
Time-sensitive networks, delay bound, arrival curve, packet-level constraint, bit-level constraint, network calculus.
\end{IEEEkeywords}

% For peerreview papers, this IEEEtran command inserts a page break and
% creates the second title. It will be ignored for other modes.
\IEEEpeerreviewmaketitle

%\input{Nomenclature}
%!Tex root = ./ton-main.tex 
\section{Introduction}\label{sec:intro}
% % motivation
% 
Time-sensitive networks provide real-time guarantees for applications such as avionics, automobiles, industrial automation, etc \cite{ieeeAVB,iecIECIEEE608022019,ieeeDraftStandardLocal2019b,ecssSpaceWireLinksNodes2008,tsn-profile-service-provider,itu-y3000}. IETF Deterministic Networking (DetNet) \cite{detnet} and IEEE Time-Sensitive Networking (TSN) \cite{tsn} formalize the requirements and provide standardization for such networks. One of the main goals in time-sensitive networks is to provide guarantees on worst-case delay, and not average delay. In order to obtain such guarantees, flows are assumed to be regulated at the sources. A classical form of source regulation is the bit-level arrival curve constraint: a flow satisfies a bit-level arrival-curve constraint $\alpha()$ if the number of bits over any time interval of any duration $t$ is upper bounded by $\alpha(t)$ \cite{le_boudec_network_2001,bouillard_deterministic_2018}. The celebrated token bucket constraint with rate $r$ and burst $b$ is an example, with $\alpha(t)=rt+b$ for $t>0$. Formally proven delay bounds can be obtained by using network calculus, which combines arrival and service curves \cite{le_boudec_network_2001,bouillard_deterministic_2018}. A service curve is an abstraction that conveys information on the minimum service provided by a system (see Section~\ref{sec:sys}).  Then the \textit{classical} network-calculus delay-bound is obtained by taking the horizontal deviation of the arrival and service curves \cite[Section 3]{le_boudec_network_2001} \cite[Section 3.1.2]{bouillard_deterministic_2018}. This delay bound can be improved for some systems with known transmission rate~\cite{mohammadpour_improved_2019}. 

However, in time-sensitive networks, flow regulation at sources is often expressed in terms of number of packets rather than number of bits \cite{rfc9016} \cite[Section 35.2.2.8.4]{ieee8021q}; for example in TSN, the number of packets observed within any fixed \textit{class measurement interval} (CMI) is upper-bounded by a constant value. To obtain delay bounds for such flows, network calculus is often used \cite{boundedlatency}, for which a bit-level characterization is required. Hence, a common approach is to derive a bit-level arrival-curve from the packet-level constraint and then apply network calculus~\cite{daigmorte_modelling,maile2020network}. 
However, as we show in Table~\ref{table:bound-compare} and Section~\ref{sec:eval}, this can lead to delay bounds that can be improved. 
Indeed, our main result, in Theorem \ref{thm:delay_pcr}, is
a novel delay bound for flows regulated with packet-level constraints, which improves on the one that is obtained by deriving a bit-level arrival-curve from the packet-level regulation constraint. We show that the obtained delay bound is tight at least for $c$-Lipschitz \cite{eriksson2004applied} service curves where $c$ is the physical line-speed (Theorems~\ref{thm:delay_pcr_tight} and~\ref{thm:delay_tight_fixed_interval}). 

Our method uses a novel modelling of packet-level constraint with g-regulation %and bit-level arrival-curve 
(Proposition \ref{pro:rate2reg}). The concept of g-regulation was introduced by C.S. Chang \cite{changbook} as an alternative to bit-level arrival-curve and uses max-plus algebra, whereas results for bit-level arrival-curves tend to use min-plus algebra. Our improvement  in the delay bound is made possible by combining min-plus representation of service curves and max-plus representation of the input traffic to obtain a bound on queuing delay (Lemma~\ref{lem:queuing-delay}).
%, 2) exploiting the fact that packet transmission is done with full physical-line speed, as in \cite{mohammadpour_improved_2019}. 

 As a by-product of our method of proof, we also obtain delay bounds for flows with g-regulation, such as Length-Rate Quotient (LRQ) \cite{specht_urgency_based_2016}, that share a FIFO system with known service curve (Theorem~\ref{thm:greg-delay}). This bound improves on the one obtained by deriving a bit-level arrival-curve from g-regulation and then using the state-of-the-art network calculus bound. Moreover, it generalizes the results by \cite{specht_urgency_based_2016} that was specifically applicable to flows with LRQ constraints and priority queues with constant rate servers. Last, the state-of-the-art network-calculus delay-bound for bit-level arrival-curve can also be generalized to a wider family of service curves than rate-latency ones (Theorem~\ref{thm:response_fifo_arrival}). 
% % \TBD{who cares ?}
% More complex service curves typically provide better characterization of a system and therefore lead to improvement on the delay bounds \cite{tabatabaee2021deficit,tabatabaee2021interleaved}.

The rest of the paper is organized as follows. Section~\ref{sec:related} presents the state-of-the-art and related works. Section~\ref{sec:sys} includes the system model, notation, and the definition of the considered flow regulation constraints.
Section~\ref{sec:regModel} presents the relations among different regulation constraints.
In Section~\ref{sec:delay}, we present the main contribution of this paper, namely, novel delay bounds for flows with packet-level constraints, and we show that these bounds are tight. Section~\ref{sec:ncimprov} gives a generalization of the existing delay bounds for flows with g-regularity constraint and bit-level arrival-curve, as a by-product of our method of proof in Section~\ref{sec:delay}. Section~\ref{sec:comparison} shows that the best delay bound
for a given flow is obtained by directly applying the theorem
corresponding to its initial constraint.
Section~\ref{sec:eval} provides a numerical illustration of the theorems presented in this paper and Section~\ref{sec:conclusion} concludes the paper. The Appendix gives proofs.

% \todo{The rest of the paper is organized as follows. Section~\ref{sec:related} presents the state-of-the-art and related works. Section~\ref{sec:sys} includes the system model, notation, and the definition of the considered flow regulation constraints.
%  We present the generalization of the delay bounds for flows with g-regulation and bit-level arrival curves in Section~\ref{sec:ncimprov}. In Section~\ref{sec:delay}, we derive a g-regulation constraint from a packet-level arrival-curve; then, using the delay bounds obtained in Section~\ref{sec:ncimprov}, we derive a delay bound for flows with packet-level arrival-curves; last, we show that the obtained bound is the best possible one when the only information on flows is at the packet level.
% Section~\ref{sec:eval} provides a numerical illustration of the theorems presented in this paper and Section~\ref{sec:conclusion} concludes the paper. The proofs of the theorems and propositions are in appendix.}
%!Tex root = ./ton-main.tex
\section{Related Works}\label{sec:related}
The classical network-calculus proves delay bounds for a FIFO system with bit-level arrival and service curves \cite{le_boudec_network_2001,bouillard_deterministic_2018}. Specifically, for a flow with bit-level arrival-curve $\alpha$ that enters a FIFO system with service curve $\beta$, the bound is obtained by taking the horizontal deviation between the two curves, i.e., $h(\alpha,\beta)$, which is defined in \cite[Section~3.1.11]{le_boudec_network_2001} and recalled in Section~\ref{sec:sys}. For example, when a flow has leaky bucket arrival curve $\alpha(t)=rt+b$ and the FIFO system has rate-latency service-curve $\beta(t)=\max\left(R(t-T),0\right)$, we have $h(\alpha,\beta)=T+\frac{b}{R}$ provided that $r\leq R$. This bound is tight if only arrival and service curves are known. However, we often have more information, specially in the context of time-sensitive networks, e.g., TSN schedulers with Credit-Based Shapers \cite{mohammadpour_latency_2018} and Deficit Round Robin (DRR) schedulers \cite{tabatabaee2021deficit}: when a packet starts its transmission, it is transmitted with full line rate. Recently, such information was exploited for rate-latency service-curves to provide a delay bound that improves the classical network-calculus bound \cite{mohammadpour_improved_2019}; more precisely, the classical network-calculus bound is reduced by $L^{\min}\left(\frac{1}{R}-\frac{1}{c}\right)$, where $c$ is the physical line-rate and $L^{\min}$ is the minimum packet length of the input traffic.

While rate-latency service-curves are commonly used in the literature, they may not provide a perfect characterization of a system. A number of works provide more complex service curves that in turn leads to better delay bounds. In \cite{tabatabaee2021deficit}, the authors obtained a non rate-latency service-curve for Deficit Round-Robin (DRR) schedulers that can incorporate the arrival curves of the interfering flows. Such non rate-latency service-curves are obtained as well for Weighted Round-Robin (WRR)\cite[Section 8.2.4]{bouillard_deterministic_2018} and Interleaved WRR \cite{tabatabaee2021interleaved}. Hence, the improvement of \cite{mohammadpour_improved_2019} does not apply to these service curves. In this paper, we generalize the results of \cite{mohammadpour_improved_2019} to improve delay bounds for non rate-latency service-curves. 
Furthermore, we improve the bounds in \cite{mohammadpour_improved_2019} when the flows have packet-level constraints.

%  \TBD{do we improve on this bound when we use packet level arrival curves ?}

A number of other works focus on improving the arrival curves of the flows in time-sensitive networks by taking advantage of input-line shaping effect. For example, for a flow with leaky-bucket arrival curve $\alpha(t)=rt+b$ that passes a physical line with rate $c$, we can obtain a better arrival curve $\alpha'(t)=\min(rt+b,ct)$. In \cite{daigmorte_modelling,maile2020network}, the authors study a TSN network and assume the input traffic has packet-level constraint \cite[Section 35.2.2.8.4]{ieee8021q}. To obtain delay bounds, they first derive a bit-level arrival-curve from the packet-level constraint; then they exploit input-line shaping to improve the obtained arrival curve. Finally, they use network calculus to obtain delay bounds. 
The improvement by the input-line shaping effect is complementary to the improvements in this paper. %; similar methods as in \cite{daigmorte_modelling,maile2020network} can be used to integrate such effect in our analysis.}
% \revisit{The analysis in our work is done without considering the input-line shaping effect; nevertheless, similar methods as in \cite{daigmorte_modelling,maile2020network} can be used to integrate such effect with the results of our paper.}
%Furthermore, in the absence of such effect, e.g. due to per-flow or interleaved regulators\cite{le_boudec_theory_2018}, our results improve the delay bounds in \cite{daigmorte_modelling,maile2020network}.

Recently, LRQ was introduced in the context of interleaved regulators in \cite{specht_urgency_based_2016} as a per-flow regulation constraint that is simpler to implement than token-bucket. LRQ is in fact a specific case of Shifted-Rate Regulator (when the delay term is set to zero) \cite[Section~6.2.1]{changbook}, which belongs to the family of g-regulation. \cite{specht_urgency_based_2016} obtains delay bounds for flows with LRQ constraint within constant-rate servers and strict-priority queuing. This analysis was extended for guaranteed-rate servers \cite{jiang2021properties}. Unlike the previous works, our results on g-regulated flows cover the whole family of g-regulation (i.e., are not limited to LRQ constraint) and apply to a broader class of network nodes (i.e., nodes with arbitrary service curves).

\section{System model and General Prerequisite}\label{sec:sys}

\begin{table}[t]
\renewcommand{\arraystretch}{1.3} % Default value: 1
\caption{List of notation.}
\label{table:notation}
\begin{tabular}{|c|p{0.8\columnwidth}|}
\hline
Term          & Description                                                       \\ \hline \hline
$A_n$         & The arrival time of packet $n$ to the FIFO system                 \\ \hline
$c$           & The transmission rate of the output link                          \\ \hline
$D_n$         & The departure time of packet $n$ from the FIFO system             \\ \hline
$h(w',w)$      & The horizontal deviation from function $w'$ to function $w$ (Section~\ref{sec:notations})       \\ \hline 
$K$           & The maximum  number of packets in each interval of TSN/DetNet traffic constraint        \\ \hline
$L^\max_f$    & Maximum packet length of flow $f$                                 \\ \hline
$L^\min_f$    & Minimum packet length of flow $f$                                 \\ \hline
$l_n$         & The length of packet $n$ in bits                          \\ \hline
$N(t)$        & The cumulative number of packets arrived at the FIFO system until time $t$~(excluded)       \\ \hline 
$Q_n$         & The start of transmission of packet $n$ from the FIFO system      \\ \hline
$R$           & Rate of service curve, when $\beta$ is rate-latency               \\ \hline
$T$           & Latency of service curve, when $\beta$ is rate-latency            \\ \hline
$\beta$       & The service curve of the FIFO system                              \\ \hline
$\Delta^\A$   & The response time bound on a flow with bit-level arrival-curve    \\ \hline
$\Delta^{\A\G}$   & The response time bound on a flow with $g$-regularity constraint competing with an aggregate of flows with bit-level arrival-curve   \\ \hline
$\Delta^\G$   & The response time bound on a flow with $g$-regularity constraint  \\ \hline
$\Delta^\pkt$ & The response time bound on a flow with packet-level arrival-curve \\ \hline 
$\tau$        & The interval duration in TSN/DetNet traffic constraint     \\ \hline \hline
$\N$          & The set of positive integers, i.e., $\{1,2,\dots,\}$        \\ \hline 
$\R^+$        & The set of non-negative real numbers, i.e., $[0,\infty)$          \\ \hline 
% $\F$          & The set of functions $[0,+\infty)\to [0,+\infty]$          \\ \hline 
$\Fincz$      & The set of wide-sense increasing functions such that $\forall w\in \Fincz:~w(0)=0$        \\ \hline 
$w^+$       & The right limit of function $w$ (Section~\ref{sec:notations}         \\ \hline 
$w^-$       & The left limit of function $w$ (Section~\ref{sec:notations}         \\ \hline 
$w^{\downarrow}$       & The lower pseudo-inverse of function $w$ (Section~\ref{sec:preq})              \\ \hline 
$w^{\uparrow}$       & The upper pseudo-inverse of function $w$  (Section~\ref{sec:preq})                \\ \hline 
$\ceil{x}$    & The ceiling of $x$                                                \\ \hline 
$\floor{x}$    & The floor of $x$                                                \\ \hline 
$[x]^+$       & $\max\{x,0\}$          \\ \hline
$\circ$   & The function-composition operator, i.e., $(f \circ g )(x)=f(g(x))$ \\ \hline 
$1_{\{C\}}$   & It is equal to $1$ when condition $C$ is true; otherwise it is $0$. \\ \hline 
\end{tabular}
\end{table}

In this section, we first describe the systems to which our analysis applies. Second, we recall the definition of pseudo-inverse functions as they are used throughout the paper. Third, we give a description of the flow regulation constraints considered in this paper, i.e., bit-level arrival-curve, g-regulation and packet-level arrival-curve. Finally, we give a summary of necessary mathematical definitions required to follow the content of the paper. For the reader's convenience, Table~\ref{table:notation} gives the notation used throughout the paper.
\subsection{System Model}
\begin{figure}[t]
	\centering
	\resizebox{0.6\linewidth}{!}{\begin{tikzpicture}
\tikzstyle{roundedbox} =[draw,rounded corners=0.1cm];
\tikzstyle{vroundedbox} =[draw,rounded corners=0.1cm,rotate=90];
\tikzstyle{general} = [draw,line width=1pt]

\def \IRlen{1.75cm} % horizontal length of each IR
\def \IRwid{0.5cm} % vertical length of each IR
\def \FIFOlen{3cm} % horizontal length of each FIFO
\def \FIFOwid{1cm} % vertical length of each FIFO
\coordinate (picLeft) at (0,0);
\coordinate (IRLeft) at (picLeft);

%IR section
\def \IRdis{0.1cm} %vertical distance between IRs
%\draw [general] ([yshift = \IRdis+\IRwid+\IRwid/2]IRLeft) -- ([xshift=\IRlen,yshift = \IRdis+\IRwid + \IRwid/2]IRLeft) -- ([xshift=\IRlen,yshift = \IRdis+\IRwid -1* \IRwid/2]IRLeft) node [pos=0.5](IREndTop){} -- ([yshift = \IRdis+\IRwid -1*\IRwid/2]IRLeft);%draw top IR
%\node[anchor=east] at (IREndTop.center) { IR~$ 1 $};

%\draw [general] ([yshift = \IRwid/2]IRLeft) -- ([xshift=\IRlen,yshift = \IRwid/2]IRLeft) -- ([xshift=\IRlen,yshift = -1* \IRwid/2]IRLeft) node [pos=0.5](IREndMid){} -- ([yshift = -1*\IRwid/2]IRLeft);%draw middle IR
%\node[anchor=east] at (IREndMid.center) { IR~$i$};

%\draw [general] ([yshift = -1*\IRdis-\IRwid+\IRwid/2]IRLeft) -- ([xshift=\IRlen,yshift = -1*\IRdis-\IRwid+ \IRwid/2]IRLeft) -- ([xshift=\IRlen,yshift = -1*\IRdis-\IRwid -1* \IRwid/2]IRLeft) node [pos=0.5](IREndDown){} -- ([yshift = -1*\IRdis-\IRwid -1*\IRwid/2]IRLeft); %draw down IR
%\node[anchor=east] at (IREndDown.center) { IR~$ m $};

%input to IR
\draw[general,->,anchor=west,text width=1cm] ([xshift = -0.5cm]IRLeft) -- ([xshift = 0.1cm]IRLeft) node [pos=-1] (input){Input traffic};%input;

%FIFO section
%\coordinate (FIFOLeft) at ([xshift=0.2cm]IREndMid); %entrance of the FIFO system
\coordinate (FIFOLeft) at ([xshift=0.2cm]picLeft); %entrance of the FIFO system
\draw [general] ([yshift = \FIFOwid/2]FIFOLeft) -- ([xshift=\FIFOlen,yshift = \FIFOwid/2]FIFOLeft) -- ([xshift=\FIFOlen,yshift = -1* \FIFOwid/2]FIFOLeft) node [pos=0.5](FIFOEnd){} -- ([yshift = -1*\FIFOwid/2]FIFOLeft); %draw fifo queue
\node[anchor=east] at ([xshift = -\IRlen/2]FIFOEnd.center) { FIFO};

%from IRs to FIFO
\coordinate (arrowInFIFO) at ([xshift = 0.2cm]FIFOLeft.center); %the length of the last part of the arrow from IRs to FIFO
%\draw[general,->,anchor=east] (IREndTop.center) -- ([yshift=\FIFOwid/5] FIFOLeft.center) -- ([yshift=\FIFOwid/5] arrowInFIFO.center); %Top IR to the FIFO
%\draw[general,->,anchor=east] (IREndMid.center) -- (FIFOLeft.center) -- (arrowInFIFO.center); %Middle IR to the FIFO
%\draw[general,->,anchor=east] (IREndDown.center) -- ([yshift=-\FIFOwid/5] FIFOLeft.center) -- ([yshift=-\FIFOwid/5] arrowInFIFO.center); %Down IR to the FIFO

% higher priority FIFO systems
\node[anchor=south] at ([xshift = -\FIFOlen/2, yshift = \FIFOwid/2+0.3cm]FIFOEnd.center) (HPFIFO){\fontsize{8}{48} \selectfont Other queues};
\coordinate (FIFOTop) at ([xshift = -\FIFOlen/2,yshift = \FIFOwid/2+0.1cm]FIFOEnd.center);
\path (HPFIFO) -- (FIFOTop) node [font=\small, midway, sloped] {$\dots$};

% lower priority FIFO systems
\node[anchor=north] at ([xshift = -\FIFOlen/2, yshift = -\FIFOwid/2-0.3cm]FIFOEnd.center) (LPFIFO){\fontsize{8}{48} \selectfont Other queues};
\coordinate (FIFODown) at ([xshift = -\FIFOlen/2,yshift = -\FIFOwid/2-0.1cm]FIFOEnd.center);
\path (LPFIFO) -- (FIFODown) node [font=\small, midway, sloped] {$\dots$};

% Scheduler section
\coordinate (schedulerLeft) at ([xshift=0cm]FIFOEnd); %entrance of the scheduler system
\node [general,rotate=90 , anchor=north] at ([xshift=-0.75pt]schedulerLeft.center)(scheduler){Scheduler} ; %draw scheduler

% Transmission section
\def \transRadius{0.3cm} %radius of the transmission subsystem
\coordinate (transLeft) at ([xshift=0cm]scheduler.south); %entrance of the scheduler system
\node[general,circle,minimum height=\transRadius*2, radius=\transRadius,anchor=west] at ([xshift=.1pt]transLeft.center)(trans){c}; %draw transmission subsystem

\def \transArrow{0.3cm} %Size of the transmission arrow
\draw[general,->,anchor=west] (trans.east) -- ([xshift = \transArrow]trans.east);

% service curve text
\draw[general,<->,anchor=south] ([yshift = 1.5cm]FIFOLeft.center) -- ([yshift = 1.5cm]trans.east) node [pos=0.5]{Service curve $\beta$};

% Draw packets
\node [general,dotted, anchor=west,minimum height=\FIFOwid] at (arrowInFIFO.center)(packetnFIFOIn){$l_n$} ; %draw packet l in the FIFO

%\node [general,dotted, anchor=west,minimum height=\IRwid] at (IRLeft.center)(packetnIR){$l_n$} ; %draw packet l in the IR

%\node [general,dotted, anchor=east,minimum height=\FIFOwid] at (scheduler.east)(packetnHead){$l_n$} ;
\node [general, dotted, anchor=west,minimum height=\IRwid] at ([yshift=\transRadius+0.5cm]trans.west)(packetnHead){$l_n$} ; %draw packet l at the head of queue

\node [general, anchor=west,minimum height=\IRwid] at ([yshift=\transRadius+0.2cm]trans.east)(packetnTrans){$l_n$} ; %draw packet l transmitted

% Time tags
%\draw[general,-,dotted]([yshift=-\IRwid/2-\IRdis-\IRwid-0.2cm]packetnIR.west) -- ([yshift=0.5cm]packetnIR.west) node [pos=0,yshift=-0.2cm] {$A_n$};% arrival to IR tag

\draw[general,-,dotted]([yshift=-\IRwid/2-\IRdis-\IRwid-0.2cm]packetnFIFOIn.west) -- ([yshift=0.5cm]packetnFIFOIn.west) node [pos=0,yshift=-0.2cm]{$A_n$};% IR eligible tag

\draw[general,-,dotted]([yshift=-\IRwid/2-\IRdis-\IRwid-\FIFOwid/2-0.5cm]packetnHead.west) -- ([yshift=0.5cm]packetnHead.west) node [pos=0,yshift=-0.2cm]{$Q_n$};% head of the queue tag

\draw[general,-,dotted]([yshift=-\IRwid/2-\IRdis-\IRwid-\FIFOwid/2-0.2cm]packetnTrans.west) -- ([yshift=0.5cm]packetnTrans.west) node [pos=0,yshift=-0.2cm]{$D_n$};% head of the queue tag

\end{tikzpicture}}
	\caption{The considered FIFO system.}
	\label{fig:fifo_node}
\end{figure}
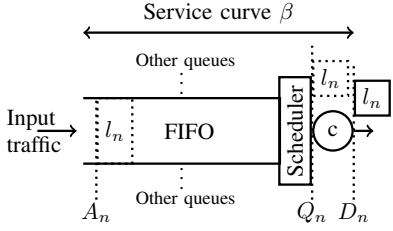 

We study a FIFO system with a queue and a transmission link, as in Fig. \ref{fig:fifo_node}. Each queue is shared among a number of flows. Upon arrival, packets of different flows enter the queue and are stored in FIFO order. A scheduler decides when the packet at the head of the queue is selected for transmission. The scheduler typically arbitrates between this queue and other queues (not shown), therefore the packet at the head of the queue may have to wait even if there is no packet of this queue in transmission. When the packet at the head of this queue is selected for transmission, it is transmitted at a constant rate $c$ until it is completely transmitted, i.e., there is no preemption. 

Every flow is assumed to be constrained by either bit-level arrival curve (Section~\ref{sec:arr}), g-regulation (Section~\ref{sec:greg}) or packet-level arrival curve (Section~\ref{sec:pktarr}).
For each flow $f$, let $L^{\min}_f$ and $L^{\max}_f$ denote the minimum and maximum packet lengths, in bits. We also assume that the total flow of all incoming packets is packetized, i.e., we consider that all bits of any packet arrive at the same time instant.

Let $A_n$ be the arrival time of packet $n$ to the FIFO system, where the numbering of packets is by order of arrival and $Q_n$ be the time at which packet $n$ is selected for transmission. The FIFO assumption means that $Q_n \leq Q_{n+1}$. We call $Q_n-A_n$ the queuing delay of packet $n$ in the FIFO system. 
Let $l_n$ be the length in bits of packet $n$ and assume that it belongs to flow $f$, so that $L^{\min}_f\leq l_n\leq L^{\max}_f$; then packet $n$ leaves the system at time $D_n = Q_n +\frac{l_n}{c}$. We call $D_n-A_n$ the response time of the FIFO system for packet $n$. 

We assume that the scheduler is such that the FIFO system offers to the total flow of all incoming packets a service curve $\beta$. Formally, this is defined as follows \cite[Section 1.3]{le_boudec_network_2001}. Let $\Fincz$ be the set of wide-sense increasing functions $w: [0,+\infty)\to [0,+\infty]$ such that  $w(0)=0$. Let 
$\calA, \calD \in \Fincz$ be such that $\calA(t)$ [resp. $\calD(t)$] denotes the cumulative number of bits arrived in [resp. departed from] the system until time $t$ (excluded). A function $\beta\in \Fincz$ is a service curve for the system if for every time $t\geq 0$ there exists a time $s\leq t$ such that 
\begin{equation}\label{eq:arrival_def}
\calD(t) \geq \beta(t-s)+ \calA(s).
\end{equation}
A service curve characterization is available for many systems, see for example \cite{zhao_timing_2018,boyer2012drr,finzi2018bls,mangoua2012sp,burchard2018gps}
Rate-latency service-curves are functions of the form  \mbox{$\beta(t)=R[t-T]^+$} with $R,T$ being the rate and latency terms. A system that offers a rate-latency service curve can be interpreted as behaving, for the flows of interest, as if it would be a server with rate $R$ and vacation $T$. The rate $R$ is the rate guaranteed to the flow and is typically less than the line rate $c$. Rate-latency service curves are often used because of their simplicity, but better delay bounds can also be obtained with more complex service curves \cite{tabatabaee2021deficit}.

\subsection{Pseudo-inverse Functions}\label{sec:preq}
% \TBD{All functions we use are wide-sense increasing, so we don't need $\F$.}\DONE{there is no $\F$?}

For $w\in \Fincz$, the lower and upper pseudo-inverses are respectively $w^{\downarrow}$ and $ w^{\uparrow}$, and are defined as:
\begin{align}
    \label{eq:lsi-def} w^{\downarrow}(x) &\eqdef\inf \{s\geq 0~|~w(s)\geq x\},\\
    \label{eq:usi-def} w^{\uparrow}(x) &\eqdef\sup \{s\geq 0~|~w(s)\leq x\}.
\end{align}
Fig. \ref{fig:inverse} illustrates the pseudo-inverse functions and the differences between them.
% If $f\in \Finc$, where $\Finc$ is the set of wide-sense increasing functions of $\F$, then 
By \cite[Section 10.1]{liebeherr_duality_2017}:
\begin{itemize}
    \item $w^{\downarrow}$ is non-decreasing and left continuous.
    \item $w^{\uparrow}$ is non-decreasing and right continuous.
\end{itemize}
% \begin{align}
%     \label{eq:liebeherr_P6} \forall y\in \R^+~:~f(x)\leq y &\implies x \leq f^{\uparrow}(y), \\
%     \label{eq:liebeherr_P8} \forall y\in \R^+~:~f(x) \geq y &\implies x \geq f^{\downarrow}(y),
% \end{align}
% Moreover, by \cite[Section 10.1]{liebeherr_duality_2017}\cite{le_boudec_theory_2018},
% \begin{align}
%     \label{eq:jylb_theory_1}(f^+)^{\downarrow} &= f^{\downarrow},\\ 
%     \label{eq:jylb_theory_2}(f^{\downarrow})^+ &= f^{\uparrow},\\
%     \label{eq:jylb_theory_3}(f^{\uparrow})^- &= f^{\downarrow},\\
%     \label{eq:liebeherr_P6} \forall y\in \R^+~:~f(x)\leq y &\implies x \leq f^{\uparrow}(y), \\
%     \label{eq:liebeherr_P8} \forall y\in \R^+~:~f(x) \geq y &\implies x \geq f^{\downarrow}(y),
% \end{align}
% where $f^+$ is the right limit of the function $f$. Furthermore, by \cite[Section 10.1]{liebeherr_duality_2017}\cite{boyer_common}, if $f$ is right continuous:
% \begin{align}
% 	\label{eq:liebeherr_lemma10.1d} f&=(f^{\downarrow})^{\uparrow},\\
% 	\label{eq:comp-down} \forall w \in \Finc~:~(f \circ w )^{\downarrow}(x)&=(w^{\downarrow}\circ f^{\downarrow})(x), 
% \end{align}
% and if $f$ is left continuous:
% \begin{align}
% 	\label{eq:liebeherr_lemma10.1c} f&= (f^{\uparrow})^{\downarrow},\\ 
% 	\label{eq:comp-up}\forall w \in \Finc~:~(f \circ w )^{\uparrow}(x)&=(w^{\uparrow}\circ f^{\uparrow})(x).
% \end{align}
% Note that $\circ$ is the composition operator, i.e., $(f \circ w )(x)=f(w(x))$.
\begin{figure}[h]
    \centering
    \includegraphics[width=0.9 \linewidth]{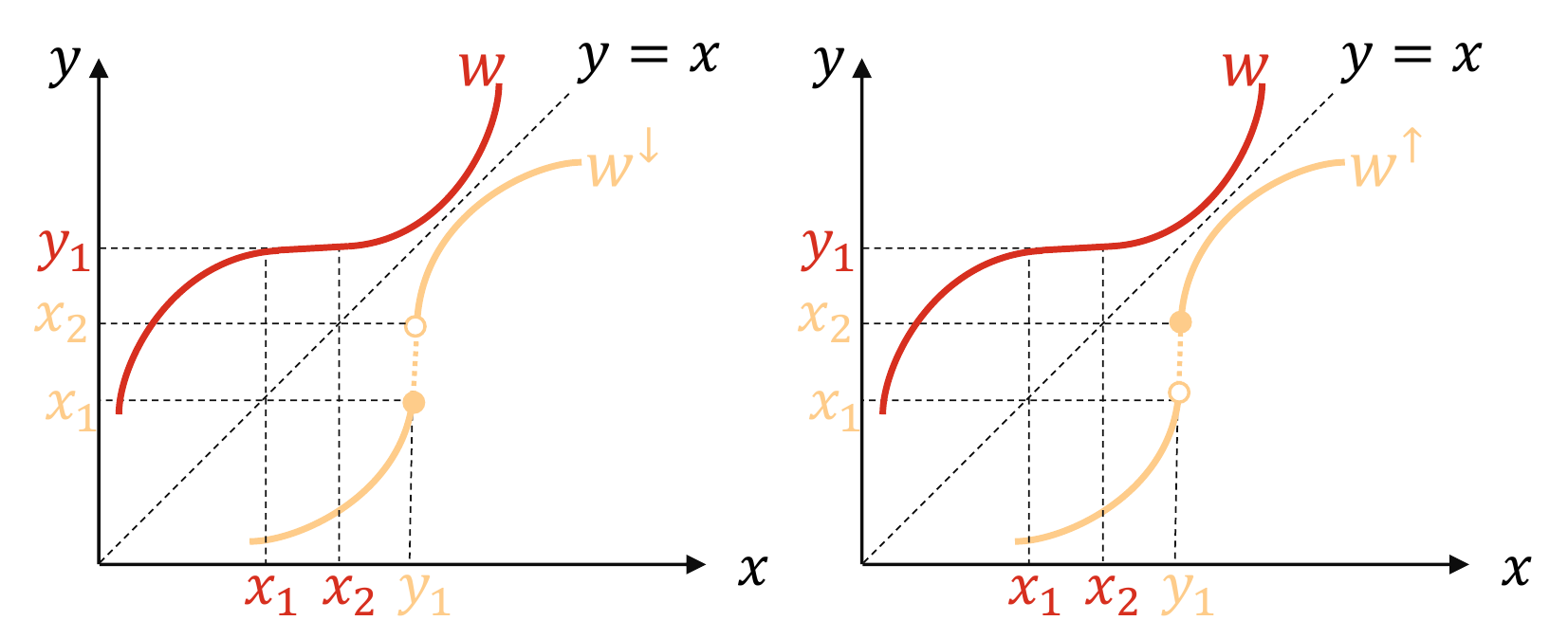}
    \caption{Illustration of pseudo-inverse functions of a monotonically increasing function $w$. The pseudo-inverse functions are obtained by flipping the graph of $w$ around the line $y = x$. The resulting graph does not correspond to a function as the plateau part of $w$ i.e., $x_1$ to $x_2$, causes ambiguity. With the lower pseudo-inverse (left figure), $w^\downarrow$, the ambiguity is resolved by selecting the infimum, i.e., $w^\downarrow(y_1) = x_1$. With the upper pseudo-inverse (right figure), $w^\uparrow$, the ambiguity is resolved by selecting the supremum, i.e., $w^\uparrow(y_1) = x_2$. }
    \label{fig:inverse}
\end{figure} 

%!Tex root = ./ton-main.tex
%\section{Traffic Regulation}\label{sec:trafficReg}

% Hereafter, we list the types of traffic regulation policies considered in this work.
\subsection{Bit-level Arrival-Curve}\label{sec:arr}
The bit-level arrival-curve constraint is the most widespread form of traffic regulation \cite{le_boudec_network_2001}.
Consider a left-continuous function $\calA \in \Fincz$, where $\calA(t)$ denotes the cumulative number of bits observed on the flow of interest until time $t$ (excluded). We say that the flow is constrained by the bit-level arrival-curve $\alpha \in \Fincz$  if and only if   
\begin{equation}\label{eq:arrival_def}
\calA(t)-\calA(s) \leq \alpha(t-s)\; \mfa t\geq 0 \mand s\in[0,t].
\end{equation}
For a packetized flow, it is shown in \cite{le_boudec_theory_2018} that this is equivalent to
%
%Since the input is packetized, for any packet indices $m,n$, where $m \leq n$, the equivalent representations of \eqref{eq:arrival_def} are \cite{le_boudec_theory_2018}:
% In \cite{le_boudec_theory_2018}, an equivalent min-plus representation of the bit-level arrival-curve is presented:\todo{packetized flow} if a flow has a bit-level arrival-curve $\alpha$, then for any packet index $n$ and $m$, where $m \leq n$,
\begin{align}\label{eq:arrival_minplus}
\sum_{k=m}^{n}l_k &\leq \alpha^+(A_n - A_m) \mfa m,n \in \N, m\leq n
\end{align}
where $\alpha^+$ is the right-limit of $\alpha$, and this is also equivalent to
\begin{align}
\label{eq:arrival_maxplus} A_n - A_m &\geq \alpha^{\downarrow}(\sum_{k=m}^{n}l_k)\mfa m,n \in \N, m\leq n.
\end{align}

 By \cite[Lemma~1.2.1]{le_boudec_network_2001}, if $\alpha$ is a bit-level arrival-curve for a flow, then so is its left limit, i.e., $\alpha^-$. Also, since input is packetized, $\alpha^+(0)$ is an upper bound on the size of any packet. Therefore, in the rest of this paper, we assume that $\alpha$ is left-continuous and \mbox{$\alpha^+(0)\geq L^{\max}$}. Frequently used bit-level arrival-curves are \cite[1.2.2]{le_boudec_network_2001}:
 \begin{itemize}
     \item the token-bucket (or leaky-bucket) arrival-curve with rate $r$ and burst $b$, defined by $\alpha(t)=rt+b; t>0$, $\alpha(0)=0$;
     \item the staircase function with burst $b$ and interval $\tau$ defined by $\alpha(t)=b\ceil{\frac{t}{\tau}}$. It expresses the constraint that 
the flow has at most $b$ bits within any interval of duration~$\tau$.
 \end{itemize}    
%A common case of the bit-level arrival-curve is \textit{Leaky Bucket} (also called Token Bucket) with \linebreak$\alpha(t)=rt+b; t>0$, $\alpha(0)=0$, where $r$ is the rate and $b$ is the burst. Leaky-bucket models a fluid bucket with constant size $b$ that leaks at a fluid rate $r$, when it is not empty (Section 1.2.2 \cite{le_boudec_network_2001}). In fact, the input traffic enters the bucket, which is initially empty, and is released with rate $r$. Another example of bit-level arrival-curves is the \textit{staircase} one, $SC(\tau, b)$, with \linebreak$\alpha(t)=b\ceil{\frac{t}{\tau}}; t> 0$, $\alpha(0)=0$, where $b$ is the burst and $\tau$ is the period. This expresses the constraint that 
%the flow has at most $b$ bits within any interval of duration $\tau$.

\subsection{g-regulation}\label{sec:greg}
The g-regulation constraint was introduced in \cite{changbook}; it specifies that the time inter-spacing between packets is lower bounded by a left-continuous function $g \in \Fincz$ \cite{changbook}\footnote{The original g-regularity in \cite{changbook} uses functions defined on $\N\cup\{0\}$ but it is simpler to consider functions defined on $\R^+$.}%is often used wwhereas, for simplicity, we require the function $g$ to be defined on < we use  Here, without loss of generality, we define $g$ in the $\R^+$ domain: 1) $\forall x \in \N$, the value of $g(x)$ is known, 2) otherwise, for any $x\in \R^+-\N$, there exist two consecutive integers $n_i$ and $n_{i+1}$ such that $n_i\leq x\leq n_{i+1}$; we set $g(x)=g(n_{i+1})$. It is straightforward to see that $g$ is left continuous.}. 
A packetized input has g-regulation constraint if and only if for any packet indices $m,n$, where $m \leq n$:
\begin{equation}\label{eq:g_maxplus}
	A_n - A_m \geq g\left(\sum_{k=m}^{n-1} l_k\right).
\end{equation}
% The equivalent min-plus representation is obtained by direct application of  \eqref{eq:liebeherr_P6}:
% \begin{equation}\label{eq:g_minplus}
% 	\sum_{k=m}^{n-1} l_k \leq g^{\uparrow}(A_n - A_m).
% \end{equation}
Note the difference between the bit-level arrival-curve in \eqref{eq:arrival_maxplus} and the g-regulation in \eqref{eq:g_maxplus}: for the g-regulation the length of the last packet, i.e., $l_n$, does not play a role \eqref{eq:g_maxplus} while it is included in \eqref{eq:arrival_minplus}. Since the argument of the function $g$ only takes values in a discrete set containing sums of packet sizes, we assume $g$ is left continuous at each point in this set.
% \todo{since the argument of g takes discrete values, we can always make it left continuous.}

\textit{Shifted-Rate Regulation} \cite[Section 6.2.1]{changbook} is a type of g-regulation with $g(x)~=~\frac{1}{r}[x-d]^+$, where $r$ is the regulation rate and $d$ is the regulation delay, i.e., the constraint is 
% for any packet indices $m,n$, where $m \leq n$:
\begin{align}
	\label{eq:srr_maxplus}A_n - A_m \geq \frac{1}{r}\left[\left(\sum_{k=m}^{n-1} l_k\right)-d\right]^+, \forall m,n \in \N, m\leq n.
% 	\label{eq:srr_minplus} \sum_{k=m}^{n-1} l_k \leq r(A_n - A_m) + d.
\end{align}

\textit{Length-Rate Quotient} (LRQ), introduced by \cite{specht_urgency_based_2016}, is a traffic regulation which specifies the minimum interspacing between two consecutive packets as a function of a regulation rate $r$ and the length of the earlier packet. i.e. the constraint is
% the interspacing between any two consecutive packets $n-1$ and $n$ satisfies
\begin{equation}\label{eq:lrq}
	A_n - A_{n-1}\geq \frac{l_{n-1}}{r},~~\forall n \in \N, n \geq 2.
\end{equation}
By LRQ, the arrival of packet $n$ is constrained by the regulation rate, the arrival time and the length of the previous packet; the fact that there is dependency on only the last packet, renders the implementation of LRQ very simple. It is easy to see that, when $d=0$, \eqref{eq:srr_maxplus} is equivalent to \eqref{eq:lrq}, namely, LRQ is shifted-rate regulation with $d=0$; %by showing the equivalence of \eqref{eq:lrq} and \eqref{eq:srr_maxplus};
therefore, LRQ is a form of g-regulation with $g(x)=\frac{x}{r}$.
% Eq. \eqref{eq:lrq} shows that the arrival of packet $n$ is constrained only by the regulation rate, $r$, and the arrival time and the length of the previous packet only. The fact that there is dependency on the last packet only renders the implementation of LRQ very simple
% Since the arrival of packet $n$ is only dependent on the last packet and a single parameter, i.e. regulation rate, it makes the implementation of LRQ very simple.
% is a common form of g-regulation with $g(x)=\frac{x}{r}$ where $r$ is the rate. According to LRQ, for any two consecutive packets $n-1$ and $n$, the following holds:
% Eq. \eqref{eq:lrq} shows that the arrival of packet $n$ is constrained only by the regulation rate, $r$, and the arrival time and the length of the previous packet only. The fact that there is dependency on the last packet only renders the implementation of LRQ very simple. In \cite{le_boudec_theory_2018}, LRQ is proven to be equivalent to shifted-rate regulation with $d=0$ by showing the equality of \eqref{eq:lrq} and \eqref{eq:srr_maxplus}.
% From Eq. \eqref{eq:lrq}, we observe that the eligibility time of packet $n$ is dependent on the regulation rate, $r$, as well as on the eligibility time and the length of the previous packet only. The fact that there is dependency on the last packet only renders the implementation of LRQ very simple.

\subsection{Packet-level Arrival-Curve}\label{sec:pktarr}
The packet-level arrival-curve is mainly used in the context of IEEE TSN and IETF DetNet and expresses traffic constraints at the packet level. More formally, consider a left-continuous wide-sense increasing function \mbox{$N:\R^+ \rightarrow \N \cup \{0\}$}, where $N(t)$ is the cumulative number of packets observed on a flow of interest until time $t$ (excluded). Then, we say that the flow has a packet-level arrival-curve $\alphapcr{}:\R^+\rightarrow\N$, $\alphapcr{}(0)=0$,  if and only if
% for any $t\geq 0$ and any $s\in[0,t]$,
\begin{equation}\label{eq:pcr_floor}
N(t)-N(s) \leq \alphapcr{}(t-s), \mfa t\geq 0 \mand s\in[0,t].
\end{equation}
% Since the left side of the equation is an integer, we also have:
% \begin{equation}\label{eq:pcr_floor}
% N(t)-N(s) \leq \alphapcr{}(t-s),
% \end{equation}
% which means that $\alphapcr{}$ is also an arrival curve over the number of the packets during a time interval $[s,t)$. %Using the notations mentioned in Section \ref{sec:sys} and [Theorem 1,\cite{le_boudec_theory_2018}], we have:
Similarly to the bit-level arrival-curve constraint, \eqref{eq:pcr_floor} is equivalent to: 
% for any packet indices $m,n$, where $m \leq n$,
% then 
% \eqref{eq:pcr_floor} can be written as
% \begin{align}
% N(E_n)-N(E_m) &\leq \alphapcr{}^+(E_n-E_m).
% \end{align}
% By setting $N(E_n)-N(E_m)=n-m+1$, 
% the min-plus representation of packet-level arrival-curve as
\begin{equation}\label{eq:pcr_minplus}
n-m+1 \leq \alphapcr{}^+(A_n-A_m) \mfa m,n \in \N, m\leq n.
\end{equation}
% We apply \eqref{eq:liebeherr_P8} to the last equality and obtain the max-plus representation:
% \begin{align}\label{eq:pcr_maxplus}
% \nonumber A_n-A_m &\geq \left(\alphapcr{}^+\right)^{\downarrow}(n-m+1)\\
% & =\alphapcr{}^{\downarrow}(n-m+1).
% \end{align}
% The last equality is by \cite[Lemma~2]{le_boudec_theory_2018}. %\eqref{eq:jylb_theory_1}.
It is known that an arrival curve can always be assumed to be sub-additive (i.e. satisfy $\alphapcr{}(s+t)\leq \alphapcr{}(s)+\alphapcr{}(s)$), since otherwise it can be replaced by its sub-additive closure \cite{le_boudec_network_2001}. Also, following the same steps as the proof of \cite[Lemma~1.2.1]{le_boudec_network_2001}, we can prove that if $\alphapcr{}$ is a sub-additive packet-level arrival-curve for a flow, then so is its left limit $\alphapcr{}^-$. Also, since the input is packetized, $\alphapcr{}^+(0)$ gives an upper bound on one packet.  Therefore, in the rest of this paper, we assume that $\alphapcr{}$ is sub-additive, left continuous and  \mbox{$\alphapcr{}^+(0)\geq 1$}. 

\begin{figure}
	\centering
	\includegraphics[width=0.9 \linewidth]{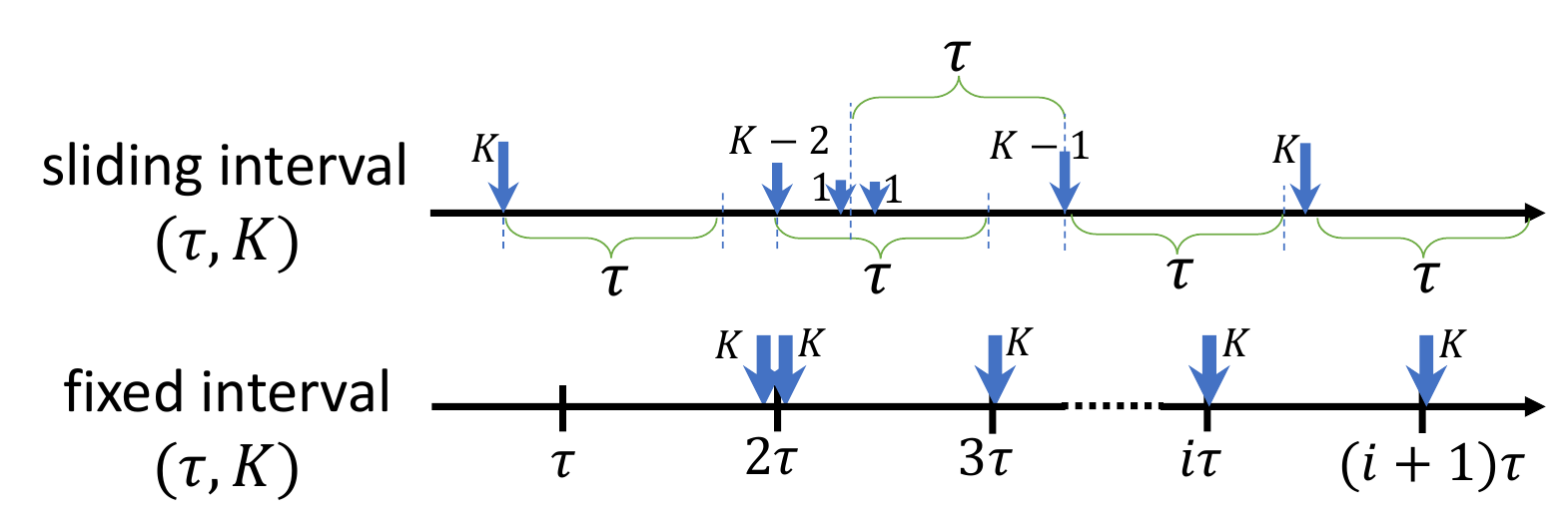}
	\caption{The two interpretations of TSN and DetNet interval. }
	\label{fig:tsn-traffic}
\end{figure} 

In IEEE TSN the traffic specifications for a flow is defined as \cite[Section 34.6.1]{ieee8021q}:
\begin{quote}
    ``... during each class measurement interval, it [a source] can place up to MaxIntervalFrames data frames, each no longer than MaxFrameSize into that stream’s queue.''
\end{quote}
Similarly, the traffic specification in IETF DetNet allows packet level constraint \cite[Section 5.5]{rfc9016} by the attributes ``Interval'' as the period of time in which the traffic specification is specified and ``MaxPacketsPerInterval'' as the maximum number of packets that a source transmits in one Interval.

% \TBD{Wrong quotes}~\DONE{Corrected}

% By the above specifications, at most $K$ packets (MaxIntervalFrames in TSN or MaxPacketsPerInterval in DetNet) are generated at each interval (i.e. class measurement interval in TSN or Interval in IETF DetNet); here, the interval can be interpreted differently:
By the above specifications, the interval (i.e. class measurement interval in TSN or Interval in IETF DetNet) can be interpreted differently as seen in Fig. \ref{fig:tsn-traffic}:
% As discussed in \cite{daigmorte_modelling}, the interval (i.e. class measurement interval in TSN or Interval in IETF DetNet) can be interpreted differently:

% \TBD{incorrect opening quotation signs} \DONE{didn't get the comment!}
\begin{enumerate}
    % \item Periodic: the number of packets is limited at the start of each interval.
    \item Sliding interval $(\tau, K)$: the number of packets is limited by $K=$MaxIntervalFrames (or MaxPacketsPerInterval) at any sliding interval of duration $\tau$.
    \item Fixed interval $(\tau, K)$: the number of packets is limited by $K=$MaxIntervalFrames (or MaxPacketsPerInterval) at any reference interval of duration $\tau$; the reference intervals are consecutive and non-overlapping.
\end{enumerate}
% In TSN and DetNet, the number of packets within each class measurement interval (CMI) $\tau$ is constrained by $K\in\N$ \cite{ieee_srp,rfc9016}. 
% As discussed in \cite{daigmorte_modelling}, the CMI can be periodic or it can be non-periodic with sliding or fix window. With periodic CMI, maximum number of packets at the start of each CMI is restricted by $K$. With non-periodic sliding-CMI, the number of packets is constrained by $K$ at any interval $\tau$. Finally, for non-periodic fixed-CMI, the number of packets is limited to $K$ at any pre-defined interval $\tau$.

% These interpretations can be expressed more formally with packet-level arrival curves. \TBD{Not really for the second.}

\noindent The first interpretation, sliding interval $(\tau, K)$, is equivalent to
% For the non-periodic sliding-CMI (and similarly for periodic CMI), we have for any $t\geq 0$:
\begin{equation}\label{eq:inter1-2}
N(t+\tau)-N(t) \leq K;~~~\forall t\geq 0,
\end{equation}
% where $\tau$ is the interval and $K$ is the packet limit for a flow in each interval, i.e., MaxIntervalFrames in TSN and MaxPacketsPerInterval in DetNet.
which is also equivalent to saying that the flow has a packet-level arrival-curve given by
% \TBD{Weird; why do we introduce subadditive here ?} \DONE{modified!}
\begin{equation}\label{eq:cpcr}
    \alphapcr{}(t)=K\ceil{\frac{t}{\tau}}. 
\end{equation}
% Now, for a given period of time $[s,t]$ with $s\leq t$, we have:
% \begin{align}\label{eq:arrival_in_packets}
% \nonumber N(t) - N(s) &= \left(N(t)-N(t-\tau)\right) + \left(N(t-\tau)-N(t-2\tau)\right)\\
% \nonumber &+\dots+\left(N(t-\ceil{\frac{t-s}{\tau}-1}\tau)-N(s)\right) \\
% & \leq K\ceil{\frac{t-s}{\tau}}.
% \end{align}
% The above regulation is in fact a form of packet-level arrival-curve, with constant number of packets $K$ within each interval $\tau$, where,
% \begin{equation}\label{eq:cpcr}
% \alphapcr{}(t)=K\ceil{\frac{t}{\tau}}. 
% \end{equation}
This is a staircase packet-level arrival-curve with burst equal to $K$ packets and period $\tau$.
It is easy to verify that it is sub-additive and left-continuous. Applying the right limit of \eqref{eq:cpcr} into \eqref{eq:pcr_minplus}, \eqref{eq:inter1-2} is equivalent to the following for any packet indices $m,n$, $m \leq n$:
\begin{align}\label{eq:cpcr_minplus_maxplus}
n-m+1 &\leq K \floor{\frac{A_n-A_m}{\tau}}+K.
% A_n - A_m &\geq \left[\tau \ceil{\frac{n-m+1-K}{K}}\right]^+.
\end{align}
% In the specific case of $K=1$, the regulation is called \textit{Packet Spacing} (PS) \cite{le_boudec_theory_2018}. According to PS, the time spacing between two consecutive packets $n$ and $n-1$ is equal to $\tau$, i.e.,
% \begin{equation}
% A_n \geq A_{n-1} + \tau.
% \end{equation}
The second interpretation, fixed interval $(\tau, K)$, is equivalent to saying that there exists some offset $\theta$ such that:
\begin{align}\label{eq:fixed-interval}
    \nonumber & N(\theta)=N(0)=0,\\
    N(\theta+(i+1)\tau)-&N(\theta + i\tau)\leq K ;~~~\forall i\in \N.
\end{align}
% With the second interpretation, due to the fixed interval, the traffic becomes more bursty: $K$ packets can be generated just before ending of an interval and another $K$ packets just after the start of the next interval that makes $2K$ packets back-to-back; this scenario cannot happen with the first interpretation where the number of packets cannot exceed $K$ within any sliding interval. 
As shown in Lemma~\ref{lem:arr-fixed-interval} in Appendix~\ref{appendix:technical}, the second interpretation, fixed interval $(\tau, K)$, implies a packet-level arrival-curve with,
\begin{equation}\label{eq:fpcr}
\alphapcr{}(0)=0,~~\alphapcr{}(t)=K\ceil{\frac{t}{\tau}}+K : t>0,
\end{equation}
It is easy to verify that this function is sub-additive and left-continuous. However, the converse does not hold, i.e., it is not true that all flows that have the packet-level arrival curve in \eqref{eq:fpcr} satisfy fixed interval $(\tau, K)$ (because such an arrival curve allows $2K$ packets in an interval of duration less than $\tau$). Nonetheless, we show in Section~\ref{sec:delay} that the delay bound obtained using the packet-level arrival-curve in \eqref{eq:fpcr}, is tight for flows with the fixed interval $(\tau, K)$ regulation constraint.

Another form of packet-level regulation found in the literature is the token-bucket packet-level constraint \cite{le_boudec_theory_2018}, which limits the number of packets within any time interval $t$ to $\rho t+B$, where $\rho>0$ and $B\geq 1$ are respectively the packet rate and burst.
This constraint enjoys the superposition property, i.e., for an aggregation of flows each with such constraint, the superposition is constrained by a token-bucket packet-level constraint with $\rho$ and $B$ equal to the sum of the packet rates and bursts of the flows respectively. Such a constraint is equivalent to the following packet-level arrival-curve:
\begin{equation}\label{eq:lb-pkt}
\alphapcr{}(0)=0,~~\alphapcr{}(t)= \ceil{\rho t + B-1}: t>0.
\end{equation}
It can be easily verified that this function is sub-additive and left-continuous. Now, by \eqref{eq:pcr_minplus},  for any packet indices $m,n$, where $m \leq n$, \eqref{eq:lb-pkt} is equivalent to:
\begin{align}\label{eq:lpcr_minplus_maxplus}
n-m+1 &\leq \ceil{\rho(A_n - A_m) + B-1}.
% A_n - A_m &\geq \left\lceil{\frac{[n-m+1-K]^+}{\rho}}\right\rceil.
\end{align}

\noindent The $(\lambda,\nu)$-constraint introduced in \cite{jiang_basic_2018}, is equal to the token-bucket packet-level arrival-curve with \mbox{$\rho=\lambda$} and \mbox{$B=\nu+1$}. The staircase packet-level arrival-curve in \eqref{eq:pcr_minplus} implies a token-bucket packet-level arrival-curve with $\rho=\frac{\tau}{K},~B=K$.
% In \cite{jiang_basic_2018}, the authors introduced the $(\lambda,\nu)$-constraint in order to model TSN/DetNet traffic. A max-plus representation of $(\lambda,\nu)$-constraint is:
% \begin{equation}\label{eq:lambdanu_maxplus}
% A_n - A_m \geq \frac{[n-m-\nu]^+}{\lambda}.
% \end{equation}
% The corresponding min-plus representation of such regulation is:
% \begin{equation}\label{eq:lambdanu_minplus}
% n-m+1 \leq \lambda(A_n - A_m)+(\nu+1).
% \end{equation}
% To model TSN/DetNet traffic, i.e., to set a maximum of $K$ packets within an interval $\tau$, we have to set $\lambda=K/\tau$ and $\nu=K-1$ \cite{jiang_basic_2018}. Comparing the $(\lambda,\nu)$-constraint in \eqref{eq:lambdanu_minplus} with leaky-bucket packet-level arrival-curve ($\rho$,$K$) in \eqref{eq:lpcr_minplus_maxplus}, we notice the duality of the two regulations when $\lambda=K/\tau$ and $\nu=K-1$.

%The goal of $(\lambda,\nu)$-constraint is to model TSN/DetNet traffic., i.e., maximum $K$ packets within an interval $\tau$. For such traffic, the corresponding values for $\lambda$ and $\nu$ are $\lambda=K/\tau$ and $\nu=K~-~1$. The $(\lambda,\nu)$-constraint is equivalent to the packet burstiness regulation \cite{le_boudec_theory_2018} when $\rho~=~\lambda$ and $K=\nu+1$.

\subsection{Additional Definitions}\label{sec:notations}
%\TBD{We always apply $h$ to wide-sense increasing functions. $h$ is not symmetric, it is not between $w$ and $g$ but from $w$ to $g$}\DONE{done?} 
For two functions $w,w'\in\Fincz$, the horizontal deviation from $w'$ to $w$ is defined as  \cite{le_boudec_network_2001} :
\begin{equation}\label{eq:horizontal}
    h(w',w) \eqdef \sup_{t \geq 0} \{w^{\downarrow}\left(w'(t)\right)-t\}.
\end{equation}
See Fig.~\ref{fig:horizontal}.

\begin{figure}[t]
	\centering
	\includegraphics[width=0.5 \linewidth]{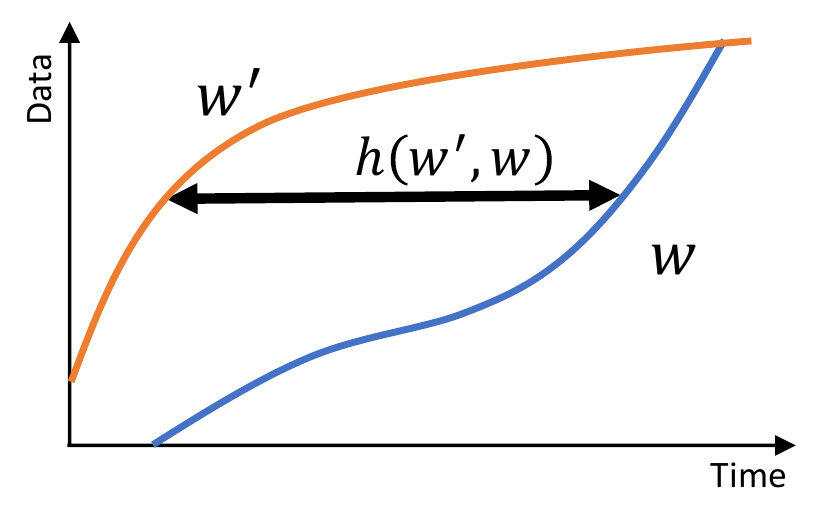}
	\caption{The horizontal deviation from $w'$ to $w$.}
	\label{fig:horizontal}
\end{figure}

A function $w\in\Fincz$ is called $c$-Lipschitz (for $c\geq 0$) if $\forall t_1,t_2\in\R^+$ \cite[Section 41.5]{eriksson2004applied}:
\begin{align}\label{eq:lips}
    |w(t_2)-w(t_1)| \leq c|t_2-t_1|.
\end{align}
%\TBD{The function is called $w$ in the definition, $I$ in the example, and this is applied to functions that are all called $\beta$!}
A $c$-Lipschitz function is necessarily continuous and the slope of the function is within $[-c,c]$ at any point in time.
Fig.~\ref{fig:lips} shows examples of a $c$-lipschitz and non $c$-lipschitz functions.

\begin{figure}[t]
	\centering
	\includegraphics[width= \linewidth]{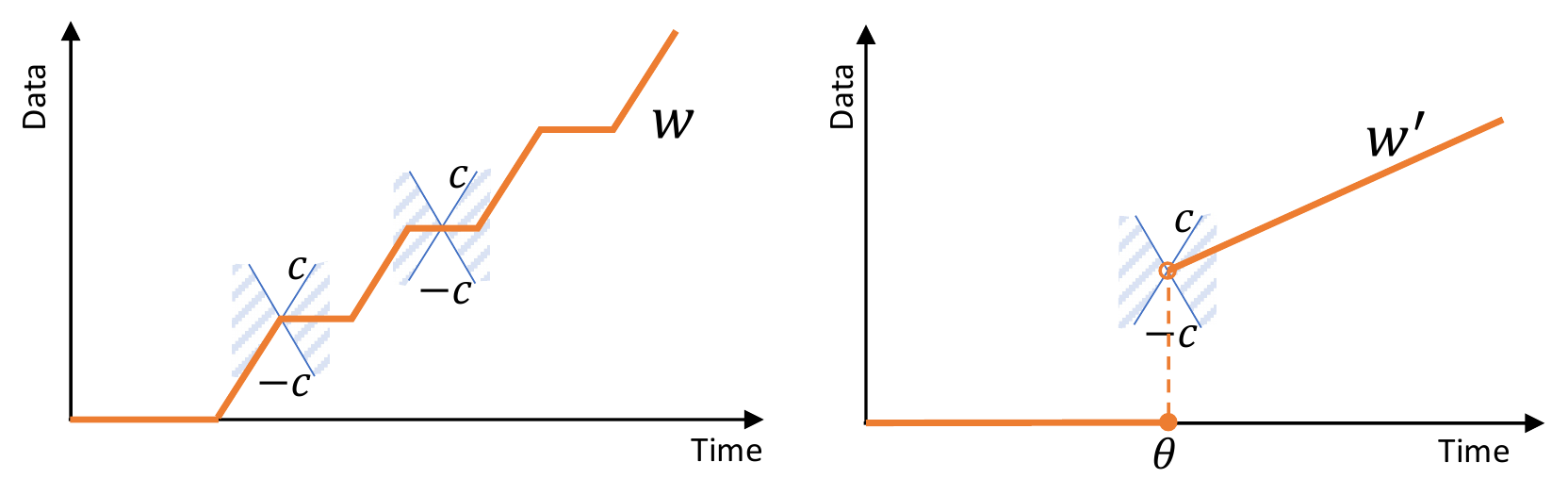}
	\caption{A $c$-lipschitz function $w$ (left, DRR service curve in \cite{tabatabaee2021deficit}) and a non $c$-lipschitz function $w'$ (FIFO residual service curve in \cite[Section 7.3.1]{bouillard_deterministic_2018}). The slope of the function is within $[-c,c]$ at any point in time for $w$. The function $w'$ is not continuous at time $\theta$.%\DONE{Replace second one by FIFO residual service curve}
	}
	\label{fig:lips}
\end{figure}
% \TBD{Comment: an arbitrary function does not necessarily have right and left limits. But functions in $\Fincz$ do.}

For $w\in\Fincz$, $w^+$ and $w^-$ are the right and left limits, defined as:
\begin{align}
    \forall x \in \R^+ ~:~ &w^+(x) = \lim_{\substack{\varepsilon\rightarrow 0\\\varepsilon>0}}w(x+\varepsilon),\\
    \forall x \in \R^+ ~:~ &w^-(x) = \lim_{\substack{\varepsilon\rightarrow 0\\\varepsilon>0}}w(x-\varepsilon).
\end{align}

\section{Relations among Flow Regulations}\label{sec:regModel}

%\TBD{This section text is poor. The only information you give in the glue text is that it is used later (where ? ). In the rest you are essentially restating the proposition. Then the rest of the section is a dump of two propositions. No positioning with respect to the state of the art, no interesting question asked nor any interesting comment made before or after.

% \TBD{1. Give an interesting question. Why do you do this ? Who cares '
% 2. Summarize the main result
% 3.  Comment on surprises or links with other statements. What does this teach us ?}

% In this Section, we obtain new derivations \DONE{No, the proofs are in appendix} between pairs of regulation types. These derivations are used in the delay bound results and the corresponding discussions in the next sections. Specifically, Proposition~\ref{pro:arr-g} presents $g$-regularity derivation of  bit-level arrival-curve and vice versa and in Proposition~\ref{pro:rate2reg}, we derive a bit-level arrival-curve and $g$-regularity constraint from packet-level arrival-curve.

We have seen various families of traffic regulations in Section~\ref{sec:sys}. We can find regulation-specific toolbox for delay analysis, e.g., \cite{le_boudec_network_2001,bouillard_deterministic_2018} for bit-level arrival curves. The immediate question is whether there is a relation among these regulation types. Such relations can open up new opportunities to apply the existing toolbox of one regulation type to another, that may lead to delay improvements, as we will see in Section~\ref{sec:delay}.
The goal of this section is to present such derivations between the regulation types. 
% The results are used in delay bound computations in section~\ref{sec:delay} and \ref{sec:ncimprov} and the discussions in Section~\ref{sec:comparison}.

% \revisit{ Specifically, we present relation between g-regularity and bit-level arrival-curve in Proportion~\ref{pro:arr-g}; the derivation of bit-level arrival-curve from g-regularity already exists in literature \cite{changbook} and we put it here for completeness. We show that the g-regulation and bit-level arrival-curve are equal only when all packets are of the same size. We further obtain bit-level arrival-curve and g-regularity derivations of packet-level arrival-curve in Proposition~\ref{pro:rate2reg}. While in literature only the bit-level arrival-curve derivation is used to obtain delay bounds, we show that the g-regularity is the stronger derivation. This leads to delay improvement as we will see in Theorem~\ref{thm:delay_pcr}.}
In the next proposition, we present a relation between g-regularity and bit-level arrival-curve. The derivation of bit-level arrival-curve from g-regularity already exists in literature \cite{changbook} and we put it here for completeness.
\begin{proposition}\label{pro:arr-g}
Consider a flow $f$. 
\begin{enumerate}
    \item If $f$ has a bit-level arrival-curve $\alpha(t)$, it also conforms to a $g$-regularity constraint with \mbox{$g(x)=\alpha^{\downarrow}(x+L^{\min}),~x>0;g(0)=0$}.
    \item If $f$ conforms to a $g$-regularity constraint, it also has a bit-level arrival-curve $\alpha(t) = g^{\downarrow}(t) + L^{\max}$.
    \item The sequential application of items 1 and 2 or items 2 and 1 gives a regulation constraint that is weaker than the initial one, except if all packets have the same size.
    %\TBD{And else ? You removed my remark without addressing it.}
\end{enumerate}
\end{proposition}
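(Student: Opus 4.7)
The plan is to prove the three items in turn, exploiting the max-plus formulations of both constraint types together with elementary pseudo-inverse identities. For item 1, I would start from the max-plus form~\eqref{eq:arrival_maxplus} of the bit-level arrival-curve. Since every packet of $f$ has $l_n \ge L^{\min}$ and $\alpha^{\downarrow}$ is non-decreasing,
\begin{equation*}
A_n - A_m \ge \alpha^{\downarrow}\!\left(\sum_{k=m}^{n} l_k\right) \ge \alpha^{\downarrow}\!\left(\sum_{k=m}^{n-1} l_k + L^{\min}\right) = g\!\left(\sum_{k=m}^{n-1} l_k\right),
\end{equation*}
which is the g-regulation~\eqref{eq:g_maxplus} for the stated $g$; the degenerate case $m=n$ is covered by the convention $g(0)=0$.

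For item 2, starting from~\eqref{eq:g_maxplus}, I would apply the non-decreasing map $(g^{\downarrow})^+$ to both sides. The key ingredient is the pseudo-inverse identity $(g^{\downarrow})^+(g(x)) \ge x$ for left-continuous non-decreasing $g$: for any $\varepsilon>0$, any $u$ with $g(u) \ge g(x) + \varepsilon$ must satisfy $u > x$, hence $g^{\downarrow}(g(x) + \varepsilon) \ge x$, and letting $\varepsilon \downarrow 0$ yields the claim. Combining with $l_n \le L^{\max}$,
\begin{equation*}
\sum_{k=m}^{n} l_k \le \sum_{k=m}^{n-1} l_k + L^{\max} \le (g^{\downarrow})^+(A_n - A_m) + L^{\max} = \alpha^+(A_n - A_m),
\end{equation*}
which is the min-plus bit-level arrival-curve~\eqref{eq:arrival_minplus} with $\alpha = g^{\downarrow} + L^{\max}$; the $m=n$ case is handled via $\alpha^+(0) \ge L^{\max} \ge l_n$.

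For item 3, I would compose the two derivations and quantify the slack. Along the $(1 \to 2)$ path, one obtains $\alpha''(t) = g^{\downarrow}(t) + L^{\max}$ with $g(x) = \alpha^{\downarrow}(x + L^{\min})$; a change of variables $u = s + L^{\min}$ in the infimum defining $g^{\downarrow}$, combined with the pseudo-inverse duality identities of~\cite{liebeherr_duality_2017}, yields $\alpha''(t) \ge \alpha(t) + (L^{\max} - L^{\min})$ on $(0, \infty)$. The $(2 \to 1)$ path is analogous and produces a $g'$ satisfying $g'(x) \le g\bigl([x - (L^{\max} - L^{\min})]^+\bigr)$. In both directions the slack is $L^{\max} - L^{\min}$, which vanishes precisely when all packets of $f$ share the same length. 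The main obstacle is that the double pseudo-inverse $\alpha^{\downarrow\downarrow}$ is not pointwise equal to $\alpha$ in general---it only dominates it---so the argument must work with this domination together with a careful treatment of left-continuity and of the boundary behaviour at $t=0$ (where $\alpha(0)=0$ but $\alpha^+(0) \ge L^{\max}$).
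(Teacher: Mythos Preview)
Your arguments for items 1 and 2 are correct and in fact more direct than the paper's. For item~1 the paper starts from the min-plus form~\eqref{eq:arrival_minplus}, subtracts $l_n\ge L^{\min}$, and then applies~\eqref{eq:liebeherr_P8} together with the composition rule~\eqref{eq:comp-down}; your one-line use of the max-plus form~\eqref{eq:arrival_maxplus} and monotonicity of $\alpha^{\downarrow}$ reaches the same conclusion with less machinery. For item~2 the paper simply invokes \cite[Lemma~6.2.8]{changbook} and then takes a left limit via~\eqref{eq:jylb_theory_3}; your self-contained proof of $(g^{\downarrow})^+(g(x))\ge x$ (which is just~\eqref{eq:jylb_theory_2} combined with the standard $g^{\uparrow}(g(x))\ge x$) makes the argument independent of that reference.

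For item~3 your change-of-variables plan is the right one and matches what the paper does via~\eqref{eq:comp-up},~\eqref{eq:comp-down} and Lemma~\ref{lem:upper-lower-conseq}, but your stated ``main obstacle'' has the inequality backwards: in general $(\alpha^{\downarrow})^{\downarrow}=\alpha^{-}\le\alpha$, not $\ge\alpha$ (combine~\eqref{eq:jylb_theory_3} with $f=\alpha^{\downarrow}$ and Lemma~\ref{lem:upper-lower-conseq}). This matters because with only an inequality $\alpha''\ge\alpha+(L^{\max}-L^{\min})$ you cannot recover the ``except'' clause: when $L^{\max}=L^{\min}$ you would get $\alpha''\ge\alpha$, not $\alpha''=\alpha$. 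The fix is to use the paper's standing assumption that $\alpha$ is left-continuous (Section~\ref{sec:arr}); then $(\alpha^{\downarrow})^{\downarrow}=\alpha$ exactly, your computation yields the \emph{equality} $\alpha''(t)=\alpha(t)+(L^{\max}-L^{\min})$ for $t>0$, and both the ``weaker'' and the ``except'' parts follow immediately. The $(2\to1)$ direction is handled the same way, invoking left-continuity of $g$ to obtain the exact formula $g'(x)=g\bigl([x-L^{\max}]^{+}+L^{\min}\bigr)$.
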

The proof is in Appendix \ref{proof:arr-g}. Proposition~\ref{pro:arr-g} shows that even though bit-level arrival-curve and g-regularity are two different families of traffic constraints, they can be derived from each other in items (1) and (2). However, item (3) shows that when packets are of different sizes, such derivations lead to weaker constraints than the initial traffic constraints.

\begin{proposition}\label{pro:rate2reg}
	Consider a flow with packet-level arrival-curve $\alphapcr{}$, then:
	\begin{enumerate}
	\item The flow conforms to a $g$-regularity constraint with:
	\begin{equation}\label{eq:gregapp}
	g(x)= \alphapcr{}^{\downarrow}(\frac{x}{L^{\max}}+1),
	\end{equation}
	and to a bit-level arrival-curve constraint, $\alpha$, with:
	\begin{equation}\label{eq:acapp}
	\alpha (t) = L^{\max} \alphapcr{}(t).
	\end{equation}
	%where $L^{\max}$ is the maximum packet length of the flow. 
	\item The $g$-regularity constraint in \eqref{eq:gregapp} is stronger than the bit-level arrival-curve constraint in \eqref{eq:acapp}; i.e., the sequential application of \eqref{eq:gregapp} and then item 1 of Proposition~\ref{pro:arr-g} gives \eqref{eq:acapp} while the sequential application of \eqref{eq:acapp} and then item 2 of Proposition~\ref{pro:arr-g} gives a worse constraint than \eqref{eq:gregapp}, except if all packets have the same size.
% 	the latter can be derived by the former and the opposite is not true.
\end{enumerate}
\end{proposition}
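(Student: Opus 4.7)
Plan. For item~1, I would prove the two derivations separately. The bit-level arrival curve \eqref{eq:acapp} follows immediately: over any interval $[s,t)$ the flow carries at most $N(t)-N(s) \leq \alphapcr{}(t-s)$ packets, each of length at most $L^{\max}$, yielding $\calA(t)-\calA(s) \leq L^{\max}\alphapcr{}(t-s)$. For the g-regulation \eqref{eq:gregapp}, I would start from the packet-count form \eqref{eq:pcr_minplus}, namely $n-m+1 \leq \alphapcr{}^+(A_n-A_m)$, apply the lower pseudo-inverse together with the standard identity $\alphapcr{}^{\downarrow}(\alphapcr{}^+(t)) \leq t$ to get $\alphapcr{}^{\downarrow}(n-m+1) \leq A_n-A_m$, and then use $\sum_{k=m}^{n-1} l_k \leq (n-m)L^{\max}$ with monotonicity of $\alphapcr{}^{\downarrow}$ to conclude $g\!\left(\sum_{k=m}^{n-1} l_k\right) = \alphapcr{}^{\downarrow}\!\left(\frac{\sum_{k=m}^{n-1} l_k}{L^{\max}}+1\right) \leq A_n - A_m$, which is exactly \eqref{eq:g_maxplus}.

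For item~2, I plan to handle the two chains explicitly using two tools from the pseudo-inverse calculus: the duality identity $(w^{\downarrow})^{\downarrow}=w$ valid for left-continuous $w \in \Fincz$, and the scaling rule $(cw)^{\downarrow}(y) = w^{\downarrow}(y/c)$ for $c>0$. In the first chain, going from \eqref{eq:gregapp} to a bit-level arrival curve via the g-regulation-to-bit-level formula in Proposition~\ref{pro:arr-g}, I would compute $g^{\downarrow}(t)$ by the substitution $y = x/L^{\max}+1$; the infimum reduces to $(\alphapcr{}^{\downarrow})^{\downarrow}(t)$, and duality gives $g^{\downarrow}(t) = L^{\max}(\alphapcr{}(t)-1)$ for $t>0$. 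The reconstructed curve is therefore $g^{\downarrow}(t) + L^{\max} = L^{\max}\alphapcr{}(t)$, which coincides with \eqref{eq:acapp}.

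In the reverse chain, going from \eqref{eq:acapp} to a g-regulation function via the bit-level-to-g-regulation formula in Proposition~\ref{pro:arr-g}, the scaling rule gives $g'(x) = (L^{\max}\alphapcr{})^{\downarrow}(x+L^{\min}) = \alphapcr{}^{\downarrow}\!\left(\frac{x+L^{\min}}{L^{\max}}\right)$, which I would compare with $g(x) = \alphapcr{}^{\downarrow}\!\left(\frac{x+L^{\max}}{L^{\max}}\right)$; the inequality $L^{\min} \leq L^{\max}$ and monotonicity of $\alphapcr{}^{\downarrow}$ immediately yield $g' \leq g$, so the round-trip produces a (weakly) weaker constraint. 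When $L^{\min} = L^{\max}$ the two arguments coincide and $g'=g$; when $L^{\min} < L^{\max}$, I would exhibit an $x$ for which the two arguments straddle a jump of $\alphapcr{}^{\downarrow}$, giving strict inequality somewhere. The hardest step will be the duality step in the first chain: it demands careful handling of the boundary $t=0$ and of the integer-valued, staircase nature of $\alphapcr{}$ when invoking $(\alphapcr{}^{\downarrow})^{\downarrow}=\alphapcr{}$, since the duality identity relies crucially on the left-continuity assumption made on $\alphapcr{}$.
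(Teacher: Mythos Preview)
Your proposal is correct and follows essentially the same route as the paper. The only cosmetic difference is that where you compute $g^{\downarrow}$ and $(L^{\max}\alphapcr{})^{\downarrow}$ by a direct substitution together with the duality $(w^{\downarrow})^{\downarrow}=w$ and the scaling rule, the paper packages these same computations via the composition identities $(f\circ w)^{\downarrow}=w^{\downarrow}\circ f^{\downarrow}$, $(f\circ w)^{\uparrow}=w^{\uparrow}\circ f^{\uparrow}$ and a dedicated lemma establishing $((\alphapcr{}^{\downarrow})^{\uparrow})^{-}=\alphapcr{}$ for left-continuous $\alphapcr{}$; your duality identity is exactly this lemma combined with $(f^{\uparrow})^{-}=f^{\downarrow}$, so the two arguments coincide step for step. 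Your flagged concern about the boundary $t=0$ and the need for $\alphapcr{}^{+}(0)\geq 1$ to make the substitution $y=x/L^{\max}+1\geq 1$ innocuous is precisely the point the paper handles with the $[\cdot]^{+}$ in its intermediate expression.
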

The proof is in Appendix \ref{proof:rate2reg}. By item~(1), we can use the bit-level arrival-curve and g-regularity to compute delay bounds for flows with packet-level arrival-curves. In fact, for TSN/DetNet traffic regulation, such bit-level arrival-curve derivation already exists \cite{daigmorte_modelling} and is used for delay-bound computation. 
While in literature only the bit-level arrival-curve derivation is used to obtain delay bounds, item~(2) shows that the g-regularity derivation is a stronger constraint, and it leads to delay improvements as we will see in Theorem~\ref{thm:delay_pcr}. 

%\TBD{The glue text is now much better and the structure is good. However, there is an inaccuracy: there is no toolbox for g-reg + service curve. We are the first ones to give such a result.}

%!Tex root = ./ton-main.tex
\section{Delay Bounds for Packet-Level Arrival-curves}\label{sec:delay}

The focus of this section is to find delay bounds for flows with packet-level arrival-curve that share a FIFO system described in Section~\ref{sec:sys}. To do so, we use a novel combination of $g$-regularity and bit-level arrival-curves derived from packet-level arrival-curves. 
We start with a technical lemma about queuing delay.

%To this end, our method of proof is as follows. We first obtain a bound on the queuing delay of the FIFO system in Lemma~\ref{lem:queuing-delay}. Then, using this result, we obtain delay bounds for a flow with g-regularity constraint sharing the FIFO system with an aggregate of flows with a bit-level arrival-curve in Lemma~\ref{lem:delay-alpha-g}. Finally, we use Proposition~\ref{pro:rate2reg} to derive g-regularity constraint for the flow of interest and bit-level arrival-curve for the competing flows and apply Lemma~\ref{lem:delay-alpha-g} to obtain a delay bound for flows with packet-level arrival-curve.

% The bound is obtained using a novel modelling of packet-level arrival-curve with g-regularity constraint. To this end, we first obtain a  g-regularity derivation of packet-level arrival-curve in Proposition~\ref{pro:rate2reg}. Then, we present a delay bound for flows with g-regularity constraint in Theorem~\ref{thm:greg-delay}.
% Finally, combining the two above results, we give a delay bound for flows with packet-level arrival-curves in Theorem~\ref{thm:delay_pcr}.
\begin{lemma}\label{lem:queuing-delay}
    Consider a FIFO system with service curve $\beta\in\Fincz$. Assume that we know some $w\in \Fincz$ such that for any two packet indices $m,n$, $m\leq n$, we have
    \begin{align}\label{eq:queuing-input}
        \sum_{k=m}^{n-1} l_k \leq w(A_n-A_m),
    \end{align}
    where $A_m,A_n$ are packet arrival times and $l_k$ is the length of packet $k$. Then, the queuing delay of the FIFO system is bounded by $h(w,\beta)$, i.e. for any packet index $n$:
    $$
    Q_n-A_n\leq h(w,\beta)
    $$
    where $h$ is the horizontal deviation.
\end{lemma}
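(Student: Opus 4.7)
The plan is to apply the service-curve property at the start-of-transmission time $Q_n$, identify an ``anchor'' packet index through the chosen $s$, use the hypothesis~\eqref{eq:queuing-input} to bound $\beta(Q_n-s)$ in terms of packet arrival times, and finally recognise the resulting expression as the horizontal deviation $h(w,\beta)$. The main technical point will be converting a bit-level inequality $\beta(x)\leq y$ into a time-level inequality $x\leq \beta^{\downarrow}(y)$; this is subtle whenever $\beta$ is flat at the value $y$, because the lower pseudo-inverse then picks the leftmost pre-image.

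First, since packet $n$ begins transmission at $Q_n$ and the system is FIFO, all preceding packets have fully departed before $Q_n$; under the left-continuous convention, $\calD(Q_n)=\sum_{k=1}^{n-1} l_k$. By the service-curve property there exists $s\in[0,Q_n]$ with $\calD(Q_n)\geq \beta(Q_n-s)+\calA(s)$. Let $m$ be the smallest packet index such that $A_m\geq s$; as packets are numbered in arrival order, $\calA(s)=\sum_{k=1}^{m-1} l_k$. I would rule out $m>n$ by contradiction: if $m>n$ then $A_n<s$, so $\calA(s)\geq\sum_{k=1}^{n} l_k$, and the service-curve inequality forces $\beta(Q_n-s)\leq -l_n<0$, contradicting $\beta\in\Fincz$. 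Hence $m\leq n$, and subtracting $\calA(s)$ from both sides of the service-curve inequality and applying~\eqref{eq:queuing-input} gives
\begin{equation*}
\beta(Q_n-s) \leq \sum_{k=m}^{n-1} l_k \leq w(A_n-A_m).
\end{equation*}

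Next, since $\beta$ is wide-sense increasing, the inequality above yields $Q_n-s\leq \beta^{\downarrow}\bigl(w(A_n-A_m)\bigr)$; combined with $s\leq A_m\leq A_n$ (so that $A_n-s\geq A_n-A_m$),
\begin{equation*}
Q_n-A_n = (Q_n-s)-(A_n-s) \leq \beta^{\downarrow}\bigl(w(A_n-A_m)\bigr)-(A_n-A_m) \leq h(w,\beta),
\end{equation*}
where the final step specialises definition~\eqref{eq:horizontal} at $t=A_n-A_m$. The only delicate point is the implication $\beta(Q_n-s)\leq w(A_n-A_m)\Rightarrow Q_n-s\leq \beta^{\downarrow}(w(A_n-A_m))$; I would close this gap by an $\varepsilon$-perturbation (use $w(A_n-A_m)+\varepsilon$ instead of $w(A_n-A_m)$, invoke the strict-inequality case, and pass to the limit via the left-continuity of $\beta^{\downarrow}$), or equivalently by noting that the supremum in~\eqref{eq:horizontal} can be approached as $t\to (A_n-A_m)^+$, which absorbs any plateau width.
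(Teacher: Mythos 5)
Your construction of the anchor index $m$ and the derivation of $\beta(Q_n-s)\leq\sum_{k=m}^{n-1}l_k\leq w(A_n-A_m)$ is sound and is essentially the content of Lemma~1 of \cite{mohammadpour_improved_2019}, which is also the paper's starting point. The genuine gap is exactly at the step you flag as ``delicate'': from $\beta(x)\leq y$ and monotonicity you may only conclude $x\leq\beta^{\uparrow}(y)$ (this is \eqref{eq:liebeherr_P6}); the lower pseudo-inverse obeys the \emph{opposite} implication \eqref{eq:liebeherr_P8}, so $x\leq\beta^{\downarrow}(y)$ simply does not follow. Your $\varepsilon$-perturbation does not repair this. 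From $\beta(Q_n-s)<w(A_n-A_m)+\varepsilon$ you do get $Q_n-s\leq\beta^{\downarrow}\bigl(w(A_n-A_m)+\varepsilon\bigr)$, but letting $\varepsilon\to 0^+$ produces the \emph{right} limit $(\beta^{\downarrow})^+\bigl(w(A_n-A_m)\bigr)=\beta^{\uparrow}\bigl(w(A_n-A_m)\bigr)$ by \eqref{eq:jylb_theory_2}; left-continuity of $\beta^{\downarrow}$ controls limits from below in its argument and is irrelevant here. Concretely, for $\beta(t)=R[t-T]^+$ one has $\beta^{\downarrow}(0)=0$ while $\beta^{\downarrow}(\varepsilon)\geq T$ for every $\varepsilon>0$, so the plateau width $T$ is not absorbed by the limit. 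Your fallback, approaching the supremum in \eqref{eq:horizontal} as $t\to(A_n-A_m)^+$, also fails in general: it needs $w$ to take values strictly above $w(A_n-A_m)$ arbitrarily close to the right of $A_n-A_m$, which is false when $w$ is locally constant there.

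What your argument actually proves is $Q_n-A_n\leq\sup_{t\geq 0}\{\beta^{\uparrow}(w(t))-t\}$, which is precisely the intermediate bound the paper reaches (its inequality with $(\beta^{\downarrow})^+$). The remaining, nontrivial part of the proof is to pass from this quantity to $\sup_{t\geq 0}\{\beta^{\downarrow}(w(t))-t\}=h(w,\beta)$; the paper does this by rewriting $\beta^{\uparrow}=(\beta^{\downarrow})^+$ and invoking Lemma~2 of \cite{mohammadpour_improved_2019} to remove the right limit inside the supremum. Your proposal is missing an argument of this kind, and without it you have only established the weaker bound with $\beta^{\uparrow}$ in place of $\beta^{\downarrow}$, which can be strictly larger than $h(w,\beta)$ whenever $\beta$ has a plateau at a value taken by $w$.
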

The proof is Appendix~\ref{proof:queuing-delay}; it combines min-plus representation of the service curve and the max-plus representation of the input traffic in \eqref{eq:queuing-input}.

Using Lemma~\ref{lem:queuing-delay}, we obtain our first result on delay bounds, on which all delay bounds found in this paper are based. 
% Using the derived g-regularity constraint, we give a delay bound for packet-level arrival-curve and show its tightness respectively in theorems \ref{thm:delay_pcr} and \ref{thm:delay_pcr_tight}. Moreover, we show that our bound is smaller that the one obtained by the state-of-the-art approach. We give an example to illustrate the difference.

% The state-of-the-art approach is to obtain a bit-level arrival-curve and then use classical network-calculus bound. We show that this bound is not tight. Our bound, in contrary, is tight and derived by a novel modelling of packet-level arrival-curve with g-regularity constraint. 

\begin{lemma}\label{lem:delay-alpha-g}
	Consider a FIFO system offering a service curve $\beta$ to the aggregate of the flows sharing it. Assume that (i) flow $1$ is the flow of interest and conforms to a $g$-regularity constraint, and (ii) flow $2$ represents the aggregate of the remaining flows sharing the FIFO system and has bit-level arrival-curve $\alpha$. In addition, assume that as soon as a packet starts to be transmitted, it is transmitted with rate $c$.
	Then, 
	
	(i) An upper bound on the response time of a packet with length $l$ of flow $1$ at this FIFO system is:
	\begin{equation}
	\Delta^{\A\G}(l)= h(g^{\uparrow} + \alpha^+,\beta) + \frac{l}{c},
	\end{equation}
	(ii) An upper bound on the response time of any packet of flow $1$ at this FIFO system is:
	\begin{equation}
	\Delta^{\A\G}= h(g^{\uparrow} + \alpha^+,\beta) + \frac{L_1^{\max}}{c}.
	\end{equation}
\end{lemma}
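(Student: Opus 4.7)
The approach is to invoke Lemma~\ref{lem:queuing-delay} with the test function $w = g^{\uparrow} + \alpha^+$: once we verify the max-plus bound $\sum_{k=m}^{n-1} l_k \leq w(A_n - A_m)$ for every $m \leq n$, where $n$ is the aggregate index of the flow-1 packet of interest, the lemma yields $Q_n - A_n \leq h(g^{\uparrow} + \alpha^+, \beta)$. Since a packet is transmitted at the constant rate $c$ once selected, $D_n - A_n = (Q_n - A_n) + l_n/c$, which gives part~(i) by setting $l_n = l$ and part~(ii) by bounding $l_n \leq L_1^{\max}$.

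To establish the hypothesis of Lemma~\ref{lem:queuing-delay}, I would split the aggregate sum into its flow-1 and flow-2 contributions. For flow~2, let $S_2$ denote the total length of its packets whose aggregate index lies in $\{m, \dots, n-1\}$; applying the min-plus form \eqref{eq:arrival_minplus} of its bit-level arrival curve $\alpha$ to the subsequence of flow-2 arrivals in this range and using monotonicity of $\alpha^+$ gives $S_2 \leq \alpha^+(A_n - A_m)$ (trivially if no such flow-2 packet exists). For flow~1, let $S_1$ be the analogous sum and let $P$ denote the arrival time of the earliest such flow-1 packet. Because $n$ itself is a flow-1 packet, it is precisely the next flow-1 arrival after these $S_1$ bits, so the g-regularity constraint \eqref{eq:g_maxplus} applied across these flow-1 packets yields $A_n - P \geq g(S_1)$. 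Using $P \geq A_m$, we obtain $g(S_1) \leq A_n - A_m$, and the definition \eqref{eq:usi-def} of the upper pseudo-inverse then gives $S_1 \leq g^{\uparrow}(A_n - A_m)$. Summing the two bounds verifies the hypothesis.

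I expect the principal subtlety to be the use of $g^{\uparrow}$ in place of the more obvious $g^{\downarrow}$. This is precisely what enables the bound to improve on the naive approach of first converting the g-regularity into a bit-level arrival curve via Proposition~\ref{pro:arr-g}(2) and then invoking classical network calculus: carrying the max-plus characterization of flow~1 directly through to Lemma~\ref{lem:queuing-delay} avoids a lossy min-plus intermediate step and saves the additive $L_1^{\max}$ term that Proposition~\ref{pro:arr-g}(2) would introduce. A few edge cases---empty flow-1 or flow-2 subranges, and $m = n$ (the empty sum)---also require a brief dispatch, but each is immediate from nonnegativity of $g^{\uparrow}$ and $\alpha^+$.
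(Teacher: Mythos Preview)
Your proposal is correct and follows essentially the same approach as the paper's own proof: split $\sum_{k=m}^{n-1} l_k$ into the flow-1 and flow-2 contributions, bound the flow-1 part by $g^{\uparrow}(A_n-A_m)$ via \eqref{eq:g_maxplus} and property~\eqref{eq:liebeherr_P6}, bound the flow-2 part by $\alpha^+(A_n-A_m)$ via \eqref{eq:arrival_minplus} (using that packet~$n$ belongs to flow~1), and then invoke Lemma~\ref{lem:queuing-delay} and add the transmission time. Your write-up is in fact more explicit than the paper's about why the flow-1 step works (packet~$n$ is the next flow-1 arrival after the $S_1$ bits, and $P\geq A_m$), and your remarks on the $g^{\uparrow}$ versus $g^{\downarrow}$ subtlety and the edge cases are accurate.
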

The proof is in Appendix~\ref{proof:delay-alpha-g}. 
We can now apply Lemma~\ref{lem:delay-alpha-g} to packet-level arrival curves and obtain the following theorem:

%  The state-of-the-art analysis typically use bit-level arrival-curve derivation of for the packet-level one. As shown in Item 2 of Proposition~\ref{pro:rate2reg}, the g-regularity derivation is a stronger constraint for packet-level arrival-curves. Next, we obtain delay bounds for flows with g-regularity constraints.

\begin{theorem}\label{thm:delay_pcr}
	Consider a FIFO system offering a service curve $\beta$ to the aggregate of the flows sharing it. Assume any flow $f\in\{1,\dots,M\}$ has $\alphapcr{f}$ as packet-level arrival-curve at the entrance of the FIFO system and has maximum packet length $L^{\max}_f$. In addition, assume that as soon as a packet starts to be transmitted, it is transmitted with rate $c$. Then, 
% 	Consider a FIFO system offering a service curve $\beta$ to the aggregate of the flows sharing it.
% 	Assume that flow $1$ is the flow of interest and flow $2$ is representing the aggregate of the remaining flows sharing the FIFO system. In addition, assume that as soon as a packet starts to be transmitted, it is transmitted with rate $c$ and that flows $1$ and $2$ conform to packet-level arrival-curves with $\alphapcr{}$ and $\alphapcr{}'$, respectively.

	(i) An upper bound on the response time of a packet with length $l$ of a flow $f$ at this FIFO system is:
	
	\begin{equation} \label{eq:pcr_bound1}
		\Delta^{\text{pkt}}(l)= h\left(\sum_{i=1}^{M}L^{\max}_i\alphapcr{i}^+- L_f^{\max},\beta\right) + \frac{l
		}{c}.
	\end{equation}
	
	(ii) An upper bound on the response time of any packet of flow $f$ at this FIFO system is:

    \begin{equation} \label{eq:pcr_bound2}
	\Delta^{\text{pkt}}= h\left(\sum_{i=1}^{M}L^{\max}_i\alphapcr{i}^+- L_f^{\max},\beta\right) + \frac{L^{\max}_f
	}{c}.
	\end{equation}
\end{theorem}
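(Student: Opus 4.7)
The plan is to reduce Theorem~\ref{thm:delay_pcr} to Lemma~\ref{lem:delay-alpha-g} by using Proposition~\ref{pro:rate2reg} to convert the packet-level constraints into a $g$-regularity constraint for flow~$f$ and a bit-level arrival curve for the aggregate of the remaining flows. By Proposition~\ref{pro:rate2reg}(1) applied to flow~$f$, flow~$f$ satisfies a $g$-regularity constraint with $g_f(x)=\alphapcr{f}^{\downarrow}\!\bigl(\tfrac{x}{L_f^{\max}}+1\bigr)$; and by Proposition~\ref{pro:rate2reg}(1) applied to each flow $i\neq f$, flow $i$ has bit-level arrival curve $L_i^{\max}\alphapcr{i}$. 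Since the sum of bit-level arrival curves is a bit-level arrival curve for the aggregate, the interfering traffic admits $\alpha(t)=\sum_{i\neq f}L_i^{\max}\alphapcr{i}(t)$. Feeding $(g_f,\alpha,\beta)$ into Lemma~\ref{lem:delay-alpha-g} immediately delivers the response-time bound $h(g_f^{\uparrow}+\alpha^+,\beta)+\tfrac{l}{c}$ for a packet of length~$l$ in flow~$f$.

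It remains to recast $g_f^{\uparrow}+\alpha^+$ in the form appearing in~\eqref{eq:pcr_bound1}. Since right limits commute with finite sums of wide-sense increasing functions, $\alpha^+(t)=\sum_{i\neq f}L_i^{\max}\alphapcr{i}^+(t)$. The main technical step is the identity
\begin{equation*}
  g_f^{\uparrow}(t) \;=\; L_f^{\max}\alphapcr{f}^+(t)-L_f^{\max}.
\end{equation*}
To prove it, I would unfold $g_f^{\uparrow}(t)=\sup\{x\geq 0:\alphapcr{f}^{\downarrow}(\tfrac{x}{L_f^{\max}}+1)\leq t\}$, substitute $y=\tfrac{x}{L_f^{\max}}+1$, and use the standard duality $\alphapcr{f}^{\downarrow}(y)\leq t \Leftrightarrow y\leq \alphapcr{f}^+(t)$ valid for left-continuous wide-sense increasing functions. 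The side constraint $y\geq 1$ coming from $x\geq 0$ is not binding, because the standing assumption $\alphapcr{f}^+(0)\geq 1$ from Section~\ref{sec:pktarr} gives $\alphapcr{f}^+(t)\geq 1$ for all $t$. Summing the two pieces yields $g_f^{\uparrow}(t)+\alpha^+(t)=\sum_{i=1}^M L_i^{\max}\alphapcr{i}^+(t)-L_f^{\max}$, which matches the argument of $h$ in~\eqref{eq:pcr_bound1}. This proves part~(i); part~(ii) follows by specializing $l=L_f^{\max}$, since every packet of flow~$f$ has length at most $L_f^{\max}$ and the bound is non-decreasing in~$l$.

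The main obstacle I expect is the pseudo-inverse identity above: obtaining exactly $L_f^{\max}\alphapcr{f}^+-L_f^{\max}$, rather than a messier expression with left limits or an extra $+1$, depends on carefully tracking which one-sided limit appears where, and on the sub-additivity and left-continuity conventions imposed on $\alphapcr{f}$ at the end of Section~\ref{sec:pktarr}. The remaining ingredients, namely the superposition of bit-level arrival curves and the inheritance of left-continuity by the aggregate $\alpha$, are routine.
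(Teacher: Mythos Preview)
Your proposal is correct and follows exactly the paper's own approach: apply Proposition~\ref{pro:rate2reg}(1) to derive a $g$-regularity constraint for the flow of interest and bit-level arrival curves for the competing flows, then invoke Lemma~\ref{lem:delay-alpha-g}. The paper's proof is a two-line sketch that leaves implicit the identity $g_f^{\uparrow}(t)=L_f^{\max}\alphapcr{f}^+(t)-L_f^{\max}$; your derivation of it via the substitution $y=\tfrac{x}{L_f^{\max}}+1$ and the duality $(\alphapcr{f}^{\downarrow})^{\uparrow}=\alphapcr{f}^+$ (which follows from~\eqref{eq:jylb_theory_1} and~\eqref{eq:liebeherr_lemma10.1d} applied to the right-continuous $\alphapcr{f}^+$) is the natural way to fill this gap.
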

\begin{proof}
Use item 1 of Proposition~\ref{pro:rate2reg} to derive a g-regularity constraint for the flow of interest and a bit-level arrival-curve for the competing flows. Then apply Lemma~\ref{lem:delay-alpha-g}.
\end{proof}

The delay bound in Theorem~\ref{thm:delay_pcr} improves the state-of-the-art delay bounds at least when packet sizes are different. 
Let us observe the improvement for a simple case of rate-latency service-curves in the following example; the bounds are summarized in Table~\ref{table:bound-compare}.
% \TBD{Not really. The only place we compare to the state of the art is in Table 2, which is  straightforward. If we need an entire section to demonstrate the usefulness of this theorem then we have a problem. Point to Table 2 here !}

\begin{application}
Consider a FIFO system with rate-latency service-curve $\beta(t)=R[t-T]^+$ connected to a link with transmission rate \mbox{$c> R$}. Assume flows $1$ and $2$ have packet-level arrival-curves $\alphapcr{}(t)=K\floor{\frac{t}{\tau}}$ and $\alphapcr{}'(t)=K'\floor{\frac{t}{\tau'}}$ respectively ($K,K'\in \N,~\tau,\tau'>0$). To avoid unbounded response time, we assume $\frac{KL_1^{\max}}{\tau}+\frac{K'L_2^{\max}}{\tau'}\leq R$. We want to compute a delay bound for flow $1$.

The classical approach is to first derive the bit-level arrive-curves corresponding to the two flows. By Proposition~\ref{pro:rate2reg}, they are:
\begin{align}
    \nonumber \alpha(t) &= L_1^{\max}\alphapcr{}(t)=KL_1^{\max}\floor{\frac{t}{\tau}},\\
    \alpha'(t) &= L_2^{\max}\alphapcr{}'(t)=K'L_2^{\max}\floor{\frac{t}{\tau'}}.
\end{align}
Then, a delay bound is computed by the classical network-calculus bound. For rate-latency service-curves, the classical network-calculus bound is improved by \cite{mohammadpour_improved_2019}, which is a special case of our result in Theorem~\ref{thm:response_fifo_arrival} (we will see in the next section).
% We will see in the next section that the classical network-calculus bound is improved by Theorem~\ref{thm:response_fifo_arrival} as a generalization of \cite{mohammadpour_improved_2019};Table~\ref{table:bound-compare} shows the computed delay bounds. 
% As expected, the bound by Theorem~\ref{thm:response_fifo_arrival} improves the one obtained by classical approach by $L^\min(\frac{1}{R}-\frac{1}{c})$ (which is positive as $R<c$).
Our approach, however, is to consider the original packet-level arrival-curves and directly apply Theorem~\ref{thm:delay_pcr}. Table~\ref{table:bound-compare} shows the obtained bounds for the above approaches.

Comparing the bound of Theorem~\ref{thm:delay_pcr} with the improved version of the classical approach, we have:
\begin{align}
    \Delta^\pkt - \Delta^\A = -(L_1^{\max} - L_1^{\min})\left(\frac{1}{R}-\frac{1}{c}\right) <0,
\end{align}
% Since $R<c$, $\Delta^\pkt < \Delta^\A$ when packet sizes are different. 
which implies that the bound obtained by Theorem~\ref{thm:delay_pcr} is strictly less than the improved classical approach, except when all packets of flow $1$ are of the same size ($L_1^{\min}=L_1^{\max}$).

\begin{table}[h]
\renewcommand{\arraystretch}{1.4} % Default value: 1
\caption{Comparison of the delay upper-bound for packet-level arrival-curve in the example of Section~\ref{sec:delay}.}
\label{table:bound-compare}
\begin{tabular}{|c|c|}
\hline
Approach                                                                                                                                                                          & Delay upper-bound                                                       \\ \hline
\begin{tabular}[c]{@{}c@{}}Classical approach: \\ bit-level arrival-curve \\ $+$ classical network-calculus bound\end{tabular}               & $\Delta^{\mathrm{NC}}=T + \frac{KL_1^{\max}+K'L_2^{\max}}{R}$                                \\ \hline
\begin{tabular}[c]{@{}c@{}} Improved approach: \\ bit-level arrival-curve  \\$+$ improved bound in Theorem~\ref{thm:response_fifo_arrival}\end{tabular} & $\Delta^\A=\Delta^{\mathrm{NC}} -L_1^\min \left(\frac{1}{R}-\frac{1}{c}\right)$   \\ \hline
\begin{tabular}[c]{@{}c@{}}Improved approach: \\ packet-level arrival-curve  \\$+$ new bound in Theorem~\ref{thm:delay_pcr}\end{tabular}                                 & $\Delta^\pkt=\Delta^{\mathrm{NC}}-L_1^{\max}\left(\frac{1}{R} - \frac{1}{c}\right)$ \\ \hline
\end{tabular}
\end{table}
\end{application}

The novel delay bound in Theorem~\ref{thm:delay_pcr} is derived from Lemma~\ref{lem:delay-alpha-g} and thus uses g-regularity constraint and bit-level arrival-curves derived from the packet-level arrival-curves. Therefore, it is legitimate to wonder whether it is the best possible bound derived from packet-level arrival curves. We show in the following theorem that this is indeed the case. 

\begin{theorem}	\label{thm:delay_pcr_tight}
    The bound of Theorem~\ref{thm:delay_pcr} is tight.
	
	More specifically, consider a $c$-Lipschitz service curve $\beta$, sub-additive left-continuous functions $\alphapcr{f}$ as packet-level arrival-curves for flow $f\in\{1,\dots,M\}$, with maximum packet lengths $L_f^{\max}$, and a transmission rate $c$. There exists a simulation trace of a FIFO system, shared between the $M$ flows, where a packet of flow $1$ reaches the bound in \eqref{eq:pcr_bound2} of Theorem \ref{thm:delay_pcr}.
% 	More specifically, consider a $c$-Lipschitz service curve $\beta$, sub-additive left-continuous functions $\alphapcr{1}$ and $\alphapcr{2}$ as packet-level arrival-curves for two flows $1$ and $2$, maximum packet lengths $L_1^{\max}$ and $L_2^{\max}$, and a transmission rate $c$. There exists a simulation trace of a FIFO system, shared between flows $1$ and $2$, where a packet of flow $1$ reaches the bound in the statement of Theorem \ref{thm:delay_pcr}.
\end{theorem}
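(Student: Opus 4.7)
The natural strategy is to exhibit an adversarial trace in which (i) every flow~$i$ sends maximum-size packets as greedily as $\alphapcr{i}$ permits, (ii) the tagged packet of flow~$1$ is the last arrival of a burst at a carefully chosen time $t^*$, and (iii) the server delivers exactly---and no more than---its guaranteed service curve $\beta$. Setting $w \eqdef \sum_{i=1}^{M} L_i^{\max}\alphapcr{i}^+ - L_1^{\max}$, the target response time is $h(w,\beta)+L_1^{\max}/c$. Fix $\varepsilon>0$ and pick $t^* \geq 0$ such that $\beta^{\downarrow}(w(t^*)) - t^* \geq h(w,\beta) - \varepsilon$, which exists by definition of the supremum in $h$; letting $\varepsilon\downarrow 0$ at the end will conclude.

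\textbf{Trace construction.} For each flow~$i$ I will inject $N_i \eqdef \alphapcr{i}^+(t^*)$ packets of size $L_i^{\max}$, placed so as to saturate $\alphapcr{i}$ by time $t^*$. Sub-additivity and left-continuity of $\alphapcr{i}$ (both assumed in Section~\ref{sec:pktarr}) ensure that such a sequence satisfies \eqref{eq:pcr_minplus} between every pair of indices; the argument parallels the classical greedy-arrival construction for bit-level arrival curves \cite{le_boudec_network_2001}. For flow~$1$ I arrange that its $N_1$-th packet---the tagged one---arrives exactly at $t^*$, while the earlier $N_1-1$ packets and all packets of the other flows arrive in $[0,t^*]$, strictly before the tagged one (ties at $t^*$ broken so that competing arrivals precede). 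By construction, the total bits arriving strictly before the tagged packet equal $\sum_i L_i^{\max}\alphapcr{i}^+(t^*) - L_1^{\max} = w(t^*)$. The server is realized so that its cumulative output equals $\beta(t)$ while backlogged; because $\beta$ is $c$-Lipschitz, one can interleave transmissions at the line rate $c$ with idle gaps so that the departure curve tracks $\beta$ exactly, yielding a valid instance of the service-curve condition~\eqref{eq:arrival_def}.

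\textbf{Conclusion and main obstacle.} Under FIFO, the tagged packet cannot start transmission until $\calD(t)\geq w(t^*)$, i.e.\ at time $Q=\beta^{\downarrow}(w(t^*))$, and then finishes $L_1^{\max}/c$ later, yielding response time $\beta^{\downarrow}(w(t^*)) - t^* + L_1^{\max}/c \geq h(w,\beta) - \varepsilon + L_1^{\max}/c$, which matches the bound of Theorem~\ref{thm:delay_pcr} as $\varepsilon\downarrow 0$. The main technical obstacle lies in the trace construction: producing a sequence of \emph{actual} packet arrivals of each flow~$i$ that is $\alphapcr{i}$-compliant at every pair of indices while simultaneously realizing the full burst $\alphapcr{i}^+(t^*)$ at time $t^*$. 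This requires careful handling of left- versus right-limits of $\alphapcr{i}$ at $t^*$, using sub-additivity to justify that packets packed simultaneously at $t^*$ do not violate the constraint. The $c$-Lipschitz hypothesis on $\beta$ is used precisely to guarantee that a single-rate transmission at $c$ can reproduce the departure envelope $\beta$, and the $\varepsilon$-argument handles the case where the supremum in $h(w,\beta)$ is not attained.
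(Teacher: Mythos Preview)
Your strategy---greedy, maximum-size sources together with a server that delivers only the guaranteed service---is exactly the paper's. The arrival side of your sketch is fine: the paper sets $A^f_j=\alphapcr{f}^\downarrow(j)$ for competing flows and shifts flow~$1$ by $\gamma=t'-\alphapcr{1}^\downarrow(n_1)$ so its last packet lands at $t'$, then verifies $\alphapcr{f}$-compliance via sub-additivity, just as you outline.

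There is, however, a genuine gap in your output construction. You propose that ``the cumulative output equals $\beta(t)$ while backlogged'', and later take $Q=\beta^\downarrow(w(t^*))$. But $\calD(t)=\beta(t)$ is not automatically a \emph{causal} output: nothing in your construction guarantees $\beta(t)\leq I(t)$ for all $t\in[0,Q]$; the packetized input $I$ is a step function and can be overtaken by a continuous $\beta$ on plateaus of the arrival curves, making the trace infeasible. The paper avoids this by taking the \emph{fluid} output to be the min-plus convolution $F(t)=\inf_{0\leq s\leq t}\{I(s)+\beta(t-s)\}$, which is causal by construction and satisfies the service-curve property by definition. A separate lemma then shows that $F$ (not just $\beta$) is $c$-Lipschitz, which is what actually licenses the packetization step: start-of-transmission times $Q_i=F^\downarrow(I^+(A_i)-l_i)$ are kept, completion is set to $Q_i+l_i/c$, and Lipschitzness of $F$ gives $O\geq F$. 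Finally, since $F\leq\beta$ one has $F^\downarrow\geq\beta^\downarrow$, which yields $Q_n-A_n\geq\beta^\downarrow(w^+(t')-L_1^{\max})-t'$. Your shortcut skips both the causality check and the fact that the Lipschitz property must be transferred from $\beta$ to $F$.

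A secondary point: the theorem asserts that some packet \emph{reaches} the bound, not merely approaches it. Your $\varepsilon$-argument establishes only that the bound is the supremum of achievable delays; the paper instead takes the smallest $t'$ at which the supremum in $h(w-L_1^{\max},\beta)$ is attained and exhibits a single trace hitting $\Delta^{\pkt}$ exactly. If you keep the $\varepsilon$ route, you prove a (slightly) weaker statement than the one claimed.
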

The proof is in Appendix \ref{proof:delay_pcr_tight}; it consists in building a trajectory with \emph{greedy} sources, i.e. with sources for which the cumulative packet arrival function $N(t)$ satisfies \mbox{$N(t)=\alphapcr{} (t-t_0)$} for some offset $t_0$. 

Recall that there are two interpretations of the packet-level regulation constraint of TSN/DetNet (Section~\ref{sec:pktarr}). The former, sliding interval, is equivalent to a packet-level arrival curve constraint; the latter, fixed interval, implies a packet-level arrival curve constraint but is not equivalent to it. However, as we show in the next theorem, the delay bound obtained by using packet-level arrival-curves and Theorem~\ref{thm:delay_pcr} is the best possible bound for flows with TSN/DetNet constraints.
\begin{theorem}\label{thm:delay_tight_fixed_interval}
Consider a FIFO system offering a $c$-Lipschitz service curve $\beta$ to an aggregate of $M$ flows, where, as soon as a packet starts to be transmitted, it is transmitted with rate $c$. Assume that every flow $f\in\{1,\dots,M\}$ conforms to either the Sliding interval $(\tau_f, K_f)$ or the Fixed interval $(\tau_f, K_f)$ constraint of TSN/DetNet (Section~\ref{sec:pktarr}) and has maximum packet size $L_f^{\max}$. 
For every flow $f$, we can derive a packet-level arrival curve, by using \eqref{eq:cpcr}  or \eqref{eq:fpcr}, and apply Theorem~\ref{thm:delay_pcr} to obtain
a delay bound $\Delta^{\text{pkt}}_f$.

Then, for every flow $f$ and every $\varepsilon>0$, there exists a simulation trace of 
this system where a packet of flow $f$ experiences a delay in the interval $[	\Delta^{\text{pkt}}_f-\varepsilon, 	\Delta^{\text{pkt}}_f]$.%, where $\Delta^{\text{pkt}}_f$ is the delay bound for flow $u$ given by \eqref{eq:pcr_bound2} in Theorem~\ref{thm:delay_pcr} using the packet-level arrival-curves in \eqref{eq:cpcr} and \eqref{eq:fpcr}.} \DONE{Theorem~\ref{thm:delay_pcr} does not apply natively, you need to say that you use packet-level arrival curves and cite the equations. I had written this in the theorem, why did you remove it silently ?}
%\DONE{Fix the labeling of flows throughout the paper.}\DONE{No, not done everywhere. We still call $f$ a flow in many places (which makes more sense than calling it $u$). Perhaps we should use function $w$ instead of function $f$, since we use this notation a lot anyhow.}
\end{theorem}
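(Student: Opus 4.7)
The plan is to reduce the statement to Theorem~\ref{thm:delay_pcr_tight}, which already exhibits a tight trace when each flow is literally constrained by a packet-level arrival curve; the task here is to produce such a trace that is additionally Sliding- or Fixed-interval-legal for every flow. Two cases arise depending on the regulation of each flow.

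For a flow $f$ under Sliding interval $(\tau_f, K_f)$, the constraint is by \eqref{eq:cpcr} equivalent to the very arrival curve $\alphapcr{f}(t)=K_f\lceil t/\tau_f\rceil$ used to derive $\Delta^{\text{pkt}}_f$. So any trace compliant with this arrival curve is automatically Sliding-interval-legal, and the trace produced by Theorem~\ref{thm:delay_pcr_tight} achieves $\Delta^{\text{pkt}}_f$ exactly (so $\varepsilon=0$ works). For a flow under Fixed interval $(\tau_f, K_f)$, the derived arrival curve from \eqref{eq:fpcr}, namely $\alphapcr{f}(t)=K_f\lceil t/\tau_f\rceil+K_f$ for $t>0$, is strictly weaker; in particular $\alphapcr{f}(0^+)=2K_f$, so the greedy source of Theorem~\ref{thm:delay_pcr_tight} would send $2K_f$ packets at a single instant, which Fixed interval forbids.

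The idea is to approximate this greedy source by a $\delta$-perturbation for small $\delta>0$: split the initial $2K_f$-burst of flow $f$ into $K_f$ packets at time $t_0-\delta$ and $K_f$ packets at $t_0+\delta$, choosing the per-flow reference-interval offset $\theta_f$ so that $t_0$ is a reference-interval boundary, and place each subsequent $K_f$-burst at (or within $\delta$ of) its ideal epoch but inside a fresh reference interval. A direct check gives that no reference interval contains more than $K_f$ packets, so the perturbed trace $\sigma_\delta$ is Fixed-interval-legal, and every packet arrival time differs from the ideal trace $\sigma_0$ of Theorem~\ref{thm:delay_pcr_tight} by at most $\delta$. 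Sliding-interval flows are left untouched between $\sigma_0$ and $\sigma_\delta$.

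Because $\beta$ is $c$-Lipschitz and hence continuous, and because the FIFO departure map is continuous in the arrival trace, the response time of the target packet under $\sigma_\delta$ converges to $\Delta^{\text{pkt}}_f$ as $\delta\to 0^+$. Moreover, by Theorem~\ref{thm:delay_pcr}, this response time is at most $\Delta^{\text{pkt}}_f$ for every $\delta$, so the limit is approached from below, giving a delay in $[\Delta^{\text{pkt}}_f-\varepsilon,\Delta^{\text{pkt}}_f]$ for all sufficiently small $\delta$. The main obstacle is the continuity step: formalizing that an $O(\delta)$ shift of arrival times in a shared FIFO with general $c$-Lipschitz service curve induces only an $O(\delta)$ shift of the departure time of any given packet; this should follow by re-examining Lemma~\ref{lem:queuing-delay} under a perturbation of the input, using the $c$-Lipschitz property of $\beta$ to bound the sensitivity of the horizontal deviation $h(w,\beta)$ to a shift of its first argument.
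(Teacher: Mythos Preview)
Your high-level strategy---perturb the greedy trace of Theorem~\ref{thm:delay_pcr_tight} so that every Fixed-interval flow becomes legal, then argue the achieved delay moves by $O(\delta)$---is the same idea the paper uses, but the execution differs in a way that matters. The paper does not perturb arrival times of the ideal trace and then invoke continuity of the departure map. Instead, for each Fixed-interval flow $j$ it introduces a \emph{shifted arrival curve} $\alphapcr{j}^\varepsilon(t)=K_j\lceil[t-\varepsilon]^+/\tau_j\rceil+K_j$, proves via a short dedicated lemma that the greedy source for $\alphapcr{j}^\varepsilon$ conforms to Fixed interval $(\tau_j,K_j)$ once the simulation is started at $t_0\geq\max_u\tau_u-\varepsilon$, and then \emph{re-runs the entire construction of Theorem~\ref{thm:delay_pcr_tight}} with these modified curves. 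Because that construction delivers an explicit formula for the achieved delay, the paper obtains $d_\varepsilon=h(w_\varepsilon-L_f^{\max},\beta)+L_f^{\max}/c$ with $w_\varepsilon$ built from the $\alphapcr{j}^\varepsilon$, and closes with a one-line shift inequality for horizontal deviations, $h(f(\cdot-\varepsilon)+g,\beta)\geq h(f+g,\beta)-\varepsilon$, which immediately yields $d_\varepsilon\geq\Delta^{\text{pkt}}_f-\varepsilon$.

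The gap in your argument is exactly the step you flag as the main obstacle. Appealing to Lemma~\ref{lem:queuing-delay} does not give what you need: that lemma furnishes an \emph{upper} bound $h(w,\beta)$ on the queuing delay, whereas tightness requires a \emph{lower} bound on the delay achieved in a specific server realization, and that lower bound comes from the explicit output construction $F=I\otimes\beta$ inside the proof of Theorem~\ref{thm:delay_pcr_tight}, not from Lemma~\ref{lem:queuing-delay}. If you perturb arrival times directly, you must track how $F^\downarrow(I^+(t')-L_f^{\max})$ responds to a perturbation of $I$; the paper sidesteps this entirely by arranging that the perturbed trace is itself the greedy trace of a nearby arrival curve, so the delay formula from Theorem~\ref{thm:delay_pcr_tight} applies verbatim and the comparison with $\Delta^{\text{pkt}}_f$ becomes a purely algebraic statement about $h(\cdot,\beta)$. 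Your split into $K_f$ packets at $t_0-\delta$ and $K_f$ at $t_0+\delta$ is not the greedy trace of any simple curve, so you lose that shortcut and are left with the unproven continuity claim.
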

The proof is in Appendix \ref{proof:delay_tight_fixed_interval}. The main issue here is that a flow that conforms to Fixed interval $(\tau, K)$ has packet-level arrival curve given by \eqref{eq:fpcr}, but a greedy source for this arrival curve does not satisfy Fixed interval $(\tau, K)$ (as it generates $2K$ packets in one interval of duration $\tau$). We overcome this by considering, for every $\varepsilon>0$, a greedy source for the packet-level arrival curve $K\ceil{\frac{[t-\varepsilon]^+}{\tau}}+K$, which does conform to Fixed interval~$(\tau,K)$.

\section{Improvements on Existing Network-Calculus Delay-Bounds} \label{sec:ncimprov}
In this section, as a by-product of Lemmas~\ref{lem:queuing-delay} and \ref{lem:delay-alpha-g}, we derive novel delay bounds for flows with g-regularity constraints (Theorem~\ref{thm:greg-delay}) and bit-level arrival-curves (Theorem~\ref{thm:response_fifo_arrival}). 
The obtained delay bounds for flows with g-regularity constraint generalize the results of \cite{specht_urgency_based_2016}, which are specific to LRQ and strict-priority schedulers. Similarly, for flows with bit-level arrival-curve, our delay bounds improve on the ones obtained by classical network-calculus and generalize the results of  \cite{mohammadpour_improved_2019}, which are specific to rate-latency service-curves.

% This section focuses on the generalization of the existing results for flows with g-regularity constraint or bit-level arrival-curve that share a FIFO system with known service curve and transmission rate. \TBD{Weird: this seems to have little relation to the stated objective of the paper, which is about packet level regulation.} To this end, we first provide a delay bound for flows with bit-level arrival-curves in Theorem~\ref{thm:response_fifo_arrival}; this generalizes the results in \cite{mohammadpour_improved_2019} where the service curve is assumed to be rate-latency. Second, we obtain delay bounds for flows with g-regularity constraints in Theorem~\ref{thm:greg-delay}; the result extends the delay bounds in \cite{specht_urgency_based_2016} that was for flows with LRQ constraint and servers with strict-priority scheduling.

\begin{theorem}\label{thm:greg-delay}
	Consider a FIFO system offering a service curve $\beta$ to the aggregate of $n$ flows sharing it. Assume any flow $f\in\{1,\dots,M\}$ has $g_f$-regulation constraint at the entrance of the FIFO system and has maximum packet length $L^{\max}_f$. In addition, assume that as soon as a packet starts to be transmitted, it is transmitted with rate $c$. Then, 
	
	(i) An upper bound on the response time of a packet with length $l$ of a flow $f$ at this FIFO system is given as follows:
	\begin{equation}
		\Delta^G(l)= h\left(\sum_{i=1}^{M}g_i^{\uparrow} + \sum_{i=1}^{M}L_i^{\max}1_{i\neq f},\beta\right) + \frac{l}{c},
	\end{equation}
	
	(ii) An upper bound on the response time of any packet of flow $f$ at this FIFO system is given as follows:
	\begin{equation}
		\Delta^G= h\left(\sum_{i=1}^{M}g_i^{\uparrow} + \sum_{i=1}^{M}L_i^{\max}1_{i\neq f},\beta\right) + \frac{L_f^{\max}}{c}.
	\end{equation}
\end{theorem}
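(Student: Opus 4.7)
The plan is to apply Lemma~\ref{lem:queuing-delay} directly to the aggregate arrivals of the FIFO system, building the required function $w$ from the per-flow $g$-regulation constraints. The starting point is the max-plus restatement of $g$-regulation: applying the upper pseudo-inverse to \eqref{eq:g_maxplus} gives, for every flow $i$ and every pair $\mu\leq\nu$ of flow-$i$ indices, $\sum_{k=\mu}^{\nu-1}l_k^{(i)}\leq g_i^{\uparrow}(A_\nu^{(i)}-A_\mu^{(i)})$.

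Fix a tagged packet $n$ (in aggregate FIFO numbering) belonging to flow $f$, and let $p$ be its flow-$f$ index, so $A_p^{(f)}=A_n$. For any $m\leq n$, I would split $\sum_{k=m}^{n-1}l_k$ by flow. For each competing flow $i\neq f$, its packets in the aggregate range $\{m,\dots,n-1\}$ form a consecutive block $\{p_i^{\min},\dots,p_i^{\max}\}$ in flow-$i$ numbering with $A_{p_i^{\min}}^{(i)}\geq A_m$ and $A_{p_i^{\max}}^{(i)}\leq A_{n-1}<A_n$; bounding all but the last packet via the max-plus inequality and then adding the last packet separately yields a contribution of at most $g_i^{\uparrow}(A_n-A_m)+L_i^{\max}$. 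For flow $f$, the corresponding block in flow-$f$ numbering is $\{p_f^{\min},\dots,p-1\}$, and I can sharpen the bound by \emph{extending} the max-plus inequality to include the tagged packet itself: $\sum_{k=p_f^{\min}}^{p-1}l_k^{(f)}\leq g_f^{\uparrow}(A_p^{(f)}-A_{p_f^{\min}}^{(f)})\leq g_f^{\uparrow}(A_n-A_m)$, with no extra $L_f^{\max}$ term.

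Summing over all $M$ flows gives, for every $m\leq n$,
\begin{equation}
\sum_{k=m}^{n-1}l_k \leq \sum_{i=1}^{M}g_i^{\uparrow}(A_n-A_m)+\sum_{i\neq f}L_i^{\max},
\end{equation}
so I take $w(t)=\sum_{i=1}^{M}g_i^{\uparrow}(t)+\sum_{i\neq f}L_i^{\max}$, which lies in $\Fincz$. Lemma~\ref{lem:queuing-delay} then bounds the queuing delay of the tagged packet by $h(w,\beta)$; adding the transmission time $l/c$ (since the packet is served at rate $c$ once transmission begins) gives (i), and (ii) follows from $l\leq L_f^{\max}$.

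The main subtlety is the asymmetric bookkeeping between flow $f$ and the competing flows. Because the tagged packet belongs to flow $f$, I can pin the right endpoint of the flow-$f$ block at $A_n$ and absorb that packet's length into the max-plus bound; but for each flow $i\neq f$ the last packet in the interval satisfies only $A_{p_i^{\max}}^{(i)}<A_n$, so its length must be paid for as an extra $L_i^{\max}$. This is precisely what distinguishes the bound from a naive route that would first convert every $g_i$ to a bit-level arrival curve via Proposition~\ref{pro:arr-g}(2) and then invoke Lemma~\ref{lem:delay-alpha-g}; equivalence of the two routes would require the identity $(g_i^{\downarrow})^{+}=g_i^{\uparrow}$, which does hold for non-decreasing $g_i$ but is not needed when one works directly with Lemma~\ref{lem:queuing-delay}.
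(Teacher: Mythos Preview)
Your proof is correct and follows essentially the same route as the paper: the paper converts the competing flows to bit-level arrival curves via Proposition~\ref{pro:arr-g} and then invokes Lemma~\ref{lem:delay-alpha-g}, whereas you inline that step and apply Lemma~\ref{lem:queuing-delay} directly---the two coincide precisely because $(g_i^{\downarrow})^+=g_i^{\uparrow}$, as you yourself note. One minor slip: your $w$ does not lie in $\Fincz$ since $w(0)=\sum_{i\neq f}L_i^{\max}+\sum_i g_i^{\uparrow}(0)>0$ in general, but this is harmless because the proof of Lemma~\ref{lem:queuing-delay} never actually uses $w(0)=0$ (and the paper's own proof of Lemma~\ref{lem:delay-alpha-g} has the same feature).
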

% The proof is in Appendix \ref{proof:greg-delay}.
The proof consists in two steps; we first take the sum of the bit-level arrival-curves of the competing flows derived using item 1 of Proposition~\ref{pro:arr-g}; second, we apply Lemma~\ref{lem:delay-alpha-g}.

For the most common form of g-regularity constraint, i.e., LRQ, the current available delay bound is by \cite{specht_urgency_based_2016} which only applies to strict-priority schedulers; under the same assumption, Theorem~\ref{thm:greg-delay} gives the same delay bounds as \cite{specht_urgency_based_2016}. The delay bounds presented in \cite{specht_urgency_based_2016} do not apply to a vast majority of schedulers, e.g., WRR, Interleaved WRR, and DRR. Theorem~\ref{thm:greg-delay}, however, can be used to obtain delay bounds for such scheduling mechanisms. We provide a numerical illustration of Theorem~\ref{thm:greg-delay} for a DRR scheduler in Section~\ref{sec:eval-thm2}.

Next, we present the delay bounds for flows with bit-level arrival-curves. It follows from the definition that the bit-level arrival-curves can be superimposed: an aggregate of flows conforms to a bit-level arrival-curve equal to the sum of the bit-level arrival-curves of the flows. Hence, in the next theorem, we assume an aggregate of flows (that share a FIFO system with the flow of interest) is constrained by one bit-level arrival-curve constraint.

\begin{theorem}\label{thm:response_fifo_arrival}
	Consider a FIFO system offering a service curve $\beta$ to the aggregate of the flows sharing it and such that, as soon as a packet starts to be transmitted, it is transmitted with rate $c$. Flow $1$ is the flow of interest and flow $2$ represents the aggregate of the remaining flows sharing the FIFO system. Assume that  that flows $1,2$ have bit-level arrival-curves $\alpha,~\alpha'$ respectively; then, 
	
	(i) an upper bound on the response time of a packet with length $l$ of flow $1$ at this FIFO system is
	\begin{equation}\label{eq:delay-arr-l}
		\Delta^A(l) = h(\alpha^+ + {\alpha'}^+-l,\beta) + \frac{l}{c};
	\end{equation}
	
	(ii) if $\beta$ is $c$-Lipschitz, an upper bound on the response time of any packet of flow $1$ at this FIFO system is
	\begin{equation}\label{eq:delay-arr-flow}
		\Delta^A = h(\alpha + {\alpha'}-L^{\min}_1,\beta) + \frac{L^{\min}_1}{c}.
	\end{equation}	
\end{theorem}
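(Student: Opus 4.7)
The overall plan is to apply Lemma~\ref{lem:queuing-delay} to the queuing delay of a tagged packet $n$ of flow $1$ and then add the transmission time $l_n/c$ to obtain the response time. The work in part (i) is to construct a function $w$ satisfying \eqref{eq:queuing-input} that correctly accounts for both flows while isolating the length of the tagged packet; part (ii) then uses the $c$-Lipschitz hypothesis to identify the worst-case packet length.

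For part (i), I would fix a tagged packet $n$ of flow $1$ with length $l_n = l$ and any index $m \leq n$ in the aggregate numbering. The aggregate indices in $\{m, m+1, \ldots, n-1\}$ that belong to flow $1$, together with packet $n$ itself, form a consecutive block in flow $1$'s own numbering whose first and last arrival times lie in $[A_m, A_n]$; applying \eqref{eq:arrival_minplus} to $\alpha$ and subtracting $l_n = l$, the flow $1$ contribution to $\sum_{k=m}^{n-1} l_k$ is at most $\alpha^+(A_n - A_m) - l$. Symmetrically, the aggregate indices in $\{m, \ldots, n-1\}$ that belong to flow $2$ form a consecutive block in flow $2$'s own numbering with arrival times in $[A_m, A_{n-1}] \subseteq [A_m, A_n]$, contributing at most ${\alpha'}^+(A_n - A_m)$. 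Summing gives
\begin{equation*}
\sum_{k=m}^{n-1} l_k \leq \alpha^+(A_n - A_m) + {\alpha'}^+(A_n - A_m) - l,
\end{equation*}
so Lemma~\ref{lem:queuing-delay} applied with $w = \alpha^+ + {\alpha'}^+ - l$ yields $Q_n - A_n \leq h(\alpha^+ + {\alpha'}^+ - l, \beta)$, and adding the transmission time $l/c$ proves \eqref{eq:delay-arr-l}.

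For part (ii), I would show that, under the $c$-Lipschitz hypothesis, $l \mapsto \Delta^A(l)$ from \eqref{eq:delay-arr-l} is non-increasing on $[L^{\min}_1, L^{\max}_1]$, so its supremum is attained at $l = L^{\min}_1$. The key ingredient is that a $c$-Lipschitz $\beta \in \Fincz$ is continuous, whence its lower pseudo-inverse satisfies
\begin{equation*}
\beta^{\downarrow}(y + \delta) \geq \beta^{\downarrow}(y) + \delta/c \quad \text{for all admissible } y,\delta \geq 0,
\end{equation*}
which follows from the Lipschitz inequality applied at $s_1 = \beta^{\downarrow}(y)$ and $s_2 = \beta^{\downarrow}(y+\delta)$, using that by continuity $\beta(s_1) = y$ and $\beta(s_2) = y + \delta$. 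Writing $W = \alpha^+ + {\alpha'}^+$ and taking $l \geq l'$, applying the pseudo-inverse inequality pointwise in $t$ to the arguments $W(t) - l$ and $W(t) - l' = (W(t) - l) + (l - l')$ gives $\beta^{\downarrow}(W(t) - l') - t \geq \beta^{\downarrow}(W(t) - l) - t + (l - l')/c$; taking the supremum over $t$ yields $h(W - l', \beta) \geq h(W - l, \beta) + (l - l')/c$, i.e., $\Delta^A(l') \geq \Delta^A(l)$. Substituting $l = L^{\min}_1$ in \eqref{eq:delay-arr-l} produces \eqref{eq:delay-arr-flow}; the change from $\alpha^+, {\alpha'}^+$ to $\alpha, \alpha'$ in the stated formula is inconsequential because the supremum inside the horizontal deviation coincides for a left-continuous function and its right limit.

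The main obstacle in part (i) is the bookkeeping that isolates the tagged packet's length $l$ exactly once on the flow $1$ side, which is handled by the decomposition into per-flow consecutive blocks. The only nontrivial analytic ingredient in part (ii) is the Lipschitz inversion inequality above; without the $c$-Lipschitz hypothesis on $\beta$, an increase of $\delta$ in the subtracted length could decrease the horizontal deviation by less than $\delta/c$, so the uniform bound \eqref{eq:delay-arr-flow} in terms of $L^{\min}_1$ would not follow in general.
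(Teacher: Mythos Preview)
Your proposal is correct and follows essentially the same route as the paper's proof: the same per-flow decomposition of $\sum_{k=m}^{n-1}l_k$ leading to $w=\alpha^+ + {\alpha'}^+ - l$ and an application of Lemma~\ref{lem:queuing-delay} for part~(i), and the same Lipschitz inversion inequality $\beta^{\downarrow}(y+\delta)\geq \beta^{\downarrow}(y)+\delta/c$ (stated in the paper as a separate lemma) to locate the supremum at $l=L_1^{\min}$ for part~(ii). The paper justifies dropping the right limits by citing a result from \cite{bouillard_deterministic_2018}, whereas you invoke the equivalence directly; both are valid.
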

% \TBD{This is abrupt. Can we discuss the $c$-lipschitz condition a bit ? What does it mean concretely ? Can you put things in perspective ? \\
% Also, this seems to have no relation to the rest of the story, yet it comes as the first theorem, the one that will get the most attention from the reviewer.\\
% We say in introduction that this result is a by-product of our method of proof, yet it comes as the first result.}
The proof is in Appendix \ref{proof:response_fifo_arrival}. It consists in two steps. We first derive the queuing delay bound for the FIFO system using Lemma~\ref{lem:queuing-delay} and then we add the transmission time for the packet of interests. To obtain the per-flow delay-bound in item (ii), we take the supremum of per-packet delay-bound in item (i) for all the range of packet lengths, i.e., $[L_1^\min,L_1^\max]$. When the service-curve is $c$-lipschitz (formally defined in \eqref{eq:lips}), the supremum is achieved at \mbox{$l=L_1^\min$} (in Appendix~\ref{appendix-example}, we show that this does not always hold for a non $c$-Lipschitz service-curve). The $c$-lipschitz condition in Theorem~\ref{thm:response_fifo_arrival} implies that the slope of the service curve is not more than the transmission rate $c$ at any point in time. In fact, the rate-latency service-curves as well as the existing service-curves for the common scheduling mechanisms in time-sensitive networks, e.g., TSN schedulers with credit-based shapers \cite{mohammadpour_latency_2018,zhao_timing_analysis_2018}, DRR \cite{tabatabaee2021deficit,boyer2012drr}, Interleaved WRR\cite{tabatabaee2021interleaved}, are $c$-lipschitz; therefore, the bound in Theorem~\ref{thm:response_fifo_arrival} can be applied to such systems.

Theorem~\ref{thm:response_fifo_arrival} gives the same delay bound as \cite{mohammadpour_improved_2019} for rate-latency service-curves. For a number of scheduling mechanisms, e.g., DRR and Interleaved WRR, the rate-latency service-curves are not accurate; to this end, better non rate-latency service-curves are obtained that provide more precise characterization for these mechanisms \cite{tabatabaee2021deficit,tabatabaee2021interleaved}. For such non rate-latency service-curves, the delay-bound improvement of \cite{mohammadpour_improved_2019} cannot be used; instead, the existing approach is to use classical network calculus to obtain a delay bound, which is \mbox{$h(\alpha+\alpha',\beta)$}. This bound is improved by Theorem~\ref{thm:response_fifo_arrival}. We provide a numerical illustration of Theorem~\ref{thm:response_fifo_arrival} for a non rate-latency service-curve in Section~\ref{sec:eval-thm1}.

% \begin{remark}
%     The $c$-lipschitz condition in Theorem~\ref{thm:response_fifo_arrival} (formally defined in \eqref{eq:lips}) implies that the slope of the service curve is not more than the transmission rate $c$ at any point in time. In fact, the rate-latency service-curves as well as the existing service-curves for the common scheduling mechanisms in time-sensitive networks, e.g., TSN schedulers with credit-based shapers \cite{mohammadpour_latency_2018,zhao_timing_analysis_2018}, DRR \cite{tabatabaee2021deficit,boyer2012drr}, Interleaved WRR\cite{tabatabaee2021interleaved}, are $c$-lipschitz; therefore, the bound in Theorem~\ref{thm:response_fifo_arrival} can be applied to such systems.
% \end{remark}

% \begin{remark}
%     In Theorem~\ref{thm:response_fifo_arrival}, if the service-curve is not $c$-Lipschitz, the per-flow delay-bound in \eqref{eq:delay-arr-flow} does not hold. For non $c$-Lipschitz service-curves, obtaining a per-flow delay bound requires solving an optimization problem to obtain maximum delay over the range of packet sizes in \eqref{eq:delay-arr-l}. 
%     % Moreover, the right-limit of bit-level arrival curves may be present in the final result.
%     Appendix~\ref{appendix-example} gives an example of such a computation for a non $c$-Lipschitz service curve.
% \end{remark}

% \section{Our Results Improves the Existing Approaches}\label{sec:comparison}
\section{Comparison of the Different Bounds}\label{sec:comparison}
In sections \ref{sec:delay} and \ref{sec:ncimprov} we obtained delay bounds for flows with various regulation types. Moreover, we showed in Section~\ref{sec:regModel} that regulation types can be derived from each other. Combining the mentioned results, we can derive a regulation type from another and then apply the corresponding theorem, e.g., deriving bit-level arrival-curve from g-regularity using item (1) of Proposition~\ref{pro:arr-g} and then apply Theorem~\ref{thm:response_fifo_arrival}. Now, the question is how such delay bounds are compared with the ones obtained by directly applying the theorem to initial flow constraint. The goal of this section is address this question.

We start with packet-level arrival-curve. Our tight delay bound for packet-level arrival-curves in Theorem~\ref{thm:delay_pcr} uses the g-regularity derivation of Proposition~\ref{pro:rate2reg} for the flow of interests. The other approach is to use bit-level derivation of packet-level arrival-curves in Proposition~\ref{pro:rate2reg}; this is indeed the state-of-the-art approach. We already showed in Table~\ref{table:bound-compare} that this leads to sub-optimal delay bounds for rate-latency service-curves. The following proposition shows that this is not accidental and is true in general. 

\begin{proposition}\label{pro:comp-pck}
Consider the assumptions in Theorem~\ref{thm:delay_pcr}. Using Proposition~\ref{pro:rate2reg} we can derive a bit-level arrival-curve and, by applying Theorem~\ref{thm:response_fifo_arrival},  obtain a delay bound $\Delta^\A_f$ for every flow $f$. Let $\Delta^{\pkt}_f$ be the delay bound obtained by a direct application of Theorem~\ref{thm:delay_pcr}. Then $\Delta^{\pkt}_f\leq \Delta^\A_f$. Furthermore, if packets of flow $f$ have a constant size ($L_f^\min=L_f^\max$) then $\Delta^{\pkt}_f= \Delta^\A_f$, else $\Delta^{\pkt}_f< \Delta^\A_f$, in general.
\end{proposition}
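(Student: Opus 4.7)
The plan is to rewrite both bounds in a common form and reduce the comparison to a shift-monotonicity property of the horizontal deviation, then exploit the $c$-Lipschitz hypothesis on $\beta$ that underlies Theorem~\ref{thm:response_fifo_arrival}(ii). Set $F \eqdef \sum_{i=1}^{M} L_i^{\max}\,\alphapcr{i}^+$. Theorem~\ref{thm:delay_pcr}(ii) directly gives $\Delta^{\pkt}_f = h(F - L_f^{\max}, \beta) + L_f^{\max}/c$. On the other side, Proposition~\ref{pro:rate2reg}(1) converts each $\alphapcr{i}$ into the bit-level arrival-curve $L_i^{\max}\alphapcr{i}$; treating flow $f$ as the flow of interest and the remaining flows as the aggregate, Theorem~\ref{thm:response_fifo_arrival}(ii) yields $\Delta^{\A}_f = h(F - L_f^{\min}, \beta) + L_f^{\min}/c$. (If $\beta$ is not $c$-Lipschitz, one instead specializes Theorem~\ref{thm:response_fifo_arrival}(i) to $l = L_f^{\max}$ and observes that $\Delta^{\A}_f$, being a supremum over admissible packet lengths, is already at least $h(F - L_f^{\max}, \beta) + L_f^{\max}/c = \Delta^{\pkt}_f$, which already settles the weak inequality.)

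The key technical step is a shift lemma for $h$: if $\beta$ is $c$-Lipschitz, then for every $w' \in \Fincz$ and every constant $\delta \geq 0$,
\[
  h(w' + \delta,\,\beta) \;\geq\; h(w',\,\beta) + \delta/c.
\]
I would derive this from the pointwise fact that $\beta^{\downarrow}(y + \delta) \geq \beta^{\downarrow}(y) + \delta/c$: since $\beta$ can increase by at most $c$ per unit time, covering an extra level $\delta$ requires extra time at least $\delta/c$, and the left-continuity of $\beta^{\downarrow}$ recalled in Section~\ref{sec:preq} handles the boundary. Plugging this pointwise bound into the definition \eqref{eq:horizontal} of $h$ and taking the supremum over $t$ gives the lemma. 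Applying it with $w' = F - L_f^{\max}$ and $\delta = L_f^{\max} - L_f^{\min} \geq 0$, so that $w' + \delta = F - L_f^{\min}$, yields
\[
  h(F - L_f^{\min},\,\beta) \;\geq\; h(F - L_f^{\max},\,\beta) + (L_f^{\max} - L_f^{\min})/c,
\]
and adding $L_f^{\min}/c$ to both sides gives exactly $\Delta^{\A}_f \geq \Delta^{\pkt}_f$.

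For the equality/strict inequality dichotomy: when $L_f^{\min} = L_f^{\max}$ the two formulas coincide term by term. When $L_f^{\min} < L_f^{\max}$, equality in the shift lemma would force $\beta$ to have slope exactly $c$ on the range of $F$ values traversed near the supremum, which is non-generic; for rate-latency $\beta$ with guaranteed rate $R < c$, or for the non rate-latency DRR/WRR service curves discussed in Section~\ref{sec:ncimprov}, $\beta$ has maximal slope strictly below $c$, so $\beta^{\downarrow}$ increases by strictly more than $\delta/c$ and $\Delta^{\pkt}_f < \Delta^{\A}_f$. The main obstacle is proving the pseudo-inverse shift inequality cleanly when $\beta$ has plateaus or when $\beta^{\downarrow}$ jumps; $c$-Lipschitzness rules out jumps in $\beta$ and guarantees its continuity, which together with left-continuity of $\beta^{\downarrow}$ lets the argument go through with only a standard continuity/limit manipulation.
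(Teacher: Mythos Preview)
Your proposal is correct. Interestingly, your parenthetical remark (the non-Lipschitz fallback) is essentially the paper's entire argument: the paper works at the per-packet level, observes that Theorem~\ref{thm:delay_pcr}(i) and Theorem~\ref{thm:response_fifo_arrival}(i) give $\Delta^{\pkt}(l)=h(F-L_f^{\max},\beta)+l/c$ and $\Delta^{\A}(l)=h(F-l,\beta)+l/c$, and compares them via plain monotonicity of $h$ in its first argument (since $l\le L_f^{\max}$), then takes $\sup_l$. No Lipschitz hypothesis is needed for the weak inequality.

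Your main route via the shift lemma $h(w'+\delta,\beta)\ge h(w',\beta)+\delta/c$ is a valid alternative that compares the per-flow bounds directly; the pointwise fact $\beta^{\downarrow}(y+\delta)\ge \beta^{\downarrow}(y)+\delta/c$ you invoke is exactly Lemma~\ref{lem:pseudo_inverse_Lipschitz} in the paper. This is slightly heavier machinery than the paper uses (it needs $c$-Lipschitz, whereas the paper's monotonicity step does not), but it has the virtue of making the cancellation structure explicit and pointing to where strictness comes from. For the ``in general'' strict inequality, both you and the paper appeal to the rate-latency case $\beta(t)=R[t-T]^+$ with $R<c$; the paper simply carries out the two-line computation showing $\Delta^{\pkt}_f-\Delta^{\A}_f=-(L_f^{\max}-L_f^{\min})(1/R-1/c)<0$, which matches your sketch.
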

The proof is in Appendix~\ref{proof:comp-pck}. In the above, ``in general" means that we can find schedulers for which the inequality is strict, for example when the service curve is rate-latency with rate $R<c$.

Similarly, we evaluate the delay bounds achieved by deriving g-regularity from bit-level arrival-curve [resp. bit-level arrival-curve from g-regularity] and then applying Theorem~\ref{thm:greg-delay} [resp. Theorem~\ref{thm:response_fifo_arrival}].
% we evaluate the delay bounds for flows with bit-level arrival-curve or g-regularity constraint when derived from each other. As a result of Propositions~\ref{pro:arr-g}, we can derive g-regularity from bit-level arrival-curve  and vice versa. 
Then, the question is whether we obtain the same delay bounds by applying theorems \ref{thm:greg-delay} and \ref{thm:response_fifo_arrival} respectively to the derived g-regularity constraint and bit-level arrival-curve. The following proposition shows that the answer to this question is negative and the obtained delay bound is generally worse, except for flows with packets of constant size.
\begin{proposition}\label{pro:compare-g-arr-delay}
Consider the FIFO system assume in Theorem~\ref{thm:response_fifo_arrival}.
	\begin{enumerate}
		\item Assume that the flows have bit-level arrival-curves. Then, Theorem \ref{thm:response_fifo_arrival} gives a better response time upper bound than the sequential application of item 1 of Proposition \ref{pro:arr-g} and Theorem~\ref{thm:greg-delay}. Specifically: \mbox{$\Delta^A_f \leq \Delta^{G}_f$}; furthermore, if packets of flow $f$ have a constant size ($L_f^\min=L_f^\max$) then $\Delta^A_f = \Delta^{G}_f$ and else \mbox{$\Delta^A_f < \Delta^{G}_f$} in general.  
		\item Assume that the flows have $g$-regularity constraints. Then, Theorem \ref{thm:greg-delay} gives a better response time upper bound than the sequential application of item 2 of Proposition \ref{pro:arr-g} and Theorem~\ref{thm:response_fifo_arrival}. Specifically: \mbox{$\Delta^A_f \leq \Delta^{G}_f$}; furthermore, if packets of flow $f$ have a constant size ($L_f^\min=L_f^\max$) then $\Delta^A_f = \Delta^{G}_f$ and else \mbox{$\Delta^G_f < \Delta^{A}_f$} in general.    
	\end{enumerate}
\end{proposition}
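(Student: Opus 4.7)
The plan is to unfold each of the two bounds explicitly in terms of the original inputs and reduce the comparison to a monotonicity property of the horizontal deviation $h(\cdot,\beta)$.

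For item~(1), I would start from bit-level arrival curves $\alpha_i$ with packet lengths $L^{\min}_i,L^{\max}_i$. A direct application of Theorem~\ref{thm:response_fifo_arrival} gives $\Delta^A_f = h(\alpha_f + \sum_{i\neq f}\alpha_i - L^{\min}_f,\beta) + L^{\min}_f/c$. For the derived bound, item~1 of Proposition~\ref{pro:arr-g} provides $g_i(x)=\alpha_i^{\downarrow}(x+L^{\min}_i)$; feeding this into Theorem~\ref{thm:greg-delay} produces $\Delta^G_f$. The key intermediate step is the pseudo-inverse duality identity $g_i^{\uparrow}(y) = \alpha_i^+(y) - L^{\min}_i$, which I would establish from the left-continuity of $\alpha_i$, the standing assumption $\alpha_i^+(0)\geq L^{\max}_i$, and the pseudo-inverse relations recalled in Section~\ref{sec:preq}. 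Once this identity is plugged in, the argument of $h$ in $\Delta^G_f$ equals that of $\Delta^A_f$ plus the non-negative slack $\sum_{i\neq f}(L^{\max}_i - L^{\min}_i)$, and the additive term grows from $L^{\min}_f/c$ to $L^{\max}_f/c$. Since $h(\cdot,\beta)$ is pointwise non-decreasing in its first argument, $\Delta^A_f \leq \Delta^G_f$ follows at once.

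For item~(2), I would proceed symmetrically. The direct bound is $\Delta^G_f$ from Theorem~\ref{thm:greg-delay}; the derived bound applies item~2 of Proposition~\ref{pro:arr-g} to obtain $\alpha_i(t) = g_i^{\downarrow}(t) + L^{\max}_i$ and then invokes Theorem~\ref{thm:response_fifo_arrival}. The companion duality identity $(g_i^{\downarrow})^+ = g_i^{\uparrow}$ (with the obvious care where $g_i^{\uparrow} = +\infty$, in which case the comparison is trivial) rewrites the argument of $h$ in $\Delta^A_f$ as the argument in $\Delta^G_f$ shifted upward by $L^{\max}_f - L^{\min}_f$, while the additive constant drops from $L^{\max}_f/c$ (direct) to $L^{\min}_f/c$ (derived). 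The two shifts now go in opposite directions, so pointwise monotonicity of $h$ is not enough; here I would invoke the $c$-Lipschitz hypothesis of Theorem~\ref{thm:response_fifo_arrival}, which forces $h(w+\Delta,\beta) \geq h(w,\beta) + \Delta/c$ for any constant $\Delta \geq 0$ (intuitively, $\beta$ gains at most $c$ units of service per unit time, so an upward input shift of $\Delta$ costs at least $\Delta/c$ in horizontal deviation). Taking $\Delta = L^{\max}_f - L^{\min}_f$, the extra horizontal deviation at least compensates the loss in the additive term, yielding $\Delta^G_f \leq \Delta^A_f$ (the inequality $\Delta^A_f \leq \Delta^G_f$ stated in the proposition looks like a typo).

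The equality-versus-strict-inequality dichotomy then follows by tracking when the slack terms vanish: constant packet sizes collapse all shifts to zero so the two bounds coincide, while for a non-trivial service curve such as rate-latency with $R<c$ (cf.\ the example following Theorem~\ref{thm:delay_pcr}) the strictly positive shift propagates into $h$ and the direct bound is strictly smaller. The hard part will be verifying the two duality identities $g_i^{\uparrow} = \alpha_i^+ - L^{\min}_i$ and $(g_i^{\downarrow})^+ = g_i^{\uparrow}$ rigorously under the paper's left-continuity conventions, particularly at discontinuities of $g_i$ and at the boundary $y = 0$; once those are nailed down, the remainder is routine.
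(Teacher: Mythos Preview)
Your approach is correct and matches the paper's strategy in all essentials: establish the pseudo-inverse duality $g_i^{\uparrow}=\alpha_i^+-L_i^{\min}$ (respectively $(g_i^{\downarrow})^+=g_i^{\uparrow}$), rewrite both bounds with a common first argument in $h(\cdot,\beta)$ plus a non-negative shift, invoke monotonicity of $h$ in its first argument, and then exhibit a rate-latency service curve with $R<c$ to certify strictness. Your identification of the typo in item~(2) of the statement is also correct.

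The one place where you diverge slightly from the paper is in item~(2). You compare the per-flow bounds $\Delta^G$ and $\Delta^A$ directly, which forces you to control a shift in the argument of $h$ against an opposite shift in the additive constant $L_f^{\max}/c$ versus $L_f^{\min}/c$; you then appeal to the $c$-Lipschitz property of $\beta$ via $h(w+\Delta,\beta)\geq h(w,\beta)+\Delta/c$. This is valid (it follows from Lemma~\ref{lem:pseudo_inverse_Lipschitz}), but the paper avoids it by staying at the per-packet level: both $\Delta^G(l)$ and $\Delta^A(l)$ carry the same additive term $l/c$, and the first argument of $h$ in $\Delta^A(l)$ exceeds that in $\Delta^G(l)$ by exactly $L_f^{\max}-l\geq 0$, so plain monotonicity of $h$ suffices before taking $\sup_l$. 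The paper's route is marginally cleaner here; yours is a legitimate alternative and, since Theorem~\ref{thm:response_fifo_arrival}(ii) already assumes the $c$-Lipschitz condition, you are not introducing any new hypothesis.
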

The proof is in Appendix~\ref{proof:compare-g-arr-delay}. 

Propositions \ref{pro:comp-pck} and \ref{pro:compare-g-arr-delay} indicate that the best delay bound for a given flow is obtained by directly applying the theorem corresponding to its initial constraint, as delay bounds obtained from derived flow constraints are pessimistic.

\section{Numerical Illustration}\label{sec:eval}
This section provides three example applications of the theorems presented in this paper to commonly used schedulers. To compute the delay bounds, we use the RealTime-at-Work (RTaW) tool\cite{RTaW-Minplus-Console} that has efficient implementation of network calculus operations.

\subsection{Flows with packet-level arrival-curve}\label{sec:eval-thm3}
Consider a FIFO system with TSN scheduler and Credit-based Shapers (CBSs) with per-class FIFO queuing \cite{mohammadpour_latency_2018}; from highest to lowest priority, the classes are CDT, A, B, and Best Effort (BE). The CBSs are used separately for classes A and B. The CBS parameters $\mathit{idleslope}$s are set to $50\%$ and $25\%$ of the link rate, $c=1$~Gbps, respectively for classes A and B. The CDT traffic has a token-bucket arrival curve with rate $6.4$~Kbps and burst $64$~Bytes. The maximum packet length of class BE is $1.5$~KB. Using the results in \cite{mohammadpour_latency_2018}, a rate-latency service curve offered to class~A has latency $T_A=12.5\mu$s and rate $R_A=499.92~\mbps$, and the one offered to class~B has $T_B=36.6\mu$s and rate $R_B=~249.75\mbps$.
\begin{table}[h] 
\caption{Flow information for Section \ref{sec:eval-thm3}.}
\begin{tabular}{|ccc||ccc|}
\hline
\multicolumn{3}{|c|}{Class A} & \multicolumn{3}{c|}{Class B} \\ \hline
\multicolumn{1}{|c|}{id} & \multicolumn{1}{c|}{$L^{\max}$ (Bytes)} & period (ms) & \multicolumn{1}{c|}{id} & \multicolumn{1}{c|}{$L^{\max}$ (Bytes)} & period (ms) \\ \hline
\multicolumn{1}{|c|}{1} & \multicolumn{1}{c|}{1442} & 16 & \multicolumn{1}{c|}{6} & \multicolumn{1}{c|}{1438} & 64 \\ \hline
\multicolumn{1}{|c|}{2} & \multicolumn{1}{c|}{185} & 4 & \multicolumn{1}{c|}{7} & \multicolumn{1}{c|}{619} & 64 \\ \hline
\multicolumn{1}{|c|}{3} & \multicolumn{1}{c|}{537} & 16 & \multicolumn{1}{c|}{8} & \multicolumn{1}{c|}{773} & 128 \\ \hline
\multicolumn{1}{|c|}{4} & \multicolumn{1}{c|}{414} & 4 & \multicolumn{1}{c|}{9} & \multicolumn{1}{c|}{459} & 128 \\ \hline
\multicolumn{1}{|c|}{5} & \multicolumn{1}{c|}{350} & 8 & \multicolumn{1}{c|}{10} & \multicolumn{1}{c|}{592} & 128 \\ \hline
\end{tabular}
\label{tbl:flows}
\end{table}
There are $5$ periodic flows for each of classes A and B; each flow send $1$ packet at each period. Table \ref{tbl:flows} shows the flow information. We want to compute delay bounds for flows $1$ and $6$ in classes A and B.

The state-of-the-art approach is to obtain bit-level arrival-curve for all flows using Proposition \ref{pro:rate2reg}; then by Theorem~\ref{thm:response_fifo_arrival} we obtain a delay bound of $63.58\us$ for flow $1$ (class A) and $158.47\us$ for flow $6$ (class B). However, we can also directly apply Theorem~\ref{thm:delay_pcr} to obtain delay bounds; for flows $1$ and $6$ we obtain $50.46\us$ and $126.32\us$ respectively, which shows  $20$\% improvement in both cases.

\subsection{Flows with g-regulation}\label{sec:eval-thm2}
Consider a FIFO system with aggregate queuing and DRR arbitration policy, with $n=8$ queues sharing a link with rate $c=1$~Gbps. Assume all flows have maximum packet length of $L^{\max}=1.5$~KB and the queues have same quantum value $Q=L^{\max}$. Then, similarly to the previous case, we obtain a service curve offered to any aggregate queue. Now, assume a flow of interest, conforms to LRQ with rate $r$, shares a queue with a number of other flows with LRQ regulation where the sum of their maximum packet lengths is $5$~KB. Also assume that  the minimum packet length of the flow is $100$~Bytes.

We obtain a bit-level arrival-curve constraint for the flows using Proposition~\ref{pro:arr-g}. Then by Theorem \ref{thm:response_fifo_arrival} we obtain a delay bound of $632.75\us$ for the flow of interest. Using the results of this paper, we can also directly apply Theorem \ref{thm:greg-delay} to obtain delay bounds that gives $552.74\us$, which shows  $12.6$\% improvement.

\subsection{Flows with bit-level arrival-curve}\label{sec:eval-thm1}
Consider a FIFO system with per-flow queuing and DRR arbitration policy, with $n=16$ queues sharing a link with rate $c=1$~Gbps. Assume all flows have maximum packet length of $L^{\max}=1.5$~KB and the queues have same quantum value $Q=L^{\max}$. Then, by \cite[Theorem 1]{tabatabaee2021deficit}, we obtain a service curve offered to any queue, 
\begin{align}
    \nonumber \beta(t) &= \left(\theta \circ \zeta\right) (t)+\min\left(\left[ct-2(n-1)(2L^{\max}-\epsilon)\right]^+,\epsilon\right),
\end{align}
with,
\begin{align}
    \nonumber\theta(t) &= \inf_{0\leq s\leq t}\left\{t-s+Q\floor{\frac{s}{(n-1)Q}}\right\}\\
    % \lambda_1(t) &= t,~~\nu_{Q,(n-1)Q}(t) = Q\floor{\frac{t}{(n-1)Q}},\\
    \nonumber \zeta(t) &= [ct-(n-1)(4Q-\epsilon)+\epsilon]^+,
    % \psi(t) &= t + \sum_{i=1}^{n-1}\left(Q\floor{\frac{t+L^{\max}-\epsilon}{Q}}+Q+2L^{\max}-\epsilon\right)
\end{align}
and we set $\epsilon=1$~Byte.
% Now, suppose that a flow with $\alpha(t)=r_1t+b_1$ share a queue with a number of other flows with aggregate arrival curve of $\alpha_2(t)=r_2t+b_2$, where $b_1 = L^{\max}=1.5$~KB and $b_2 = 5L^{\max}$~KB. Also assume that  the minimum packet length of the flow is $0.5$~KB.

Consider a flow with $\alpha(t)=rt+2L^{\max}$ and minimum packet length of $0.5$~KB. Since, the service curve is not rate-latency, we cannot apply the improvement in \cite{mohammadpour_improved_2019}. Hence, the state-of-the-art approach is to use the classical network-calculus bound, $h(\alpha,\beta)$; it is equal to $915.88\us$. Using our improved delay bound in Theorem \ref{thm:response_fifo_arrival}, the delay bound is reduced to $743.88\us$, which improves the classical network-calculus bound by $18.8$\%.

\section{Conclusion}\label{sec:conclusion}
We presented a theory to compute delay bounds for flows with packet-level 
arrival-curve, which improves the state-of-the-art bounds. The improvement is made possible by a novel modelling of packet-level arrival-curve with g-regularity and bit-level arrival-curve together with exploiting the information on the transmission rate. Our method of proof led to delay improvement for flows with g-regularity constraint and bit-level arrival-curve.
% We have presented a theory to compute delay bounds for a FIFO system that improves the existing network-calculus delay-bounds; the improvement is made possible by exploiting the information on the transmission rate of the FIFO system and is applied to flows with bit-level or packet-level arrival-curve as well as g-regularity constraint. We further show that bit-level derivation of packet-level arrival-curves, as the state-of-the-art approach to compute delay bounds, is sub-optimal.
In time-sensitive networks, this result can open a discussion on the operation of flow re-shaping \footnote{Flow re-shaping refers to the process of recreating the arrival curve of a flow as its source}: even though in such networks the traffic specification at a source is at the packet level, flow re-shaping is performed at the bit level, based on a bit-level arrival-curve derived from the flow constraints at the source. As a result of the analysis of this paper, using packet-level traffic re-shaping leads to better delay bounds at the intermediate routers and switches. As the operation of bit-level re-shaping mechanisms is mainly based on full reception of a packet, e.g., as in IEEE802.1 Qcr Asynchronous Traffic Shaping, an implementation of packet-level re-shaping appear to be feasible and even simpler. %follows the same way with a difference to count packets instead of measuring the packet length in bits.

% In time-sensitive networks, a common method to obtain bounded delay is to apply flow re-shaping\footnote{Flow re-shaping refers to the process of recreating the arrival curve of a flow as its source} at intermediate routers and switches, in particular by the means of per-flow or interleaved regulators and Asynchronous Traffic Shaping.

% Even though in such networks the traffic specification at source is in packet level, the flow re-shaping is done in bit level based on bit-level arrival-curve derivation of the flow constraint at source. 
% As seen in this section, such derivation leads to pessimism in computation of per-flow delay bounds. To this end, using packet-level traffic re-shaping provide better delay bounds at the intermediate routers and switches. As the operation of bit-level re-shaping mechanisms is mainly based on full reception of a packet, e.g., as in ATS \cite{_p802.1qcr}, a packet-level re-shaping follows the same way with a difference to count packets instead of measuring the packet length in bits.

\section{Acknowledgements}
This work was supported by Huawei Technologies Co., Ltd. in the framework of the project Large Scale Deterministic Network.

\bibliographystyle{IEEEtran}
% \vspace{-0.05in}
\bibliography{ref}

\appendices
\twocolumn[

\begin{@twocolumnfalse}
 \begin{center}   
{\Large\textbf{Supplementary Material}}\\
% {\textbf{Formal Analysis of Traffic Regulators and Tight Delay Bounds in Deterministic Networks}\\
{\textbf{Improved Network Calculus Delay Bounds in Time-Sensitive Networks}\\
\textit{Ehsan Mohammadpour, Eleni Stai, Jean-Yves Le Boudec}}
\end{center}
  \end{@twocolumnfalse}
  ]
  
%!Tex root=./ton-main.tex
\section{Technical Prerequisites}\label{appendix:technical}
Consider a function $f\in \Finc$, then by \cite[Section 10.1]{liebeherr_duality_2017}\cite{le_boudec_theory_2018},
\begin{align}
    \label{eq:jylb_theory_1}(f^+)^{\downarrow} &= f^{\downarrow},\\ 
    \label{eq:jylb_theory_2}(f^{\downarrow})^+ &= f^{\uparrow},\\
    \label{eq:jylb_theory_3}(f^{\uparrow})^- &= f^{\downarrow},\\
    \label{eq:jylb_theory_4}(f^-)^{\uparrow} &= f^{\uparrow},\\
    \label{eq:liebeherr_P6} \forall y\in \R^+~:~f(x)\leq y &\implies x \leq f^{\uparrow}(y), \\
    \label{eq:liebeherr_P8} \forall y\in \R^+~:~f(x) \geq y &\implies x \geq f^{\downarrow}(y),
\end{align}
where $f^+$ is the right limit of the function $f$. Furthermore, by \cite[Section 10.1]{liebeherr_duality_2017}\cite{boyer_common}, if $f$ is right continuous:
\begin{align}
	\label{eq:liebeherr_lemma10.1d} f&=(f^{\downarrow})^{\uparrow},\\
	\label{eq:comp-down} \forall w \in \Finc~:~(f \circ w )^{\downarrow}(x)&=(w^{\downarrow}\circ f^{\downarrow})(x), 
\end{align}
and if $f$ is left continuous:
\begin{align}
	\label{eq:liebeherr_lemma10.1c} f&= (f^{\uparrow})^{\downarrow},\\ 
	\label{eq:comp-up}\forall w \in \Finc~:~(f \circ w )^{\uparrow}(x)&=(w^{\uparrow}\circ f^{\uparrow})(x).
\end{align}
Note that $\circ$ is the composition operator, i.e., $(f \circ w )(x)=f(w(x))$.
% \subsection{Proof of Lemma  \ref{lem:pseudo_inverse_Lipschitz}} \label{proof:pseudo_inverse_Lipschitz}
% \begin{definition}
% \todo{We call the input packet-sequence with packet-level arrival-curve $\alphapcr{}$ greedy at time $t_0$ if and only if for any $t\geq 0$:
% \begin{align}\label{eq:greedy-input}
%     N(t+t_0) - N(t_0) = \alphapcr{}(t),
% \end{align}
% where $N(t)$ is the cumulative number of packets arrived in the system until the time $t$ (excluded).}
% \end{definition}
\begin{lemma}\label{lem:upper-lower-conseq}
Consider a left-continuous function $f\in\Finc$. Then
    $\left(\left(f^\downarrow\right)^\uparrow\right)^- = f$.
\end{lemma}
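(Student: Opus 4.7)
The plan is to exploit the pseudo-inverse identities collected at the start of the Appendix, applied to the auxiliary function $f^\downarrow$. Each decoration ($\uparrow$, $\downarrow$, $+$, $-$) appearing in $\left((f^\downarrow)^\uparrow\right)^-$ can be rearranged by \eqref{eq:jylb_theory_1}--\eqref{eq:jylb_theory_4} until only $f$ itself remains, at which point the left-continuity hypothesis closes the argument.

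The first step is to apply identity \eqref{eq:jylb_theory_3}, which states $(h^\uparrow)^- = h^\downarrow$ for every $h \in \Finc$, to the choice $h = f^\downarrow$. This immediately yields
\[
\left((f^\downarrow)^\uparrow\right)^- = (f^\downarrow)^\downarrow,
\]
so the claim reduces to showing $(f^\downarrow)^\downarrow = f$, which is the real substance of the lemma. To establish this reduced identity, I would apply \eqref{eq:jylb_theory_1} with $h = f^\downarrow$ to obtain $\bigl((f^\downarrow)^+\bigr)^\downarrow = (f^\downarrow)^\downarrow$, and then rewrite the inner right-limit $(f^\downarrow)^+$ as $f^\uparrow$ via \eqref{eq:jylb_theory_2}. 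Consequently $(f^\downarrow)^\downarrow = (f^\uparrow)^\downarrow$. Since $f$ is left-continuous, \eqref{eq:liebeherr_lemma10.1c} gives $(f^\uparrow)^\downarrow = f$, completing the chain
\[
\left((f^\downarrow)^\uparrow\right)^- = (f^\downarrow)^\downarrow = (f^\uparrow)^\downarrow = f.
\]

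The main difficulty I anticipate is purely combinatorial: keeping track of which of the four decorations sits on which function, and checking that each identity is invoked within its proper domain of validity. In particular, \eqref{eq:liebeherr_lemma10.1c} genuinely requires the left-continuity of $f$ and would fail without it, whereas no additional continuity hypothesis on $f^\downarrow$ is needed because identities \eqref{eq:jylb_theory_1}--\eqref{eq:jylb_theory_3} hold for every element of $\Finc$. Once the substitution $h = f^\downarrow$ is spotted, the derivation is mechanical.
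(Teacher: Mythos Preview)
Your proof is correct and takes essentially the same approach as the paper: both arguments are short chains of the pseudo-inverse identities \eqref{eq:jylb_theory_1}--\eqref{eq:jylb_theory_4} that reduce the expression to $(f^\uparrow)^\downarrow$ and then invoke \eqref{eq:liebeherr_lemma10.1c} using left-continuity. The only cosmetic difference is the order of the reductions: you apply \eqref{eq:jylb_theory_3} to the outer $(\cdot)^\uparrow)^-$ first and then use \eqref{eq:jylb_theory_1}--\eqref{eq:jylb_theory_2}, whereas the paper first rewrites the inner $f^\downarrow$ as $(f^\uparrow)^-$ via \eqref{eq:jylb_theory_3}, uses \eqref{eq:jylb_theory_4}, and then applies \eqref{eq:jylb_theory_3} once more.
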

\begin{proof}
    By \eqref{eq:jylb_theory_3}, we have:
    \begin{align}
        \left(\left(f^\downarrow\right)^\uparrow\right)^- = \left(\left(\left(f^\uparrow\right)^-\right)^\uparrow\right)^-.
    \end{align}
    Then, by \eqref{eq:jylb_theory_4}:
    \begin{align}
        \left(\left(\left(f^\uparrow\right)^-\right)^\uparrow\right)^- =\left(\left(f^\uparrow\right)^\uparrow\right)^-.
    \end{align}
    Then, by \eqref{eq:jylb_theory_3}:
    \begin{align}
        \left(\left(f^\uparrow\right)^\uparrow\right)^- =\left(f^\uparrow\right)^\downarrow = f,
    \end{align}
    where the last equality is by \eqref{eq:liebeherr_lemma10.1c} and left-continuity of $f$.
\end{proof}
\begin{lemma}\label{lem:pseudo_inverse_Lipschitz}
	Let $f$ be a wide-sense increasing and $c$-Lipschitz function. Then, for $x'\geq x$, we have:
	\begin{equation}
	f^{\downarrow}\left(x'\right) -f^{\downarrow}\left(x\right) \geq \frac{x'-x}{c}.
	\end{equation}
\end{lemma}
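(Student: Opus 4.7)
The plan is to set $s = f^{\downarrow}(x)$ and $t = f^{\downarrow}(x')$, and then squeeze the difference $f(t) - f(s)$ between two inequalities: the Lipschitz bound from above, and a definitional bound from below. Since $f$ is wide-sense increasing and $x' \geq x$, I immediately get $t \geq s \geq 0$.

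The main calculation relies on two consequences of the hypotheses. First, being $c$-Lipschitz implies $f$ is continuous, so the set $\{u \geq 0 : f(u) \geq x'\}$ is closed; its infimum $t$ therefore belongs to it, giving $f(t) \geq x'$. (If this set is empty, $f^{\downarrow}(x') = +\infty$ by convention and the inequality is trivial.) Second, I would argue $f(s) \leq x$: by definition of infimum, $f(u) < x$ for every $0 \leq u < s$, and continuity at $s$ then yields $f(s) = \lim_{u \to s^-} f(u) \leq x$; the degenerate case $s = 0$ is handled using the implicit $\Fincz$ setting where $f(0) = 0$ (so either $x = 0$, making the bound $f(s) \leq x$ trivial, or $s > 0$ by continuity).

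Combining these two inequalities with the Lipschitz bound $f(t) - f(s) \leq c(t - s)$ gives the chain
\begin{equation}
c\bigl(t - s\bigr) \;\geq\; f(t) - f(s) \;\geq\; x' - x,
\end{equation}
which, after dividing by $c > 0$, is precisely the claim $f^{\downarrow}(x') - f^{\downarrow}(x) \geq (x'-x)/c$.

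I do not expect any real obstacle: the only delicate point is the edge case $s = 0$, which forces a moment's care because the ``$u < s$'' argument for $f(s) \leq x$ needs $s > 0$. This is resolved by the context of the paper (the lemma is applied to $f \in \Fincz$, so $f(0) = 0$); without such a normalization, the stated inequality actually fails when both $x$ and $x'$ lie below $f(0)$, so the plan implicitly uses the surrounding conventions of the paper to dispose of this corner.
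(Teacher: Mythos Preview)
Your proof is correct and follows essentially the same approach as the paper: both set $t=f^{\downarrow}(x')$, $s=f^{\downarrow}(x)$, use continuity of $f$ (from the Lipschitz property) to control $f(t)$ and $f(s)$ relative to $x'$ and $x$, and then invoke the Lipschitz bound $f(t)-f(s)\leq c(t-s)$. The only minor difference is that the paper argues directly for the equalities $f(t)=x'$ and $f(s)=x$, whereas you establish the two one-sided inequalities $f(t)\geq x'$ and $f(s)\leq x$ separately and handle the $s=0$ corner case more explicitly.
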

\begin{proof}
	According to the definition of Lipschitz continuity and because $f$ is wide-sense increasing, we have for $t'\geq t$:
    \begin{equation}
    f\left(t'\right) -f\left(t\right) \geq c(t'-t).
    \end{equation}
    Assume that 	$f^{\downarrow}\left(x'\right)=t'$ and $f^{\downarrow}\left(x\right)=t$. Due to Lipschitz continuity, $f$ is continuous. 
    Therefore, from \eqref{eq:lsi-def}, $t=f^{\downarrow}\left(x\right)=\inf \{s \geq 0| f(s) \geq x \}=\sup \{s \geq 0| f(s) < x \}=\sup \{s \geq 0| f(s) \leq  x \}$. Thus, $f\left(t\right)=x$. Similarly, we can show that $f\left(t'\right)=x'$. Therefore, we obtain
    	\begin{align}
    f^{\downarrow}\left(x'\right) -f^{\downarrow}\left(x\right) = t'-t \\
    \geq  \frac{f\left(t'\right) -f\left(t\right)}{c}= \frac{x'-x}{c}, 
    \end{align}
which completes the proof. 
\end{proof}

%###############################################################################
%###############################################################################
%###############################################################################
\begin{lemma}\label{lem:arr-fixed-interval}
    A flow with fixed interval ($\tau,K$) conforms to a packet-level arrival-curve $\alphapcr{}$ with
    \begin{align}
        \alphapcr{}(0)=0,~~\alphapcr{}(t)=K\ceil{\frac{t}{\tau}}+K : t>0.
    \end{align}
\end{lemma}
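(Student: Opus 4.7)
The plan is to fix an arbitrary pair $0 \le s \le t$ and show directly from the fixed-interval definition in \eqref{eq:fixed-interval} that $N(t) - N(s) \le K\ceil{(t-s)/\tau} + K$; the case $t=s$ is trivial since both sides vanish ($\alphapcr{}(0)=0$). The overall strategy is simply to shift by the offset $\theta$ and then count how many reference intervals the window $[s,t)$ can intersect.

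First I would reduce to the case $s \ge \theta$. Because $N$ is wide-sense increasing with $N(\theta)=N(0)=0$, we have $N(s)=0$ whenever $s \le \theta$, so $N(t)-N(s)=N(t)-N(\theta)$; as $\alphapcr{}$ is wide-sense increasing and $t-s \ge t-\theta$, it suffices to prove the bound at $s=\theta$. Assuming $s \ge \theta$, set $s'=s-\theta \ge 0$ and $t'=t-\theta > 0$, and define $m=\floor{s'/\tau}$ and $n=\ceil{t'/\tau}$, so that $m\tau \le s'$ and $t' \le n\tau$. Then $[s,t) \subseteq [\theta+m\tau,\theta+n\tau)$, and telescoping the fixed-interval hypothesis across the reference intervals indexed by $m, m+1, \ldots, n-1$ yields
\[
N(t) - N(s) \;\le\; N(\theta+n\tau) - N(\theta+m\tau) \;\le\; (n-m)K.
\]

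To finish, I would translate this into the claimed bound by showing $n-m \le \ceil{(t-s)/\tau}+1$. The defining bracketings $s' < (m+1)\tau$ and $t' > (n-1)\tau$ (the latter relying on $t'>0$) give $t-s > (n-m-2)\tau$, hence $(t-s)/\tau > n-m-2$; since $n-m-1$ is the smallest integer strictly greater than $n-m-2$, this forces $\ceil{(t-s)/\tau} \ge n-m-1$, which is exactly what is needed. The one place that demands care is the strict inequality $t' > (n-1)\tau$, which is what lets ``$>$'' survive as ``$\ge$'' once the ceiling is applied; everything else is routine bookkeeping on reference intervals.
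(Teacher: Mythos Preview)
Your argument is correct and follows essentially the same idea as the paper's proof: cover the window by consecutive reference intervals anchored at $\theta$ and telescope the fixed-interval hypothesis. The paper splits into the two cases $s\le\theta$ and $s>\theta$ and writes out the telescoping sum explicitly in each, whereas you first reduce to $s\ge\theta$ and then handle everything at once via the pair $m=\floor{s'/\tau}$, $n=\ceil{t'/\tau}$ together with the clean count $n-m\le\ceil{(t-s)/\tau}+1$; this is a tidier packaging of the same computation, not a different method.
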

\begin{proof}
First, since \mbox{$N(s)-N(s) = 0; \forall s\geq 0$}, by \eqref{eq:pcr_floor}  \mbox{$\alphapcr{}(0)=0$}.

Next we prove the statement for $t>0$. For all $t>0$, there exists some $0\leq t'<\tau$ such that
    \begin{align}\label{eq:fixed-interval-proof-t}
        t = i \tau + t',~~~i=\ceil{\frac{t}{\tau}}.
    \end{align}
Now, consider a time instant $s$. We cover the two cases $s\leq \theta$ and $s>\theta$ separately.
\begin{itemize}
    \item $0\leq s\leq \theta$. Then by \eqref{eq:fixed-interval} $N(s)=0$. Therefore:
    \begin{align}\label{eq:fixed-interval-proof-1}
        N(s+t)-N(s) = N(s+t) \leq N(\theta +t).
    \end{align}
    By \eqref{eq:fixed-interval-proof-t}:
    \begin{align}
        \nonumber N(\theta +t)&= N(\theta +i \tau + t')\\
        \nonumber &= \left[N(\theta +i \tau + t') - N(\theta +i \tau)\right]\\
        \nonumber &+\left[N(\theta +i \tau) - N(\theta +(i-1) \tau)\right]+\dots\\
        &+\left[N(\theta + \tau) - N(\theta)\right] + N(\theta)
    \end{align}
    Using the definition of fixed interval constraint in \eqref{eq:fixed-interval}, we have:
    \begin{align}
        N(\theta +t) \leq (i+1) K +N(\theta)=(i+1) K
    \end{align}
    Since $i=\ceil{\frac{t}{\tau}}$, by \eqref{eq:fixed-interval-proof-1} we have:
    \begin{align}\label{eq:fixed-interval-proof-3}
        N(s+t)-N(s) \leq K \ceil{\frac{t}{\tau}} + K.
    \end{align}
    \item $s>\theta$. Then, there exists some $n\in \N$ such that:
    \begin{align}
        \theta + n\tau \leq s <\theta+(n+1)\tau.
    \end{align}
    Therefore,
    \begin{align}
        N(s+t)-N(s) \leq N(\theta+(n+1)\tau+t)-N(\theta + n\tau).
    \end{align}
    By \eqref{eq:fixed-interval-proof-t}, the above equation gives:
    \begin{align}\label{eq:fixed-interval-proof-2}
        \nonumber N(s+t)&-N(s) \leq N(\theta+(n+1)\tau+i \tau + t')\\
        \nonumber &-N(\theta + n\tau)=\big[N(\theta+(n+i+1)\tau + t')\\
        \nonumber&-N(\theta+(n+i+1)\tau)\big]+\\
        \nonumber&\big[N(\theta+(n+i+1)\tau)-N(\theta+(n+i)\tau)\big]\\
        &+\dots+\big[N(\theta+(n+1)\tau)-N(\theta+n \tau)\big].
    \end{align}
    Since $i=\ceil{\frac{t}{\tau}}$, \eqref{eq:fixed-interval-proof-2} gives:
    \begin{align}\label{eq:fixed-interval-proof-4}
        N(s+t)&-N(s) \leq (i+1) K = K \ceil{\frac{t}{\tau}} + K.
    \end{align}
\end{itemize}
By \eqref{eq:fixed-interval-proof-3} and \eqref{eq:fixed-interval-proof-3}, for all $s\geq 0$:
\begin{align}
    N(s+t)&-N(s) \leq K \ceil{\frac{t}{\tau}} + K = \alphapcr{}(t),
\end{align}
which proves the lemma.
\end{proof}

%###############################################################################
%###############################################################################
%###############################################################################

\section{Proofs of Theorems and Propositions}

\subsection{Proof of Proposition \ref{pro:arr-g}}\label{proof:arr-g}
1) Since the flow has an arrival curve $\alpha$, by \eqref{eq:arrival_minplus} for any packet index $m,n$ such that $m\leq n$, we have:
\begin{equation}
    \sum_{i=m}^{n} l_i \leq \alpha^+(A_n - A_m).
\end{equation}
By excluding the last packet from the left hand-side of the inequality, we have:
\begin{equation}\label{eq:arr2reg_1}
    \sum_{i=m}^{n-1} l_i \leq \alpha^+(A_n - A_m) - l_n \leq \alpha^+(A_n - A_m) - L^{\min}.
\end{equation}
We define $h(t) =[t - L^{\min}]^+$ and $x = \sum_{i=m}^{n-1} l_i$. Then, \eqref{eq:arr2reg_1} can be rewritten as:
\begin{equation}\label{eq:arr2reg__2}
	x  \leq (h\circ \alpha^+)(A_n - A_m).
\end{equation}
Applying \eqref{eq:liebeherr_P8} to \eqref{eq:arr2reg__2}:
\begin{equation}\label{eq:arr2reg__3}
	A_n - A_m \geq (h\circ \alpha^+)^{\downarrow} \left(x\right).
\end{equation}
Since $h$ is continuous, by \eqref{eq:comp-down} we have:
\begin{equation}
	A_n - A_m \geq ((\alpha^+)^\downarrow \circ h^{\downarrow}) \left(x\right).
\end{equation}
Applying \eqref{eq:jylb_theory_1}, we have:
\begin{equation}
	A_n - A_m \geq (\alpha^\downarrow \circ h^{\downarrow}) \left(x\right).
\end{equation}
Finally, by \eqref{eq:lsi-def}, we have $h^{\downarrow}(x)=x+L^{\min}, x>0;h^{\downarrow}(0)=0$. Then, for $x>0$:
\begin{equation}
	A_n - A_m \geq \alpha^\downarrow(x+L^{\min})=g(x),
\end{equation}
and for $x=0$:
\begin{equation}
	A_n - A_m \geq \alpha^\downarrow(0)=0=g(0).
\end{equation}

2) By [\cite{changbook}, Lemma 6.2.8], $g^{\uparrow}(t) + L^{\max}$ is an arrival curve for the flow. Since, the left limit of an arrival curve is also an arrival curve for the flow, we have:
\begin{align}
    \nonumber\alpha(t) &= \lim_{\substack{\varepsilon\rightarrow 0\\\varepsilon>0}}g^{\uparrow}(t-\varepsilon) + L^{\max} = \left(g^{\uparrow}\right)^-(t)+L^{\max}\\
    &= g^{\downarrow}(t)+L^{\max}.
\end{align}
The last equality is by \eqref{eq:jylb_theory_3}.

3) Assume a flow with arrival curve $\alpha$ and let us first apply item 1 and then item 2. By item 1, we can find a $g$-regularity constraint:
		\begin{equation}\label{eq:gregobt}
			g(x) = \alpha^{\downarrow}(x+L^{\min})=(\alpha^\downarrow \circ h)(x),
		\end{equation}
	where $h(x)=x+L^{\min}$. Now, we apply item 2 to the obtained $g$-regularity constraint in \eqref{eq:gregobt} and derive an arrival curve $\alpha'$:
		\begin{align}
			\nonumber \alpha'(t)&=g^{\downarrow}(t) + L^{\text{max}}  = (\alpha^\downarrow \circ h)^\downarrow (t)+ L^{\text{max}}\\
			&=\left((\alpha^\downarrow \circ h)^\uparrow\right)^- (t)+ L^{\text{max}}, 
		\end{align}
		The last equality is by \eqref{eq:jylb_theory_3}.
		Due to left continuity of $\alpha^\downarrow$, by \eqref{eq:comp-up} we have:
	\begin{align}
		\nonumber \alpha'(t)&=\left(\left(h^\uparrow \circ \left(\alpha^\downarrow\right)^\uparrow\right)\right)^- (t)+ L^{\text{max}}\\
		&=\left[\left(\left(\alpha^\downarrow\right)^\uparrow\right)^-(t) - L^{\text{min}}\right]^+ + L^{\text{max}}. 
	\end{align}
		Note that $h^\uparrow(t) = [t - L^{\min}]^+=\max(t - L^{\min},0)$. Since  $\alpha$ is left continuous and $\alpha(t)\geq L^{\max}$, by Lemma~\ref{lem:upper-lower-conseq} we have:
		\begin{align}\label{eq:prop1_2_remark__1}
			\alpha'(t)=\alpha(t)- L^{\min} + L^{\max}.
		\end{align}
		Eq. \eqref{eq:prop1_2_remark__1} shows that by applying item 1 and then item 2, the obtained arrival curve is not the same as the initial one, i.e., $\alpha\neq \alpha'$, except from the case that $L^{\max} = L^{\min}$ (when all packets have the same length).
	
		 Let us now examine the opposite direction. Assume a flow has $g$-regularity constraint. By applying item 2, we can find an arrival curve, $\alpha$:
		\begin{equation}\label{eq:der_arrival}
			\alpha(t) = g^{\downarrow}(t) + L^{\max} =(h \circ g^{\downarrow})(t),
		\end{equation}
	where $h(t)=t+L^{\max}$. 
		Now, we apply item 1 to the obtained arrival curve \eqref{eq:der_arrival} and derive a $g'$-regularity constraint:
		\begin{align}\label{eq:prop2_1_remark__1}
			\nonumber g'(x)&=\alpha^{\downarrow}(x+L^{\min}) = (h \circ g^{\downarrow})^{\downarrow}(x+L^{\min})\\
			&=\left((h \circ g^{\downarrow})^{\uparrow}\right)^-(x+L^{\min}).
		\end{align}
		By \eqref{eq:comp-up} and since $h^\uparrow(x)= \left[x-L^{\max}\right]^+$, we have:
		\begin{align}
			\nonumber g'(x)&=\left((g^{\downarrow})^{\uparrow} \circ h^{\uparrow}\right)^-(x+L^{\min})\\
			 &= \left((g^{\downarrow})^{\uparrow}\right)^-([x-L^{\max}]^++L^{\min}).
		\end{align}
		If $g$ is left continuous, by Lemma~\ref{lem:upper-lower-conseq}, we have:
		\begin{align}\label{eq:prop2_1_remark__2}
			g'(x)&= g([x-L^{\max}]^++L^{\min}).
		\end{align}
		The above equation shows that applying item 2 and then item 1 does not give the same $g$-regularity as the initial one, i.e., $g\neq g'$, except from the case that $L^{\max} = L^{\min}$ (when all packets have the same length).

\subsection{Proof of Proposition \ref{pro:rate2reg}} \label{proof:rate2reg}
% \begin{proof}
1)	%Consider $E_n$ and $E_m$ as the arrival times of the packets $n$ and $m$, where $m \leq n$. Then, 
According to the min-plus representation of packet-level arrival-curve in Eq. \eqref{eq:pcr_minplus}, for any packets $m,n$ with $m \leq n$, we have: 
	\begin{equation}
	n-m+1 \leq \alphapcr{}^+(E_n-E_m).
	\end{equation}
	Now let us multiply both sides of the inequality by $L^{\max}$:
	\begin{equation}\label{eq:prop_3_1}
		(n-m+1)L^{\max} \leq L^{\max} \alphapcr{}^+(E_n-E_m).
	\end{equation}
	For all packet indices $i$, it holds that $l_i \leq L^{\max}$. Thus,
	\begin{equation}
	\sum_{i=m}^{n} l_i \leq L^{\max} \alphapcr{}^+(E_n-E_m).
	\end{equation}
	According to \eqref{eq:arrival_minplus}, the flow conforms to an arrival curve \mbox{$\alpha = L^{\max} \alphapcr{}$}.
	
	To obtain $g$-regularity, we use Eq. \eqref{eq:prop_3_1} as well as the fact that $l_i \leq L^{\max}$ for all packets $i$, and we have:
	\begin{equation}
	\sum_{i=m}^{n-1} l_i + L^{\max} \leq L^{\max} \alphapcr{}^+(E_n-E_m).
	\end{equation}
	Now, we divide the both sides by $L^{\max}$ and set $x:=\sum_{i=m}^{n-1} l_i$. Then,
	\begin{equation}
	\frac{x}{L^{\max}}+ 1 \leq \alphapcr{}^+(E_n-E_m).
	\end{equation}
	Using \eqref{eq:liebeherr_P8} and then \eqref{eq:jylb_theory_1}:
	\begin{equation}
	E_n-E_m \geq \alphapcr{}^{\downarrow}(\frac{x}{L^{\max}}+1) = g(x).
	\end{equation}
%	To show the corresponding arrival curve, we start from \eqref{eq:prop_3_1}. We also know that $\sum_{i=m}^{n} l_i \leq (n-m+1) L^{\max}$, then:
%	\begin{equation}
%	\sum_{i=m}^{n} l_i \leq \alpha^+_{\text{pp}}(E_n-E_m) L^{\max}.
%	\end{equation}
%	Thus, according to \eqref{eq:arrival_minplus}, the flow conforms to an arrival curve $\alpha = \alpha_{\text{pp}} L^{\max}$.

2) First we show that we can derive \eqref{eq:acapp} using \eqref{eq:gregapp}.
By item 1, the flow with packet-level arrival-curve $\alphapcr{}$ conforms to $g$-regularity with $g=\alphapcr{}^{\downarrow} \circ f$, where \mbox{$f(x)=\frac{x}{L^{\max}}+1$}. The flow also conforms to a bit-level arrival-curve $\alpha'$ that is derived by applying item 2 of Proposition~\ref{pro:arr-g} to \eqref{eq:gregapp}, i.e.,  $\alpha'(t)=g^{\downarrow}(t) + L^{\text{max}}$; then,
	\begin{align}\label{eq:prop3_1_remark}
	 \nonumber\alpha'(t) &= g^{\downarrow}(t) + L^{\max} = \left(\alphapcr{}^{\downarrow} \circ f \right)^\downarrow (t)+L^{\max} \\
	 &= \left(\left(\alphapcr{}^{\downarrow} \circ f \right)^\uparrow\right)^- (t)+L^{\max} 
	\end{align}
where the last equality is obtained by \eqref{eq:jylb_theory_3}. Now, using \eqref{eq:comp-up} and left continuity of $\alphapcr{}^{\downarrow}$:
\begin{align}
    \alpha'(t)=\left(f^\uparrow \circ \left(\alphapcr{}^{\downarrow}\right)^\uparrow\right)^-(t)+L^{\max},
\end{align}
Since $f^\uparrow(t)=[t L^{\max}-L^{\max}]^+$, we have:
	\begin{align}
		\alpha'(t)&= \left[L^{\max} \left(\left(\alphapcr{}^{\downarrow}\right)^\uparrow\right)^-(t) - L^{\max} \right]^+ + L^{\max}.
	\end{align}
	Since $\alphapcr{}$ is left continuous, by Lemma~\ref{lem:upper-lower-conseq}:
    \begin{align}
        \alpha'(t) = L^{\max} \alphapcr{}(t) - L^{\max} + L^{\max}.
    \end{align}
    Note that \mbox{$\alphapcr{}^+(t)\geq 1$}. Thus $\alpha'(t)=\alpha(t)$ given by \eqref{eq:acapp}.
%	\begin{align}\label{eq:prop3_1_remark}
%	\nonumber g^{\uparrow}(t) &= \inf\left\{s\geq 0~|~g(s) > t\right\} \\
%	& = \inf\left\{s\geq 0~|~\alpha_{\text{pp}}^{\downarrow}(\frac{s}{L^{\max}}+1) > t\right\}.
%	\end{align}
%	By using \eqref{eq:liebeherr_P8_c}:
%	\begin{align}\label{eq:prop3_2_remark}
%	\nonumber g^{\uparrow}(t)&=\inf\left\{s\geq 0~|~\frac{s}{L^{\max}}+1 > \alpha_{\text{pp}}(t)\right\}\\
%	\nonumber & = \inf\left\{s\geq 0~|~s > \alpha_{\text{pp}}(t) L^{\max}- L^{\max}\right\} \\
%	& = \alpha_{\text{pp}}(t) L^{\max}- L^{\max}.
%	\end{align}
%	Thus, $\alpha'(t)=\alpha_{\text{pp}}(t) L^{\max}$, which is equal to $\alpha(t)$ given by \eqref{eq:acapp}.
	
Second we show that \eqref{eq:acapp} does not give \eqref{eq:gregapp}.
By item 1, the flow with packet-level arrival-curve $\alphapcr{}$ conforms to bit-level arrival-curve with $\alpha=f' \circ \alphapcr{}$, where $f'(t)=t L^{\max}$.
In addition, the flow also conforms to a $g'$-regularity constraint that derives by applying item of Proposition~\ref{pro:arr-g} to \eqref{eq:acapp}:
	\begin{align}
	\nonumber g'(x)&= \alpha^{\downarrow}(x+L^{\min}) =\left(f' \circ \alphapcr{}\right)^\downarrow(x+L^{\min}) \\
	&=\left(\alphapcr{}^\downarrow \circ {f'}^\downarrow\right)(x+L^{\min}),
	\end{align}
where the last equality is obtained by using \eqref{eq:comp-down} and continuity of $f'$. Now,  since ${f'}^\downarrow(x)=\frac{x}{L^{\max}}$, we have:
	\begin{align}
	g'(x) &= \alphapcr{}^\downarrow \left(\frac{x+L^{\min}}{L^{\max}}\right).
	\end{align}
	When all packets are of the same size ($L^\max=L^\min$), $g'=g$. When $L^\min<L^\max$, since $\alphapcr{}^{\downarrow}$ is a wide-sense increasing function, we have $g' \leq g$ and $g'\neq g$, i.e. $g'$ is a weaker constraint than $g$ given in \eqref{eq:gregapp}.
% \end{proof}https://www.overleaf.com/project/6193c283f817761178978f04

%###############################################################################
%###############################################################################
%###############################################################################
%###############################################################################
\subsection{Proof of Lemma~\ref{lem:queuing-delay}}\label{proof:queuing-delay}
% \begin{proof}
    Let $n$ be the index of the packet of interest with length $l_n$. Using Lemma 1 of \cite{mohammadpour_improved_2019}, there exists an $m\leq n$ such that:
	\begin{align}\label{eq:queuing-1}
	\beta(Q_n-A_m) \leq \sum_{k=m}^{n-1} l_k,
	\end{align}
	where $Q_n$ is the beginning of transmission of packet $n$.
	Using \eqref{eq:liebeherr_P6}, Eq. \eqref{eq:queuing-1} gives:
	\begin{equation}
	Q_n-A_m \leq \beta^{\uparrow}\Big(\sum_{k=m}^{n-1} l_k\Big).
	\end{equation}
	Therefore $Q_n$ satisfies,
	\begin{align}
	Q_n \leq \max_{m\leq n} \Bigg\{A_m+\beta^{\uparrow}\Big(\sum_{k=m}^{n-1} l_k\Big)\Bigg\}.
	\end{align}
	Since $\sum_{k=m}^{n-1} l_k \leq w(A_n-A_m)$, we have:
	\begin{align}\label{eq:scheduler_del_6}
    	Q_n \leq \max_{m\leq n} \Bigg\{A_m+\beta^{\uparrow}\Big(w(A_n-A_m)\Big)\Bigg\}.
	\end{align}
	By defining $t \eqdef A_n-A_m \geq 0$, we further obtain,
	\begin{align}\label{eq:queuing-2}
	Q_n - A_n \leq \sup_{t \geq 0} &\left\{-t+\beta^{\uparrow}\left(w\left(t\right)\right)\right\}.
	\end{align}
	%	According to Lemma 2 in \cite{le_boudec_theory_2018}, for any wide-sense increasing function $f:\mathbb{R}^+\rightarrow\mathbb{R}^+$, we have $f^{\uparrow}=(f^{\downarrow})^+$. 
	Applying \eqref{eq:jylb_theory_2} to \eqref{eq:queuing-2}:
	\begin{align}\label{eq:queuing-3}
	Q_n - A_n \leq \sup_{t \geq 0} &\left\{(\beta^{\downarrow})^+\left(w\left(t\right)\right)-t\right\}.
	\end{align}
	Next, we use Lemma 2 of \cite{mohammadpour_improved_2019}, which implies that for, $f\in\Fincz$,  $$\sup_{t \geq 0}\left\{f^+(t)-Rt\right\} = \sup_{t \geq 0}\left\{f(t)-Rt\right\}.$$
% 	Next, we use Lemma 2 of \cite{mohammadpour_improved_2019}, where we set $R=1$ and then, for a wide-sense increasing function $f$, we obtain $\sup_{t \geq 0}\left\{f^+(t)-t\right\} = \sup_{t \geq 0}\left\{f(t)-t\right\}$.
	We apply this result to \eqref{eq:queuing-3} by setting $\beta^{\downarrow}~=~f$, and therefore:
	\begin{align}\label{eq:scheduler_del_13}
	Q_n - A_n \leq \sup_{t \geq 0} &\left\{\beta^{\downarrow}\left(w\left(t\right)\right)-t\right\} = h(w,\beta),
	\end{align}
	which completes the proof.
% \end{proof}

%###############################################################################
%###############################################################################
%###############################################################################

\subsection{Proof of Lemma~\ref{lem:delay-alpha-g}}\label{proof:delay-alpha-g}
(i)~Let us remind that $l_{k}$ and $A_k$ are the length and the arrival time of the $k^{th}$ packet  with $k=1,2...$ (Section~\ref{sec:sys}). Let $n$ be the index of the packet of interest belonging to flow $1$ with length $l_n$, $l_n=l$. The sum of all packets can be split in two parts, one with packets belonging to flow $1$ and the one with packets belonging to flow $2$. Let $F(k)$ be the flow id of $k^{th}$ packet, then for any $0\leq m \leq n $, we have:
\begin{align}\label{eq:arrg-delay-0}
    \sum_{k=m}^{n-1} l_k = \sum_{k=m}^{n-1} 1_{\{F(k)= 1\}}l_k + \sum_{k=m}^{n-1} 1_{\{F(k)= 2\}}l_k.
\end{align}
For flow $1$ with g-regularity constraint,
by applying \eqref{eq:liebeherr_P6} to \eqref{eq:g_maxplus}, we have:
\begin{align}\label{eq:g-maxplus-delay-arrg}
    \sum_{k=m}^{n-1} 1_{\{F(k)= 1\}}l_k \leq g^{\uparrow}(A_n - A_m),
\end{align}
For flow $2$ with bit-level arrival-curve, using \eqref{eq:arrival_minplus} we have:
\begin{align}\label{eq:arrg_del_8a}
    \sum_{k=m}^{n} 1_{\{F(k)= 2\}}l_k \leq {\alpha}^+(A_n-A_m).
\end{align}
\noindent Note that since packet $n$ belongs to flow $1$, $\sum_{k=m}^{n} 1_{\{F(k)= 2\}}l_k=\sum_{k=m}^{n-1} 1_{\{F(k)= 2\}}l_k$. Therefore, together with \eqref{eq:g-maxplus-delay-arrg} and \eqref{eq:arrg_del_8a}:
\begin{align}\label{eq:arrg_del_2a}
    \nonumber \sum_{k=m}^{n-1} &1_{\{F(k)=1\}}l_k + \sum_{k=m}^{n-1} 1_{\{F(k)=2\}}l_k \\
    & \leq g^{\uparrow}(A_n - A_m) + {\alpha}^+(A_n-A_m).
\end{align}
Using \eqref{eq:arrg_del_2a} in \eqref{eq:arrg-delay-0}:
	\begin{equation}\label{eq:scheduler_del_4}
	\sum_{k=m}^{n-1} l_k \leq w(A_n-A_m),
	\end{equation}
	where the function $w: \mathbb{R}^+\rightarrow\mathbb{R}^+$ is $w= g^{\uparrow}+ \alpha^+$. Then by Lemma \ref{lem:queuing-delay} for packet $n$, we have:
\begin{align}
    Q_n - A_n \leq h(w,\beta),
\end{align}
where $Q_n$ is start of transmission of packet $n$. Since the transmission time for the packet of interest, $n$, is $D_n - Q_n=\frac{l}{c}$, the delay bound is:
\begin{align}
    D_n - A_n &= D_n - Q_n +Q_n-A_n \leq h(w,\beta) +\frac{l}{c},
\end{align}
which concludes the proof for item 1. Now, since $l\leq L_1^{\max}$, $h(w,\beta) +\frac{L_1^{\max}}{c}$ is a delay bound for flow $1$ which completes the proof.

\subsection{Proof of Theorem \ref{thm:delay_pcr_tight}}\label{proof:delay_pcr_tight}
Let us first define the function $w$ as:
    \begin{align}
      w(t) = \sum_{u=1}^{M} L_u^{\max}\alphapcr{u}(t),
    \end{align}
    The proof is in two steps: first, we construct a simulation trace; second, we verify its properties.
    
    \textbf{Step 1}.
	We start the construction of a simulation trace. 
	
	(a) We determine the smallest time instant $t'$ with the following property:
	\begin{align}\label{eq:tightness_1}
	\nonumber \beta^\downarrow\left(w(t')-L_1^{\max}\right)-t' &= \sup_{t \geq 0}\left\{\beta^\downarrow\left(w(t)-L_1^{\max}\right)-t\right\}\\
	&=h(w-L_1^{\max},\beta).
	\end{align}
	In fact, the time $t'$ will be the time at which the packet of interest arrives at the system and experiences the worst-case delay.
	
	(b) Now, we generate the packet sequence for the $M$ flows.	Flow $1$ has $n_1={\alphapcr{1}}^+(t')$ packets presented as a pair $(A^1,\mathcal{L}^1)$ where $A^1=(A^1_1,A^1_2,\dots,A^1_{n_1})$ is the sequence of packet arrival times and $\mathcal{L}^1 = (l^1_1,l^1_2,\dots,l^1_{n_1})$ is the packet length sequence,  defined for
	 all $i~\in~\{1,\dots,n_1\}$ by:
	\begin{align}\label{eq:tightness_flow1_config}
	& l^1_i = L_1^{\max},\\
	\nonumber& A^1_i = \gamma+\inf\{s\geq 0~|~ \alphapcr{1}(s) \geq i\}=\gamma+\alphapcr{1}^{\downarrow}(i),
	\end{align}
	where $\gamma=t'-\alphapcr{1}^{\downarrow}(n_1)$. Lemma~\ref{lem:tightness2} shows that \mbox{$\alphapcr{1}^{\downarrow}(n_1)\leq t'$} and hence $\gamma\geq 0$.
	The aforementioned packet sequence indicates that the packets have maximum length and the packet arrival is greedy, starting at time $\gamma$, and the last packet (packet of interest) arrives at $A^1_{n_1}= t'$.
	
	Any other flow $f$, $f\neq1$, has $n_f={\alphapcr{f}}^+(t')$ packets presented as a pair $(A^f,\mathcal{L}^f)$ where $A^f=(A^f_1,A^f_2,\dots,A^f_{n_f})$ is the packet arrival sequence and $\mathcal{L}^f = (l^f_1,l^f_2,\dots,l^f_{n_f})$ is the packet length sequence, that are defined for all $j~\in~\{1,\dots,n_f\}$ as:
	\begin{align}\label{eq:tightness_flow2_config}
	\nonumber& l^f_j = L_f^{\max},\\
	&A^f_j = \inf\{s\geq 0~|~ {\alphapcr{f}}(s) \geq j\}= {\alphapcr{f}}^{\downarrow}(j).
	\end{align}
	The aforementioned packet sequence indicate that the packets have maximum length and their arrival is greedy starting at time \mbox{$A^f_1={\alphapcr{f}}^{\downarrow}(1)=0$}. With the above arrival construction, we have the cumulative input packet-count function as:
	\begin{align}
	    N_u(t) = 
	    \begin{cases}
          \sum_{\substack{i=1}}^{n_u} 1_{A^u_i< t} & \text{$t\leq t'$}\\
          n_u & \text{$t> t'$}
        \end{cases}  
        ~,~\forall u=1,2,\dots,M.
	\end{align}
	
	Now let us merge the two packet sequences to express the total traffic. We define the pair $(A,\mathcal{L})$ with arrival sequence $A=(A_1,A_2,\dots,A_n)$ and lengths sequence $\mathcal{L}=(l_1,l_2,\dots,l_n)$ of total packets and $n=\sum_{k=1}^{M}n_k$, i.e., \mbox{$(A,\mathcal{L}) = \bigcup_{k=1}^{M}(A^k,\mathcal{L}^k)$}. Lemma~\ref{lem:tightness2} indicates that for any $f\neq 1$:
	$$
	A^f_{n_f} \leq t'=A^1_{n_1}.
	$$
	In the case $A^f_{n_f}=A^1_{n_1}=t'$, assume that the last packet of flow $1$ is enqueued after the last packet of other flows; hence, $A_n=A^1_{n_1}=t'$ and $l_n=L^{\max}_1$.
	
	The cumulative input function, $I(t)$ is shown as the green line in Fig. \ref{fig:delay_pcr_tight} that is obtained as:
	\begin{align}
	\nonumber I(t) &= \sum_{u=1}^{M}L_k^{\max} N_u(t)=\begin{cases}
          \sum_{\substack{k=1}}^{\infty} L_k^{\max}1_{\{A_k< t\}} & t\leq t'\\
          \sum_{u=1}^{M}n_u L_u^{\max}=w^+(t) & t> t'
        \end{cases} 
% 	\min\left(w(t), w^+(t')\right).
	\end{align}

	\begin{figure}[t]
		\centering
		\includegraphics[width=0.9 \linewidth]{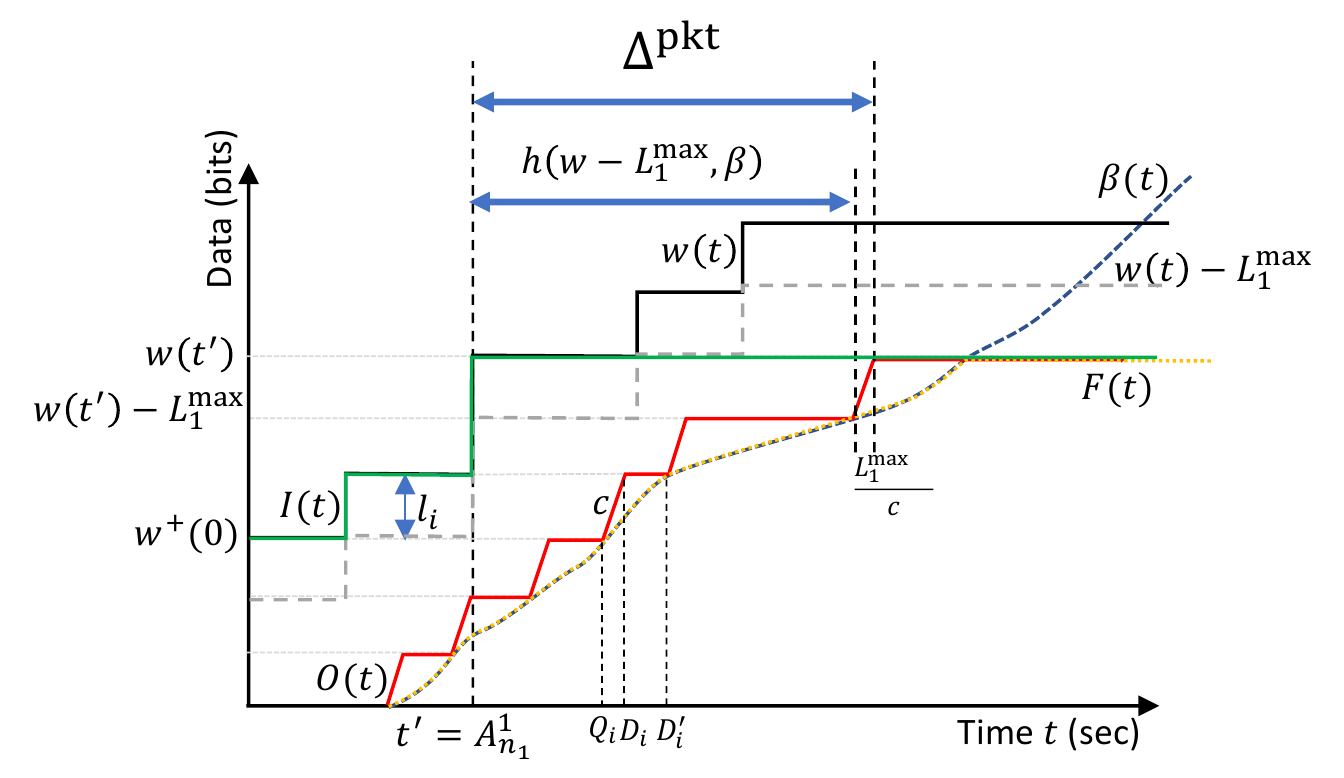}
		\caption{The execution trace used in the proof of Theorem~\ref{thm:delay_pcr_tight}. The delay of the packet with length $L^{\max}_1$ that arrives at time $t'$ is $\Delta^{\mathrm{pkt}}$.}
		\label{fig:delay_pcr_tight}
	\end{figure}
	
	(c) For the output, we first construct the fluid output curve $F(t)$ (orange dotted-line in Fig. \ref{fig:delay_pcr_tight}) given by
    \begin{align} \label{eq:output_tight}
    	F(t) = \inf_{0\leq s \leq t}\left\{I(s)+ \beta(t-s)\right\},
	\end{align}
    so that the service curve property would be automatically satisfied if we would let the output cumulative function be $O(t)=F(t)$. However, we cannot take $O(t)=F(t)$ because $F(t)$ does not satisfy the condition that packet transmission is at rate $c$. In order to obtain the output function $O(t)$, we first observe the start and end of transmission time of a packet $l_i$, i.e.,  $Q_i$ and $D'_i$ as:
    \begin{align}
        \nonumber D'_i &= \inf\left\{s\geq0 : F(s) \geq I^+(A_i)\right\} = F^{\downarrow}\left(I^+(A_i)\right),\\
        \nonumber Q_i &= \inf\left\{s\geq0 : F(s) \geq I^+(A_i)-l_i\right\} = F^{\downarrow}\left(I^+(A_i)-l_i\right).
    \end{align}
    In the above definitions, we have $D'_i = Q_{i+1}$ as \mbox{$I^+(A_{i+1})=I^+(A_i)+l_i$} by definition of $I$.
    For the output function $O(t)$, we keep the same time $Q_i$ for the start of transmission of packet $i$, but, we let the transmission finish at time $D_i = Q_i + \frac{l_i}{c}$. Observe that:
    \begin{align}
        D'_i - D_i = F^\downarrow\left(I^+(A_i)\right) - \left(F^\downarrow\left(I^+(A_i)-l_i\right) + \frac{l_i}{c}\right).
    \end{align}
    By Lemma \ref{lem:lipconv}, $F(t)$ is $c$-Lipschitz. Then using Lemma \ref{lem:pseudo_inverse_Lipschitz},
    \begin{align}
        D'_i - D_i \geq \frac{l_i}{c} - \frac{l_i}{c} = 0.
    \end{align}
    Then, more precisely $\forall i=1,\dots,n_1+n_2$ and \mbox{$\forall t\in[Q_i,D'_{i}]$}, $O(t)$ is:
    \begin{align}
        O(t)=
        \begin{cases}
                  c(t-Q_i)+F(Q_i) & Q_i\leq t< D_i,\\
                  F(Q_i) + l_i & D_i\leq t\leq D'_{i}.
        \end{cases}
    \end{align}
    $O(t)$ is shown with red line in Fig. \ref{fig:delay_pcr_tight}.
    
    \textbf{Step 2.} 
	We verify that all requirements in the theorem are satisfied. % 	We have already shown the service curve property. 
	First we show that the service curve property holds; to do this, since $F$ satisfies the service curve property, it is sufficient to show that $O(t)\geq F(t): \forall t\in[Q_i,D'_{i}]$ for any \mbox{$i=1,\dots,n$}. Using Lipschitz continuity of $F$ for \mbox{$t\in[Q_i,D_i)$},
	\begin{align}
	    F(t) \leq F(Q_i) + c(t-Q_i) = O(t).
	\end{align}
	For $t\in[D_i,D'_{i}]$, by construction:
	\begin{align}
	    F(t) \leq F(Q_i) + l_i = O(t).
	\end{align}
	The above equations imply $O(t) \geq F(t), \forall t\geq 0$; therefore, the service curve property is satisfied when the output is $O(t)$. 
	
	Moreover, by construction the system is FIFO, the input is packetized and packet transmission occurs at rate $c$. We need to prove that the input conforms to the packet-level arrival-curve. For any flow $f$, $f\neq 1$, consider two time instants $s,t\geq 0$, $s\leq t$. 
	
	\noindent If $t=s=0$, then $N_f(t)-N_f(s)=\alphapcr{f}(t-s)=0$. 
	
	\noindent If $0<t\leq t'$, there exist a packet index $m'$ where \mbox{$t\in(A^f_{m'},A^f_{m'+1}]$}. Then by Lemma \ref{lem:tightness-arrival} with $k=0$, we have \mbox{$m'=\alphapcr{f}(t)$}. Now if $s=0$, we have 
	$$N_f(t)-N_f(s) = N_f(A^f_{m'+1})-N_f(0) = m'=\alphapcr{f}(t);$$
	otherwise, there exists a packet index $m$ ($m\leq m'$), where $s\in(A^f_m,A^f_{m+1}]$. Then:
	\begin{align}
	    \nonumber N_f(t)-N_f(s) = N_f(A^f_{m'+1})-N_f(A^f_{m+1}) = m' - m.
	\end{align}
	By Lemma \ref{lem:tightness-arrival} with $k=0$, $m'=\alphapcr{f}(t)$ and $m=\alphapcr{f}(s)$. Therefore:
	\begin{align}
	    \nonumber N_f(t)-N_f(s) = \alphapcr{f}(t)-\alphapcr{f}(s) \leq \alphapcr{f}(t-s),
	\end{align}
	where the last inequality is due to sub-additivity of $\alphapcr{f}$.
	
	\noindent If $t> t'$, by construction, $N_f(t)=n_f=\alphapcr{f}^+(t')$. Now, for $s=0$, we have 
	$$N_f(t)-N_f(s) = n_f={\alphapcr{f}}^+(t') \leq \alphapcr{f}(t); \forall t>t'.$$
	For $0<s\leq t'$, there exists a packet index $m$ ($m\leq m'$), where $s\in(A^f_m,A^f_{m+1}]$. Then by Lemma \ref{lem:tightness-arrival} with $k=0$, we have \mbox{$m=\alphapcr{}(s)$}. Therefore, 
	\begin{align}
	    \nonumber N_f(t)-N_f(s) &= n_f-m ={\alphapcr{f}}^+(t') - \alphapcr{f}(s) \\
	    \nonumber &\leq {\alphapcr{f}}(t-s); \forall t>t',
	\end{align}
	where the last inequality is due to sub-additivity of ${\alphapcr{f}}$.
	
	\noindent For $t'<s \leq t$, $N_f(s)=n_f$. Then,
	$$N_f(t)-N_f(s) = n_f-n_f=0\leq \alphapcr{f}(t-s), $$
	which shows flow $f$ conforms to the packet-level arrival-curve. 
	
	For flow $1$, similarly to the above computation and using Lemma~\ref{lem:tightness-arrival} with $k=\gamma$, we obtain
	$$N_1(t)-N_1(s) \leq \alphapcr{}(t-s); \forall s\leq t, \forall t\geq 0,$$ 
	which shows flow $1$ conforms to packet-level arrival-curve.
	
	Last, we show that packet $n$ achieves the delay bound. We have for packet $n$, $A_n = t'$; then:
	\begin{align}
	Q_n &=  F^\downarrow(I^+(t')-l_n) = F^\downarrow(w^+(t')-L^{\max}_1).
	\end{align}
	 Furthermore, $D_n=Q_n+\frac{l_n}{c} = Q_n+\frac{L_1^{\max}}{c}$, therefore,
	\begin{align}
	\nonumber D_n -A_n&= F^\downarrow(w^+(t')-L^{\max}_1) -t'+\frac{L_1^{\max}}{c}.
	\end{align}
	By definition of $F$ in \eqref{eq:output_tight}, we have $F\leq \beta$. Then by \cite[Lemma 10.1]{liebeherr_duality_2017}, $F^\downarrow \geq \beta^{\downarrow}$. Hence:
	\begin{align}\label{eq:delay-tight-final}
	\nonumber D_n -A_n &\geq \beta^\downarrow(w^+(t')-L^{\max}_1) -t'+\frac{L_1^{\max}}{c}\\
	&= h(w^+ -L^{\max}_1,\beta) +\frac{L_1^{\max}}{c}=\Delta^{\pkt}.
	\end{align}
	By Theorem \ref{thm:delay_pcr}, we have $D_n -A_n \leq \Delta^{\pkt}$; then together with the above equation, we have $D_n -A_n = \Delta^{\pkt}$.

\begin{lemma}\label{lem:tightness2}
	Consider a wide-sense increasing function $f: \mathbb{R^+}\rightarrow\mathbb{Z^+}$, a positive real value $t_0$, a positive integer $n=f^+(t_0)$, and a value $x=f^{\downarrow}(n)$; then we have $x \leq t_0$.
\end{lemma}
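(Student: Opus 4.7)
The plan is to unpack both defining expressions, use monotonicity of $f$, and observe that the set appearing in the definition of $f^{\downarrow}$ contains a right-neighbourhood of $t_0$.

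First, I would rewrite the right limit: since $f$ is wide-sense increasing, the sequence $\{f(t_0+\varepsilon)\}_{\varepsilon>0}$ is non-increasing as $\varepsilon\downarrow 0$, and therefore
\begin{equation*}
f^+(t_0) \;=\; \lim_{\varepsilon\to 0^+} f(t_0+\varepsilon) \;=\; \inf_{\varepsilon > 0} f(t_0+\varepsilon).
\end{equation*}
By hypothesis this infimum equals $n$, so every term of the family satisfies $f(t_0+\varepsilon)\geq n$ for all $\varepsilon>0$. (In fact, since $f$ takes values in $\mathbb{Z}^+$, equality $f(t_0+\varepsilon)=n$ holds for all sufficiently small $\varepsilon>0$, but monotone inequality is all that is needed.)

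Next, I would combine this with the definition of the lower pseudo-inverse in~\eqref{eq:lsi-def}: the set
\begin{equation*}
S \;\eqdef\; \{s\geq 0 \mid f(s)\geq n\}
\end{equation*}
contains the open ray $(t_0,\infty)$, because every $s=t_0+\varepsilon$ with $\varepsilon>0$ belongs to $S$ by the previous step. Consequently
\begin{equation*}
x \;=\; f^{\downarrow}(n) \;=\; \inf S \;\leq\; \inf(t_0,\infty) \;=\; t_0,
\end{equation*}
which is the desired conclusion.

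There is no significant obstacle here; the only subtle point is justifying that the right limit of a wide-sense increasing integer-valued function at $t_0$ equals $n$ forces $f(t_0+\varepsilon)\geq n$ for every $\varepsilon>0$, rather than merely for $\varepsilon$ small. This follows immediately from the interpretation of the right limit as the infimum of a non-increasing family, which is why I would state that identity explicitly at the outset.
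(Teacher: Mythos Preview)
Your proof is correct and self-contained, but it follows a different route from the paper's. The paper first rewrites $x=f^{\downarrow}(f^+(t_0))$ and then invokes the identity $(f^+)^{\downarrow}=f^{\downarrow}$ from~\eqref{eq:jylb_theory_1} to obtain $x=(f^+)^{\downarrow}(f^+(t_0))$; it finishes by citing the general property $F^{\downarrow}(F(t))\leq t$ (Property~(P1) of \cite[Chapter~10.1]{liebeherr_duality_2017}) with $F=f^+$. Your argument instead unpacks both definitions directly: you identify $f^+(t_0)$ with $\inf_{\varepsilon>0}f(t_0+\varepsilon)$, deduce that $(t_0,\infty)\subseteq\{s:f(s)\geq n\}$, and read off the infimum bound. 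The paper's version is shorter because it leans on already-stated pseudo-inverse machinery, while your version is more elementary and does not depend on the external reference; both reach the same conclusion with no real gap.
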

\begin{proof}
	We have:
	\begin{equation}
	x = f^{\downarrow}(n) = f^{\downarrow}(f^+(t_0)).
	\end{equation}
	By \eqref{eq:jylb_theory_1}, $f^{\downarrow}=(f^+)^{\downarrow}$. Therefore, $x=(f^+)^{\downarrow}(f^+(t_0))$;
	then using Property (P1) of \cite[Chapter 10.1]{liebeherr_duality_2017} with $F=f^+$, we have $(f^+)^{\downarrow}(f^+(t_0))~\leq~t_0$ which concludes the proof.
\end{proof}

\begin{lemma}\label{lem:lipconv}
    Consider two functions $f,g$ where $f$ is $L$-Lipschitz and $g\geq0$. Then $z(t) = \inf_{0\leq s\leq t}\{g(s)+f(t-s)\}$ is $L$-Lipschitz.
\end{lemma}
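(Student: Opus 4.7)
The plan is to establish the inequality $|z(t_2)-z(t_1)| \le L(t_2-t_1)$ for any $0 \le t_1 \le t_2$, which is equivalent to $L$-Lipschitz continuity. I would attack the two directions $z(t_2)-z(t_1)\le L(t_2-t_1)$ and $z(t_1)-z(t_2)\le L(t_2-t_1)$ separately, using $\varepsilon$-minimizers of the infimum defining $z$ and letting $\varepsilon \to 0$ at the end.

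The first direction is routine. For any $\varepsilon>0$, pick $s_1 \in [0,t_1]$ with $g(s_1)+f(t_1-s_1) \le z(t_1)+\varepsilon$. Since $s_1 \le t_1 \le t_2$, this same $s_1$ is a feasible index in the infimum defining $z(t_2)$, so $z(t_2) \le g(s_1) + f(t_2 - s_1)$. Applying the $L$-Lipschitz property of $f$ to bound $f(t_2-s_1)-f(t_1-s_1) \le L(t_2-t_1)$ and letting $\varepsilon \to 0$ yields the bound.

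The second direction is the delicate one. Pick an $\varepsilon$-minimizer $s_2 \in [0,t_2]$ of $z(t_2)$. If $s_2 \le t_1$, the symmetric substitution works verbatim. The obstacle is the case $s_2 > t_1$, in which $s_2$ is not a legal index in the infimum defining $z(t_1)$. My plan is to use $s = t_1$ inside $z(t_1)$, giving $z(t_1) \le g(t_1) + f(0)$, and then combine two inequalities: $f(0) \le f(t_2-s_2)+L(t_2-s_2)$ from $L$-Lipschitz of $f$, and $g(t_1) \le g(s_2)$ from monotonicity of $g$. Because $s_2 > t_1$ forces $t_2-s_2 < t_2-t_1$, the excess $L(t_2-s_2)$ is absorbed into $L(t_2-t_1)$, closing the argument after $\varepsilon \to 0$.

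The main obstacle is that the step $g(t_1) \le g(s_2)$ requires $g$ to be wide-sense increasing; without this, a constant $f$ (so $L=0$) paired with any non-monotone $g \ge 0$ produces an easy counterexample to the literal statement. In the application to the construction of $F$ in the proof of Theorem~\ref{thm:delay_pcr_tight}, $g = I$ is a cumulative input function and hence lies in $\Fincz$, so this hypothesis is harmless. I would therefore prove the lemma under the implicit convention, standard in min-plus convolution theory, that $g$ is wide-sense increasing.
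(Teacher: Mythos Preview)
Your proof is correct, and your diagnosis of the missing hypothesis is accurate: as literally stated the lemma is false (take $f\equiv 0$, $L=0$, and any non-monotone $g\ge 0$), while in the only application $g=I\in\Fincz$, so adding wide-sense monotonicity of $g$ is the right repair.

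The paper's proof is organized a bit differently but is essentially the same idea. Instead of $\varepsilon$-minimizers, it sets $w_s(t)=g(s)+f(t-s)$, notes each $w_s$ is $L$-Lipschitz in $t$, and manipulates infima directly. For the direction $z(t_2)\le z(t_1)+L(t_2-t_1)$ this is clean: $z(t_2)\le\inf_{0\le s\le t_1}w_s(t_2)\le\inf_{0\le s\le t_1}w_s(t_1)+L(t_2-t_1)=z(t_1)+L(t_2-t_1)$. For the reverse direction, however, the paper asserts that $z(t_1)-L(t_2-t_1)\le w_s(t_2)$ for \emph{every} $s\ge 0$ and then takes the infimum over $s\in[0,t_2]$. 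The justification offered---$z(t_1)\le w_s(t_1)$ together with $w_s(t_1)-L(t_2-t_1)\le w_s(t_2)$---is only valid for $s\le t_1$, since for $s>t_1$ the quantity $w_s(t_1)=g(s)+f(t_1-s)$ is either undefined or not dominated by $z(t_1)$. So the paper's argument has exactly the gap you anticipated, and your case split on $s_2\le t_1$ versus $s_2>t_1$, invoking monotonicity of $g$ in the second case, is precisely what is needed to close it.
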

\begin{proof}
    Let us define the set of functions $w_s(t)$ with constant $s\in\R^+$, as:
    \begin{align}
        w_s(t) = g(s) + f(t-s).
    \end{align}
    Then, $z(t)=\inf_{0\leq s\leq t}\{w_s(t)\}$. First we prove that $w_s$ is \mbox{$L$-Lipschitz} for any $s \in \R^+$. For any $t_1,t_2\in \R^+$ and $t_2\geq t_1$:
    \begin{align}\label{eq:lem-lip-inf}
        \nonumber |w_s(t_2)-&w_s(t_1)| = |g(s) + f(t_2-s)- g(s) - f(t_1-s)| \\
        &= |f(t_2-s)-f(t_1-s)| \leq L |t_2-t_1|.
    \end{align}
    The last inequality is obtained as $f$ is $L$-Lipschitz. Now we prove the lemma. By \eqref{eq:lem-lip-inf}, we have for any $s\in \R^+$:
    \begin{align}\label{eq:lem-lip-inf-2}
        w_s(t_1) - L |t_2-t_1| \leq w_s(t_2) \leq w_s(t_1) + L |t_2-t_1|.
    \end{align}
    Using the left inequality, we have:
    \begin{align}
        \forall s\in \R^+~:~\inf_{0\leq u\leq t_1}\left\{w_u(t_1) - L |t_2-t_1|\right\} \leq w_s(t_2).
    \end{align}
    that gives $z(t_1)- L |t_2-t_1|\leq w_s(t_2)$; therefore,
    \begin{align}\label{eq:lem-lip-inf-3}
        z(t_1)- L |t_2-t_1| \leq \inf_{0\leq s\leq t_2}\left\{w_s(t_2)\right\}=z(t_2)
    \end{align}
    Using the right inequality in \eqref{eq:lem-lip-inf-2}, we have:
    \begin{align}
        \nonumber\inf_{0\leq s\leq t_2} \{w_s(t_2)\} &\leq \inf_{0\leq s\leq t_2}\left\{w_s(t_2) + L |t_2-t_1|\right\}\\
        &\leq \inf_{0\leq s\leq t_1}\left\{w_s(t_2)\right\}+ L |t_2-t_1|,
    \end{align}
    that gives $z(t_2)\leq z(t_1)+ L |t_2-t_1|$. Then, together with \eqref{eq:lem-lip-inf-3}:
    \begin{align}
        z(t_1)- L |t_2-t_1| \leq z(t_2)\leq z(t_1)+ L |t_2-t_1|.
    \end{align}
    Hence, $|z(t_2) - z(t_1)| \leq L |t_2-t_1|$, which concludes the proof.
\end{proof}

\begin{lemma}\label{lem:tightness-arrival}
    Consider a left-continuous function $\alpha:\R^+\rightarrow \N$ and a constant $k\geq 0$. Let $A=(A_1,A_2,\dots)$ be a sequence where $A_i=k+\alpha^{\downarrow}(i)$. Assume a positive integer $m$ and a time instant $t$, such that $t\in(A_m,A_{m+1}]$; then we have \mbox{$m=\alpha(t-k)$}.
\end{lemma}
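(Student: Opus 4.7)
The plan is to reduce to the shifted variable $s = t-k$, for which the hypothesis $t \in (A_m, A_{m+1}]$ becomes $\alpha^\downarrow(m) < s \leq \alpha^\downarrow(m+1)$, and then prove that $\alpha(s) = m$ by sandwiching $\alpha(s)$ between $m$ and $m+1-1 = m$ using the two sides of this inequality together with the fact that $\alpha$ takes integer values.

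First I would establish the lower bound $\alpha(s) \geq m$. Because $s > \alpha^\downarrow(m) = \inf\{u \geq 0 : \alpha(u) \geq m\}$, the infimum definition guarantees a point $u_0 \in [\alpha^\downarrow(m), s)$ with $\alpha(u_0) \geq m$. Monotonicity of $\alpha$ then yields $\alpha(s) \geq \alpha(u_0) \geq m$.

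Next I would establish the upper bound $\alpha(s) \leq m$, splitting into two subcases according to whether the right-hand inequality is strict. If $s < \alpha^\downarrow(m+1)$, then $s \notin \{u \geq 0 : \alpha(u) \geq m+1\}$ (otherwise $s$ would be at least the infimum of that set), so $\alpha(s) < m+1$, i.e.\ $\alpha(s) \leq m$ by integrality. If $s = \alpha^\downarrow(m+1)$, then every $u < s$ satisfies $\alpha(u) < m+1$, and left-continuity of $\alpha$ gives $\alpha(s) = \lim_{u \to s^-}\alpha(u) \leq m$. Combining with the lower bound yields $\alpha(s) = m$, i.e.\ $\alpha(t-k) = m$.

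The only delicate point is the boundary case $s = \alpha^\downarrow(m+1)$, where the conclusion truly requires the left-continuity hypothesis on $\alpha$; without it $\alpha$ could jump from $m$ to $m+1$ exactly at $s$, breaking the upper bound. Everything else is a direct unpacking of the definition of $\alpha^\downarrow$ in \eqref{eq:lsi-def} combined with monotonicity and integrality of $\alpha$.
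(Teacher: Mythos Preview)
Your proof is correct and follows essentially the same sandwich argument as the paper: both show $\alpha(t-k)\geq m$ from $t-k>\alpha^{\downarrow}(m)$ and $\alpha(t-k)<m+1$ from $t-k\leq\alpha^{\downarrow}(m+1)$ together with left-continuity and integrality. The only cosmetic difference is that the paper invokes an external proposition and an $\varepsilon$-limit for the upper bound, whereas you unpack the infimum definition directly and split into the strict/equality subcases; the underlying idea is identical.
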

\begin{proof}
    We prove $\alpha(t-k)\geq m$ and $\alpha(t-k)\leq m$. We have \mbox{$t > A_m =k+\alpha^{\downarrow}(m)$}; therefore, \mbox{$t-k > \alpha^{\downarrow}(m)$}. Then, by \cite[Proposition 6]{boyer_common}, we have $\alpha(t-k) \geq m$.
    
    \noindent Also, since $t\leq A_{m+1}$, for any $\varepsilon>0$ we have:
    \begin{align}
        t - \varepsilon < A_{m+1} = k+\alpha^{\downarrow}(m+1).
    \end{align}
    Therefore,
    \begin{align}
        t -k- \varepsilon < \alpha^{\downarrow}(m+1).
    \end{align}
    Then, by \cite[Proposition 6]{boyer_common}, $\alpha(t-k-\varepsilon) < m+1$. Hence:
    \begin{align}
        \lim_{\varepsilon\rightarrow0}\alpha(t-k-\varepsilon) < m+1.
    \end{align}
    As $\alpha$ is left-continuous, we have $\alpha(t-k-\varepsilon)=\alpha(t-k)<m+1$. Since $\alpha(t-k)\in \N$, $\alpha(t-k)\leq m$.
\end{proof}

%###############################################################################
%###############################################################################
%###############################################################################
\subsection{Proof of Theorem \ref{thm:delay_tight_fixed_interval}}\label{proof:delay_tight_fixed_interval}
As discussed in Section~\ref{sec:pktarr}, the flows can have sliding-interval and fixed-interval regulation constraints; let $F$ and $S$ be respectively the sets of flows with fixed interval and sliding interval. 
By \eqref{eq:cpcr}, if a flow $i$ has the sliding interval ($\tau_i,K_i$) constraint, this is equivalent to have the following packet-level arrival-curve:
\begin{align}\label{eq:arr-slid-tight}
\alphapcr{i}(0)=0,~\alphapcr{i}(t) = K_i\ceil{\frac{t}{\tau_i}},~t>0.
\end{align}
By \eqref{eq:fpcr}, if a flow $i$ has the fixed interval ($\tau_i,K_i$), the constraint implies the following packet-level arrival-curve:
\begin{align}
    \alphapcr{j}(0)=0,~\alphapcr{j}(t) = K_j\ceil{\frac{t}{\tau_j}}+K_j,~t>0.
\end{align}
Then the delay bound obtain by Theorem~\ref{thm:delay_pcr} for flow $1$ is:
\begin{align}
    \nonumber \Delta^\pkt=&h(\sum_{u=1}^{N}L_u^{\max}K_u\ceil{\frac{t}{\tau_u}}+\sum_{u=1}^{N}L_u^{\max}K_u1_{\{u\in F\}}-L_1^{\max},\beta)\\
    &+\frac{L_1^{\max}}{c}.
\end{align}
We want to construct a simulation trace where a packet of flow $1$ experiences a delay arbitrarily close to $D$.
% By \eqref{eq:cpcr}, if a flow $i$ has the sliding interval ($\tau_i,K_i$), the constraint is equivalent to have the following packet-level arrival-curve:
% \begin{align}
%     \alphapcr{}^i(t) = K_i\ceil{\frac{t}{\tau_i}}.
% \end{align}
To this end, for flows in $S$, we use \eqref{eq:arr-slid-tight} to construct the input packet sequence as it is equal to the sliding interval interpretation. 
For any other flow $j$ with fixed interval ($\tau_j,K_j$), we use the following packet-level arrival-curve:
\begin{align}\label{eq:arr-fix-tight-ep}
    \nonumber \alphapcr{j}^\varepsilon(0)&=0,\\
    \alphapcr{j}^\varepsilon(t)&=K_j\ceil{\frac{[t-\varepsilon]^+}{\tau_j}}+K_j : t>0, \forall\varepsilon\in(0,\min_{u}\{\tau_u\}).
\end{align}
Lemma~\ref{lem:arr-fixed-interval-greedy} shows that a greedy packet-sequence of $\alpha_\varepsilon^j$, starting at  \mbox{$t_0=\max_u\{\tau_u\}-\varepsilon$}, conforms to  Fixed interval ($\tau_j,K_j$).

\noindent Now, let:
\begin{align}\label{eq:def-w-tsn}
      \nonumber w_\varepsilon(t) &= \sum_{u=1}^{N}L_u^{\max}\alphapcr{}^u(t)1_{\{u\in S\}}+\sum_{u=1}^{N}L_u^{\max}\alphapcr{u}^\varepsilon(t)1_{\{u\in F\}}\\
      \nonumber &=\sum_{u=1}^{N}L_u^{\max}K_u\ceil{\frac{t}{\tau_u}}1_{\{u\in S\}}\\
      &+\sum_{u=1}^{N}L_u^\max K_u\ceil{\frac{[t-\varepsilon]^+}{\tau_u}}1_{\{u\in F\}}+\sum_{u=1}^{N}K_u1_{\{u\in F\}}.
\end{align}
The tightness scenario follows the same as Theorem~\ref{thm:delay_pcr_tight} by generating a greedy packet-sequence for every flow using the packet-level arrival-curves in \eqref{eq:arr-fix-tight-ep} for fixed-interval flows and \eqref{eq:arr-slid-tight} for sliding-interval flows. To create a feasible greedy packet-sequence, we shift the start of the simulation by $t_0$ (as mentioned earlier, this guarantees the existence of greedy packet-sequence for the fixed-interval flows).
% (so that there is a greedy packet-sequence for all flows with fixed interval).
Therefore, a packet of a flow of interest, i.e., flow $1$, experiences a delay of $d_\varepsilon=h(w_\varepsilon -L^{\max}_1,\beta) +\frac{L_1^{\max}}{c}$ as shown in \eqref{eq:delay-tight-final}. By definition of $w_\varepsilon$ in \eqref{eq:def-w-tsn}, $d_\varepsilon$ is dependent on $\varepsilon$. We show next that $d\geq \Delta^\pkt-\varepsilon$.
% when $\varepsilon$ is arbitrary small. 
% The function $w$ in \eqref{eq:def-w-tsn} has two parts: 1) dependent on $\varepsilon$; 2) independent of $\varepsilon$. 

To this end, let us define the following auxiliary functions:
\begin{align}
    \nonumber f(t) &=\sum_{u=1}^{N}L_u^\max K_u\ceil{\frac{t}{\tau_u}}1_{\{u\in F\}},~t>0;~f(t)=0, t\leq 0,\\
     g(t) &= \sum_{u=1}^{N}L_u^{\max}K_u\ceil{\frac{t}{\tau_u}}1_{\{u\in S\}}+\sum_{u=1}^{N}K_u1_{\{u\in F\}}-L^{\max}_1.
\end{align}
Then we have 
\begin{align}
    \nonumber w_\varepsilon(t)&=f(t-\varepsilon)+g(t)+L^{\max}_1\\
    \Delta^\pkt&=h(f+g,\beta)+\frac{L_1^{\max}}{c}.
\end{align}
By Lemma~\ref{lem:h-shift}, $h(w_\varepsilon -L^{\max}_1,\beta)\geq h(f+g,\beta)-\varepsilon$; therefore,
\begin{align}
    d_\varepsilon \geq h(f+g,\beta) +\frac{L_1^{\max}}{c}-\varepsilon = \Delta^\pkt-\varepsilon.
\end{align}
% Therefore,
% \begin{align}
%     \nonumber h(w -L^{\max}_1,\beta) &=\sup_{t \geq 0} \{\beta^{\downarrow}\left(w(t)-L^{\max}_1\right)-t\}\\
%     \nonumber &=\sup_{t \geq 0} \{\beta^{\downarrow}\left(f(t-\epsilon)+g(t)\right)-t\}\\
%     &\geq \sup_{t \geq \epsilon} \{\beta^{\downarrow}\left(f(t-\epsilon)+g(t)\right)-t\}.
% \end{align}
% Since $g(.)$ is wide-sense increasing, $g(t)\geq g(t-\epsilon),~\forall t\geq \epsilon$. Therefore:
% \begin{align}
%     \nonumber h(w -L^{\max}_1,\beta) &\geq \sup_{t \geq \epsilon} \{\beta^{\downarrow}\left(f(t-\epsilon)+g(t-\epsilon)\right)-t\}\\
%     \nonumber&=\sup_{s \geq 0} \{\beta^{\downarrow}\left(f(s)+g(s)\right)-s-\epsilon\}\\
%     \nonumber&=\sup_{s \geq 0} \{\beta^{\downarrow}\left(f(s)+g(s)\right)-s\}-\epsilon\\
%     \nonumber&=h(f+g,\beta)-\epsilon,
% \end{align}
% and therefore,
% % $\forall\epsilon\in(0,\min_{u}\{\tau_u\})$,
% \begin{align}
%     \nonumber d=h(w -L^{\max}_1,\beta) +\frac{L_1^{\max}}{c} &\geq h(f+g,\beta)-\epsilon +\frac{L_1^{\max}}{c}\\
%     &=D-\epsilon.
% \end{align}
Moreover, by Theorem~\ref{thm:delay_pcr}, $d_\varepsilon\leq \Delta^\pkt$; hence,
$$\Delta^\pkt-\varepsilon\leq d_\varepsilon\leq \Delta^\pkt.$$
Finally, when \mbox{$\varepsilon\rightarrow 0$}, $d_\varepsilon$ gets arbitrary close to $\Delta^\pkt$.
% \begin{align}
%     d \geq \lim_{\epsilon\rightarrow0}D-\epsilon = D.
% \end{align}
%  Hence, $d=D$.

\begin{lemma}\label{lem:arr-fixed-interval-greedy}
    Consider the following packet-level arrival-curve:
    \begin{align}
        \nonumber \alphapcr{}(0)&=0,\\
        \alphapcr{}(t)&=K\ceil{\frac{[t-\epsilon]^+}{\tau}}+K : t>0, \forall\epsilon\in(0,\tau).
    \end{align}
    Then, a greedy packet-sequence of $\alphapcr{}$ at any time \mbox{$t_0\geq \tau-\epsilon$}, conforms to the fixed-interval ($\tau,K$) in \eqref{eq:fixed-interval}.
    % Then, for any $\theta$ there exists a greedy packet-sequence of $\alphapcr{}$ at \mbox{$t_0=\theta+\tau-\epsilon$} that is equal to fixed interval ($\tau,K$) constraint in \eqref{eq:fixed-interval}. 
\end{lemma}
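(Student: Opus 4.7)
The plan is to construct the greedy source explicitly and choose an offset $\theta$ that places each reference interval of length $\tau$ around exactly one burst of the source.

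First, I would spell out the greedy source. From $A_k = t_0 + \alphapcr{}^\downarrow(k)$ applied to the staircase $\alphapcr{}$ of the lemma, one sees that the source emits a burst of $K$ packets at $t_0$, another burst of $K$ packets at $t_0 + \epsilon$, and thereafter $K$ packets at each of the times $t_0 + \epsilon + (b-2)\tau$ for $b = 3, 4, \ldots$; equivalently, $N(t) = 0$ for $t \leq t_0$ and $N(t) = \alphapcr{}(t - t_0)$ for $t > t_0$.

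Next, I would set $\theta := t_0 - (\tau - \epsilon)$. The hypothesis $t_0 \geq \tau - \epsilon$ gives $\theta \geq 0$, and since $\theta \leq t_0$ we have $N(\theta) = 0 = N(0)$, verifying the initial condition of \eqref{eq:fixed-interval}. The remaining step is to check that each reference interval $[\theta + i\tau, \theta + (i+1)\tau)$ contains at most $K$ packets. For $i = 0$ the interval is $[t_0 - \tau + \epsilon, t_0 + \epsilon)$, which captures only the burst at $t_0$, so $N(\theta + \tau) - N(\theta) = K$. For $i \geq 1$ the interval becomes $[t_0 + \epsilon + (i-1)\tau, t_0 + \epsilon + i\tau)$, which captures only the burst at $t_0 + \epsilon + (i-1)\tau$, and plugging into $N(t) = \alphapcr{}(t-t_0)$ again yields a difference of exactly $K$.

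The argument is mostly bookkeeping; the only non-routine step is identifying the offset $\theta = t_0 - (\tau - \epsilon)$, which is precisely the shift that separates the two closely spaced bursts at $t_0$ and $t_0 + \epsilon$ into distinct reference intervals and makes the hypothesis $t_0 \geq \tau - \epsilon$ both natural and necessary for the construction.
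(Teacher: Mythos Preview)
Your proof is correct and follows essentially the same approach as the paper: you choose the same offset $\theta = t_0 - (\tau - \epsilon)$, verify $N(\theta)=0$ from $t_0 \geq \tau - \epsilon$, and then check the cases $i=0$ and $i\geq 1$ exactly as the paper does. The only difference is cosmetic---you phrase the verification in terms of which burst falls into each reference interval, whereas the paper evaluates $\alphapcr{}(\epsilon+i\tau)-\alphapcr{}(\epsilon+(i-1)\tau)$ directly---but these are the same computation.
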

\begin{proof}
    Consider the cumulative packet function $N$ defined as:
    \begin{align}
        N(0)=N(t_0)=0,~~N(t)=\alphapcr{}(t-t_0),~t>t_0.
    \end{align}
     By definition, the above function is greedy at time $t_0$. 
     Now, in \eqref{eq:fixed-interval}, let $\theta=t_0-\tau+\epsilon$. Since \mbox{$t_0\geq \tau-\epsilon$}, we have $0\leq \theta < t_0$; therefore, $N(\theta)=0$. Next, we show that the greedy packet-sequence conforms to the fixed-interval constraint for any time $\geq \theta$. For $i=0$ in \eqref{eq:fixed-interval}:
     \begin{align}
         N(\theta+\tau)-N(\theta)=N(t_0+\epsilon)-0 = \alphapcr{}(\epsilon)=K,
     \end{align}
     and for $i\geq 1$:
     \begin{align}
         \nonumber N(\theta+(i+1)\tau)&-N(\theta+i\tau)=N(t_0+\epsilon+i\tau)\\
         \nonumber&-N(t_0+\epsilon+(i-1)\tau)= \nonumber \alphapcr{}(\epsilon+i\tau)\\
         &-\alphapcr{}(\epsilon+(i-1)\tau)\\
         \nonumber&=(iK+K)-((i-1)K+K)=K.
     \end{align}
     Therefore, $\forall i\in\N$:
     \begin{align}
         N(\theta)=0,~N(\theta+(i+1)\tau)&-N(\theta+i\tau) \leq K,
     \end{align}
     which shows there exists $\theta$ ($=t_0-\tau+\epsilon$) where the cumulative function $N$, as a greedy packet-sequence of $\alphapcr{}$ at $t_0$, conforms to the fixed interval ($\tau,K$) constraint.
\end{proof}

\begin{lemma}\label{lem:h-shift}
    Consider the functions $f,g,\beta\in\Finc$. Let 
    \begin{align}
    f_\epsilon(t)=
    \begin{cases}
              f(t-\epsilon) ~~&t\geq\epsilon,\\
              0 &t< \epsilon.
    \end{cases}    
    \end{align}
    Then \mbox{$h(f_\epsilon+g,\beta)\geq h(f+g,\beta)-\epsilon$}.
\end{lemma}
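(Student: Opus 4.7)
The plan is to unwind the definition of horizontal deviation and exploit the simple observation that a time-shift of $f$ by $\epsilon$ can be compensated by evaluating at time $t+\epsilon$ rather than $t$. Concretely, starting from
\[
h(f+g,\beta) = \sup_{t\geq 0}\bigl\{\beta^{\downarrow}(f(t)+g(t)) - t\bigr\},
\]
I would fix an arbitrary $t \geq 0$ and compare the value of the argument at $t$ with the value of the $f_\epsilon+g$ argument at $t+\epsilon$.

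First I would note that, by the definition of $f_\epsilon$, we have $f_\epsilon(t+\epsilon) = f(t)$ for every $t\geq 0$. Since $g\in\Finc$ is wide-sense increasing, $g(t+\epsilon) \geq g(t)$. Adding these two facts gives
\[
f_\epsilon(t+\epsilon) + g(t+\epsilon) \;\geq\; f(t)+g(t).
\]
Since $\beta \in \Finc$, its lower pseudo-inverse $\beta^{\downarrow}$ is non-decreasing (stated in Section~\ref{sec:preq}), hence
\[
\beta^{\downarrow}\bigl(f_\epsilon(t+\epsilon)+g(t+\epsilon)\bigr) \;\geq\; \beta^{\downarrow}\bigl(f(t)+g(t)\bigr).
\]
Subtracting $t+\epsilon$ from both sides, the right-hand side becomes $\beta^{\downarrow}(f(t)+g(t)) - t - \epsilon$.

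Finally, I would take the supremum over $t\geq 0$. The left-hand side is upper bounded by $\sup_{s \geq \epsilon}\{\beta^{\downarrow}(f_\epsilon(s)+g(s)) - s\}$ (setting $s = t+\epsilon$), which is itself $\leq h(f_\epsilon+g,\beta)$ since enlarging the domain of the supremum from $s\geq \epsilon$ to $s\geq 0$ can only make it larger. The right-hand side, after taking the supremum over $t\geq 0$, becomes $h(f+g,\beta) - \epsilon$. Combining yields the required inequality. I do not expect any real obstacle: the only subtlety is confirming that shifting the supremum variable from $t\geq 0$ to $s=t+\epsilon \geq \epsilon$ and then to $s \geq 0$ preserves the inequality, which follows because the supremum can only grow when the feasible set is enlarged.
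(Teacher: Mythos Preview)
Your proof is correct and follows essentially the same approach as the paper's: both rely on the change of variable $s=t+\epsilon$, the monotonicity of $g$ to compare $g(t+\epsilon)$ with $g(t)$, the monotonicity of $\beta^{\downarrow}$, and the fact that enlarging (or restricting) the domain of a supremum moves it in the expected direction. The only cosmetic difference is the direction of the argument---the paper starts from $h(f_\epsilon+g,\beta)$ and restricts the supremum to $t\geq\epsilon$ before substituting, whereas you start from $h(f+g,\beta)$ and shift each $t$ forward by $\epsilon$; the two are mirror images of one another.
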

\begin{proof}
    Therefore,
\begin{align}
    \nonumber h(f_\epsilon+g,\beta) &=\sup_{t \geq 0} \{\beta^{\downarrow}\left(f_\epsilon(t)+g(t)\right)-t\}\\
    \nonumber &\geq \sup_{t \geq \epsilon} \{\beta^{\downarrow}\left(f_\epsilon(t)+g(t)\right)-t\}\\
    &= \sup_{t \geq \epsilon} \{\beta^{\downarrow}\left(f(t-\epsilon)+g(t)\right)-t\}.
\end{align}
Since $g(.)$ is wide-sense increasing, $g(t)\geq g(t-\epsilon),~\forall t\geq \epsilon$. Therefore:
\begin{align}
    \nonumber h(f_\epsilon+g,\beta) &\geq \sup_{t \geq \epsilon} \{\beta^{\downarrow}\left(f(t-\epsilon)+g(t-\epsilon)\right)-t\}\\
    \nonumber&=\sup_{s \geq 0} \{\beta^{\downarrow}\left(f(s)+g(s)\right)-s-\epsilon\}\\
    \nonumber&=\sup_{s \geq 0} \{\beta^{\downarrow}\left(f(s)+g(s)\right)-s\}-\epsilon\\
    &=h(f+g,\beta)-\epsilon,
\end{align}
which concludes the proof.
\end{proof}

\subsection{Proof of Theorem \ref{thm:response_fifo_arrival}} \label{proof:response_fifo_arrival}

(i)~Let us remind that $l_{k}$ and $A_k$ are the length and the arrival time of the $k^{th}$ packet  with $k=1,2...$ (Section~\ref{sec:sys}). Let $n$ be the index of the packet of interest belonging to flow $1$ with length $l_n$, $l_n=l$. The sum of all packets can be split in two parts, one with packets belonging to flow $1$ and the one with packets belonging flow $2$. Let $F(k)$ be the flow id of $k^{th}$ packet, then:
    \begin{align}\label{eq:arr-delay-0}
        \sum_{k=m}^{n-1} l_k = \sum_{k=m}^{n-1} 1_{\{F(k)= 1\}}l_k + \sum_{k=m}^{n-1} 1_{\{F(k)= 2\}}l_k.
    \end{align}
The flow of interest $1$ has a bit-level arrival-curve; using its min-plus representation \eqref{eq:arrival_minplus}, for any $0\leq m \leq n $, we can write:
\begin{align}\label{eq:arcurve1}
    \sum_{k=m}^{n} 1_{\{F(k)=1\}}l_k \leq \alpha^+(A_n - A_m)  .
\end{align}
By excluding the last packet from the left hand-side of \eqref{eq:arcurve1} (note that $l_n = l$), we obtain:
\begin{equation}\label{eq:scheduler_del_1a}
    \sum_{k=m}^{n-1} 1_{\{F(k)=1\}}l_k \leq \alpha^+(A_n - A_m)-l.
\end{equation}
Similarly, flow $2$ has bit-level arrival-curve; thus, using \eqref{eq:arrival_minplus} for any $0\leq m \leq n $, we can write:
\begin{align}\label{eq:scheduler_del_8a}
    \sum_{k=m}^{n-1} 1_{\{F(k)= 2\}}l_k \leq {\alpha'}^+(A_n-A_m).
\end{align}
\noindent Note that since packet $n$ belongs to flow $1$, the above equation conforms the min-plus representation of bit-level arrival-curve in \eqref{eq:arrival_minplus} for flow $2$.	Now, we sum up \eqref{eq:scheduler_del_1a} and \eqref{eq:scheduler_del_8a}:
\begin{align}\label{eq:scheduler_del_2a}
    \nonumber \sum_{k=m}^{n-1} &1_{\{F(k)=1\}}l_k + \sum_{k=m}^{n-1} 1_{\{F(k)=2\}}l_k \\
    & \leq \alpha^+(A_n - A_m) + {\alpha'}^+(A_n-A_m)-l.
\end{align}
Using \eqref{eq:scheduler_del_2a} in \eqref{eq:arr-delay-0}:
	\begin{equation}\label{eq:scheduler_del_4}
	\sum_{k=m}^{n-1} l_k \leq w(A_n-A_m),
	\end{equation}
	where the function $w: \mathbb{R}^+\rightarrow\mathbb{R}^+$ is $w= \alpha^++ \alpha_2^+-l$. Then by Lemma \ref{lem:queuing-delay} for packet $n$, we have:
\begin{align}
    Q_n - A_n \leq h(w,\beta),
\end{align}
where $Q_n$ is start of transmission of packet $n$. Since the transmission time for the packet of interest, $n$, is $D_n - Q_n=\frac{l}{c}$, the delay bound is:
\begin{align}
    D_n - A_n &= D_n - Q_n +Q_n-A_n \leq h(w,\beta) +\frac{l}{c},
\end{align}
which completes the proof.

(ii) We want to compute 
 \begin{align}
\nonumber\sup_{l \in [L_1^{\min}, L_1^{\max}]}  &\left\{\Delta^A(l)\right\}\\
\nonumber &= \sup_l \left[\sup_t \left(\beta^{\downarrow}(\alpha^+ + {\alpha'}^+-l)-t\right)+\frac{l}{c}\right]\\
&= \sup_t \left[\sup_l \left(\beta^{\downarrow}(\alpha^+ + {\alpha'}^+-l)+\frac{l}{c}\right)-t\right].\label{eq:delaybound}
 \end{align}
 Since $\beta$ is $c$-Lipschitz, by Lemma \ref{lem:pseudo_inverse_Lipschitz}, it satisfies:
 \begin{align}
 \beta^{\downarrow}(\alpha^+ + {\alpha'}^+-L^{\min}_1) - &\beta^{\downarrow}(\alpha^+ + {\alpha'}^+-l) \nonumber 
 \\ \geq \frac{1}{c}(l-L^{\min}_1).
 \end{align}
 This gives,
 \begin{align}
  \nonumber \beta^{\downarrow}(\alpha^+ + {\alpha'}^+-l) \leq & \beta^{\downarrow}(\alpha^+ + {\alpha'}^+-L^{\min}_1)\\
  &-\frac{1}{c}(l-L^{\min}_1).
 \end{align}
 By using the last relation in \eqref{eq:delaybound}, we obtain,
 \begin{align}\label{eq:arr-delay-1}
 \nonumber\sup_{l \in [L_1^{\min}, L_1^{\max}]} &\left\{\Delta^A(l)\right\}\\
 \nonumber&= \sup_t \left[\beta^{\downarrow}(\alpha^+ + {\alpha'}^+-L^{\min}_1)-t+\frac{L^{\min}_1}{c}\right]\\
 &=h\left(\alpha^+ + {\alpha'}^+-L^{\min}_1,\beta\right)+\frac{L^{\min}_1}{c}.
 \end{align}
 Since $\beta$ is $c$-Lipschitz continuous and $\alpha,\alpha'$ are left continuous, by \cite[Theorem~5.6]{bouillard_deterministic_2018}, we have:
 \begin{align}
     h\left(\alpha^+ + {\alpha'}^+-L^{\min}_1,\beta\right) = h\left(\alpha + {\alpha'}-L^{\min}_1,\beta\right),
 \end{align}
  which together with \eqref{eq:arr-delay-1} completes the proof.

%###############################################################################
%###############################################################################
%###############################################################################
\subsection{Proof of Proposition \ref{pro:comp-pck}} \label{proof:comp-pck}
% From Proposition \ref{pro:rate2reg}, flows $1,2$ conform to bit-level arrival-curves $\alpha(t)=\alphapcr{}(t)L_1^{\max}$ and $\alpha'(t)=\alphapcr{'}(t)L_2^{\max}$. Then,	by Theorem~\ref{thm:response_fifo_arrival}, the response time for flow $1$ is upper-bounded by:
% \begin{align}
% &\Delta^A=h(\alphapcr{}L_1^{\max} + \alphapcr{'}L_2^{\max}-L_1^{\min},\beta) + \frac{L_1^{\min}}{c}.
% \end{align}
% By Theorem~\ref{thm:delay_pcr}, the response time for flow $1$ is upper-bounded by:
% \begin{align}
% &\Delta^A=h(\alphapcr{}L_1^{\max} + \alphapcr{'}L_2^{\max}-L_1^{\max},\beta) + \frac{L_1^{\max}}{c}.
% \end{align}
% We can see that $\Delta^{pck}= \Delta^A$, when $L_1^{\max}=L_1^{\min}$ (all packets have the same length).
% By \cite[Proposition~5.12]{bouillard_deterministic_2018}, $h$, is monotonically increasing with respect to its first argument; hence, $\Delta^{pck}(l)\leq \Delta^A(l)$ when $l<L^{\max}_1$. Note that if $l=L_1^{\max}$, $\Delta^{pck}(l)= \Delta^A(l)$.

% Since $\Delta^{pck}(l)\leq \Delta^A(l)$ holds for all packet sizes $l$, it also holds that $\sup_l \Delta^{pck}(l)\leq \sup_l \left\{\Delta^A(l)\right\}$, i.e., $\Delta^{pck} \leq \Delta^A$. Note that, if  $L_1^{\max}=L_1^{\min}$ (all packets have the same length), then $\Delta^{pck}= \Delta^A$.

From Proposition \ref{pro:rate2reg}, any flow $u$ conform to a bit-level arrival-curve $\alpha_u(t)=\alphapcr{u}(t)L_u^{\max}$. Then by Theorem~\ref{thm:response_fifo_arrival}, the delay bound for a packet with size $l$ of flow $1$ is:
\begin{align}
&\Delta^A(l)=h\left(\sum_{u=1}^{U}\alphapcr{u}L_u^{\max} - l,\beta\right) + \frac{l}{c}.
\end{align}
By \cite[Proposition~5.12]{bouillard_deterministic_2018}, $h$, is monotonically increasing with respect to its first argument; therefore, $\Delta^{pkt}(l)\leq \Delta^A(l)$.
% when $l<L^{\max}_1$. 
% Note that if $l=L_1^{\max}$, $\Delta^{pck}(l)= \Delta^A(l)$.
% Since $\Delta^{pck}(l)\leq \Delta^A(l)$ holds for all packet sizes $l$, it also holds that
As a result, $\sup_l \Delta^{pkt}(l)\leq \sup_l \left\{\Delta^A(l)\right\}$, i.e., $\Delta^{pkt} \leq \Delta^A$. Note that, if  $L_1^{\max}=L_1^{\min}$ (all packets have the same length), then $\Delta^{pkt}= \Delta^A$. For the general statement, we show a case that when $L_1^{\min}<L_1^{\max}$, the per-flow bound in Theorem~\ref{thm:delay_pcr} strictly improves $\Delta^A$. 

First, for the ease of presentation, let us define \mbox{$w(t)=\sum_{u=1}^{U}\alphapcr{u}L_u^{\max}$}. Next, assume a rate-latency service-curve $\beta(t)=R[t-T]^+$,~$R<c$.
Then, as \mbox{$\beta^{\downarrow}(x)=T+\frac{x}{R}$}, Theorem~\ref{thm:delay_pcr} gives:
\begin{align}\label{eq:pkt-bit-1}
    \nonumber\Delta^{pkt} &= \sup_{t \geq 0} \left\{\beta^{\downarrow}\left(w(t)-L_1^{\max}\right)-t\right\}+\frac{L_1^{\max}}{c}\\
    \nonumber&=\sup_{t \geq 0} \left\{T+\frac{w(t)-L_1^{\max}}{R}-t\right\}+\frac{L_1^{\max}}{c}\\
    &=T-L_1^{\max}\left(\frac{1}{R}-\frac{1}{c}\right)+\sup_{t \geq 0} \left\{\frac{w(t)}{R}-t\right\}.
\end{align}
Using the derived bit-level arrival-curves of flows $1$ and $2$, Theorem~\ref{thm:response_fifo_arrival} gives:
\begin{align}\label{eq:pkt-bit-2}
    \nonumber\Delta^{\mathrm{A}} &= \sup_{t \geq 0} \left\{\beta^{\downarrow}\left(w(t)-L_1^{\min}\right)-t\right\}+\frac{L_1^{\min}}{c}\\
    \nonumber&=\sup_{t \geq 0} \left\{T+\frac{w(t)-L_1^{\min}}{R}-t\right\}+\frac{L_1^{\min}}{c}\\
    &=T-L_1^{\min}\left(\frac{1}{R}-\frac{1}{c}\right)+\sup_{t \geq 0} \left\{\frac{w(t)}{R}-t\right\}.
\end{align}
By \eqref{eq:pkt-bit-1} and \eqref{eq:pkt-bit-2}:
\begin{align}
    \Delta^{pkt} - \Delta^{\mathrm{A}} = -(L_1^{\max} - L_1^{\min})\left(\frac{1}{R}-\frac{1}{c}\right) <0,
\end{align}
as $L_1^{\min}<L_1^{\max}$ and $R<c$.

\subsection{Proof of Proposition \ref{pro:compare-g-arr-delay}} \label{proof:compare-g-arr-delay}
\label{proof:compare-g-arr-delay}
(i)	 Theorem \ref{thm:response_fifo_arrival} gives:
	\begin{equation}\label{eq:compare-g-arr-1}
	    \Delta^\A(l) = h(\alpha^+ + {\alpha'}^+-l,\beta) + \frac{l}{c}.
	\end{equation}
	From item 1 of Proposition~\ref{pro:arr-g}, $g_1(x)=\alpha^{\downarrow}(x+L_1^{\min})$ and \mbox{$g_2(x)={\alpha'}^{\downarrow}(x+L_2^{\min})$}. By \eqref{eq:jylb_theory_1}, we have:
	\begin{align}
	    \nonumber g_1(x)=(\alpha^+)^{\downarrow}(x+L_1^{\min}),~g_2(x)=({\alpha'}^+)^{\downarrow}(x+L_2^{\min}).
	\end{align}
	
	Since $(\alpha^+)^{\downarrow}$ and $({\alpha'}^+)^{\downarrow}$ are left continuous and respectively larger than or equal to $L_1^{\max}$ and $L_2^{\max}$, by \eqref{eq:comp-up}:
	\begin{align}
	    \nonumber g_1^{\uparrow}(t) &= ((\alpha^+)^{\downarrow})^{\uparrow}(t) - L_1^{\min},\\
	     g_2^{\uparrow}(t) &= (({\alpha'}^+)^{\downarrow})^{\uparrow}(t) - L_2^{\min}.
	\end{align}
	As $\alpha^+$ and ${\alpha'}^+$ are right continuous, by \eqref{eq:liebeherr_lemma10.1d} we have:
	\begin{align}
	    g_1^{\uparrow}(t) = \alpha^+(t) - L_1^{\min},~g_2^{\uparrow}(t) = {\alpha'}^+(t) - L_2^{\min}.
	\end{align}
	Then, by applying Theorem \ref{thm:greg-delay}, we obtain 	
	\begin{align}\label{eq:compare-g-arr-2}
	    \Delta^\G(l)= h(\alpha^++ {\alpha'}^+-L_1^{\min}-L_2^{\min}+L_2^{\max},\beta) + \frac{l}{c}.
	\end{align}
	Let us compare \eqref{eq:compare-g-arr-1} and \eqref{eq:compare-g-arr-2}. Since $l\geq L_1^{\min}$ and \mbox{$L_2^{\max}\geq L_2^{\min}$}, we have:
	\begin{align}
	    \alpha^+ + {\alpha'}^+-l \leq \alpha^++ {\alpha'}^+-L_1^{\min}-L_2^{\min}+L_2^{\max}.
	\end{align}
	By \cite[Proposition~5.12]{bouillard_deterministic_2018}, $h$, is monotonically increasing with respect to its first argument; hence, $\Delta^A(l)\leq \Delta^G(l)$.
% 	when $l>L^{\min}_1$. Note that if $l=L_1^{\min}$, $\Delta^A(l)= \Delta^G(l)$
	
	Since $\Delta^A(l)\leq \Delta^G(l)$ holds for all packet sizes $l$, it also holds that $\sup_l \Delta^A(l)\leq \sup_l \Delta^G(l)$, i.e., $\Delta^A \leq \Delta^G$. 
	If  $L_1^{\max}=L_1^{\min}$ and $L_2^{\max}=L_2^{\min}$ (the packets of each flow have the same length), then $\Delta^A= \Delta^G$. For the general statement, we show a case that when $L_1^{\min}<L_1^{\max}$ or $L_2^{\min}<L_2^{\max}$, we have $\Delta^A < \Delta^\G$. 

First, for the ease of presentation, let us define \mbox{$w(t)=\alpha(t) + \alpha'(t)$}. Next, assume a rate-latency service-curve \mbox{$\beta(t)=R[t-T]^+$}, \mbox{$R< c$}.
Then, as \mbox{$\beta^{\downarrow}(x)=T+\frac{x}{R}$}, Theorem~\ref{thm:response_fifo_arrival} gives:
	\begin{align}\label{eq:arr-g-1}
    \nonumber\Delta^{\A} &= \sup_{t \geq 0} \left\{\beta^{\downarrow}\left(w(t)-L_1^{\min}\right)-t\right\}+\frac{L_1^{\min}}{c}\\
    \nonumber&=\sup_{t \geq 0} \left\{T+\frac{w(t)-L_1^{\min}}{R}-t\right\}+\frac{L_1^{\min}}{c}\\
    &=T-\frac{L_1^{\min}}{R}+\sup_{t \geq 0} \left\{\frac{w(t)}{R}-t\right\}+\frac{L_1^{\min}}{c}.
\end{align}
Using the derived g-regularity constraints of flows $1$ and $2$, Theorem~\ref{thm:greg-delay} gives:
\begin{align}\label{eq:arr-g-2}
    \nonumber\Delta^{\G} &= \sup_{t \geq 0} \left\{\beta^{\downarrow}\left(w(t)-L_1^{\min}-L_2^{\min}+L_2^{\max}\right)-t\right\}+\frac{L_1^{\max}}{c}\\
    \nonumber&=\sup_{t \geq 0} \left\{T+\frac{w(t)-L_1^{\min}-L_2^{\min}+L_2^{\max}}{R}-t\right\}+\frac{L_1^{\max}}{c}\\
    &=T-\frac{L_1^{\min}}{R}+\sup_{t \geq 0} \left\{\frac{w(t)}{R}-t\right\}+\frac{L_2^{\max}-L_2^{\min}}{R}+\frac{L_1^{\max}}{c}.
\end{align}
By \eqref{eq:arr-g-1} and \eqref{eq:arr-g-2}:
\begin{align}
    \Delta^{\A} - \Delta^\G = -\frac{L_2^{\max}-L_2^{\min}}{R} - \frac{L_1^{\max}-L_1^{\min}}{c} <0,
\end{align}
as $L_1^{\min}<L_1^{\max}$ or $L_2^{\min}<L_2^{\max}$.
		
(ii) Theorem \ref{thm:greg-delay} gives:
    \begin{align}
    	\Delta^\G(l)= h(g_1^{\uparrow} + g_2^{\uparrow} + L^{\max}_2,\beta) + \frac{l}{c}.
	\end{align}
    From item 2 of Proposition \ref{pro:arr-g}, we obtain $\alpha(t)=g_1^{\downarrow}(t) + L_1^{\max}$ and $\alpha'(t)=g_2^{\downarrow}(t) + L_2^{\max}$. Using Theorem \ref{thm:response_fifo_arrival}, we have:
	\begin{align}
	\nonumber \Delta^\A(l)&=h((g_1^{\downarrow})^+ + (g_2^{\downarrow})^+ + L_2^{\max}+L_1^{\max}-l,\beta) + \frac{l}{c}\\
	&=h(g_1^{\uparrow} + g_2^{\uparrow} + L_2^{\max}+L_1^{\max}-l,\beta) + \frac{l}{c}.
	\end{align}
	Note that by \eqref{eq:jylb_theory_2}, $(g_1^{\downarrow})^+=g_1^{\uparrow}$ and $(g_2^{\downarrow})^+=g_2^{\uparrow}$. 
% 	By \cite[Proposition~5.12]{bouillard_deterministic_2018}, $h(w,z);\forall w,z\in \Finc$, is monotonically increasing with respect to $w$;
    Similarly to the proof of (i), due to monotony of $h$ with respect to its first argument, $\Delta^\G(l) \leq \Delta^\A(l)$.
    % when $l<L^{\max}_1$. Note that if $l=L_1^{\max}$, $\Delta^G(l)= \Delta^A(l)$. 
		
	Since $\Delta^G(l)\leq \Delta^A(l)$ holds for all packet sizes $l$, it also holds that $\sup_l \Delta^G(l)\leq \sup_l \Delta^A(l)$, i.e., $\Delta^G \leq \Delta^A$. 
	If  $L_1^{\max}=L_1^{\min}$ (the packets of the flow if interest have the same length), then $\Delta^A= \Delta^G$. For the general statement, similarly to the proof of item (i), we show a case that when $L_1^{\min}<L_1^{\max}$, we have $\Delta^A < \Delta^\G$. Considering the same service curve \mbox{$\beta(t)=R[t-T]^+$}, \mbox{$R< c$} as proof of item (1), we obtain:
	\begin{align}
	    \Delta^{\G} - \Delta^\A = -(L_1^\max-L_1^\min)\left(\frac{1}{R}-\frac{1}{c}\right) <0,
	\end{align}
    as $L_1^{\min}<L_1^{\max}$ and $R<c$.

\section{Example of non $c$-Lipschitz service curve}\label{appendix-example}
% \TBD{This section is not well done.}
% \begin{remark}
Consider the following function (FIFO residual service curve \cite{bouillard_deterministic_2018}), where $\theta$ and $R$ are fixed positive numbers:
$$\beta(t)=\begin{cases} 0 &\mbox{if } t \leq \theta,\\ 
	R t & \mbox{if }t > \theta.
\end{cases}$$
It is not $c$-Lipschitz as it is not continuous at $t=\theta$. Considering the assumptions of item (i) of Theorem~\ref{thm:response_fifo_arrival}, the response time of a packet with size $l$ of flow $1$ is 
\begin{align}
	\Delta^\A(l)= \sup_{t\geq 0} \left\{\beta^{\downarrow}\left(\alpha^+(t)+{\alpha'}^+(t)-l\right)-t   \right\}
	+\frac{l}{c}.
\end{align}
and the delay bound for flow $1$ is:
\begin{align}
    \Delta^\A = \sup_{l\in[L^\min_1,L^\max_1]} \left\{\Delta^\A(l)\right\}
\end{align}
Given that $\alpha^+(t) \geq L_1^{\max}$, ${\alpha'}^+(t) \geq L_2^{\max}$ and $l\leq L_1^{\max}$, we have
\begin{align}
    \beta^{\downarrow}\left(\alpha^+(t)+{\alpha'}^+(t)-l\right)=\max(\theta, \frac{\alpha^+(t)+{\alpha'}^+(t)-l}{R})
\end{align}
% $\beta^{\downarrow}(\alpha^+(t)+{\alpha'}^+(t)-l)=$
% \begin{align}
% 	\begin{cases} 0 &\mbox{if } \alpha^+(t)+{\alpha'}^+(t) =l \\ 
% 		\max\{\theta, \frac{\alpha^+(t)+{\alpha'}^+(t)-l}{R}\} & \mbox{if }\alpha^+(t)+{\alpha'}^+(t) > l
% 	\end{cases}
% \end{align}
% \begin{align}
%     \Delta^\A(l) = 
%     \begin{cases}
%         \theta + \frac{l}{c} & \mbox{if } \theta \geq \frac{\alpha^+(t)+{\alpha'}^+(t)-l}{R}\\
%         \frac{\alpha^+(t)+{\alpha'}^+(t)}{R}-l\left(\frac{1}{R}-\frac{1}{c}\right) & \mbox{if } \theta < \frac{\alpha^+(t)+{\alpha'}^+(t)-l}{R}
%     \end{cases}
% \end{align}
Therefore,
\begin{align}
    \nonumber \Delta&^\A(l) =\sup_{t\geq 0} \left\{\max(\theta, \frac{\alpha^+(t)+{\alpha'}^+(t)-l}{R})-t   \right\}
	+\frac{l}{c}\\
	\nonumber &=\sup_{t\geq 0} \left\{\max(\theta - t, \frac{\alpha^+(t)+{\alpha'}^+(t)-l}{R}-t)   \right\}
	+\frac{l}{c}\\
% 	page 274 of Liebeherr uses sup(max) =max(sup) 
	\nonumber &=\max\left(\sup_{t\geq 0}\left\{\theta - t\right\},\sup_{t\geq 0}\{\frac{\alpha^+(t)+{\alpha'}^+(t)-l}{R}-t\}\right)+\frac{l}{c}\\
	&=\max\left(\theta+\frac{l}{c},\psi-\frac{l}{R}+\frac{l}{c}\right),
\end{align}
with 
$$
\psi = \sup_{t\geq 0}\left\{\frac{\alpha^+(t)+{\alpha'}^+(t)}{R}-t\right\}.
$$
After examining all cases and some algebra, we find that 
% based on the values of $R,\theta$ and the functions $\alpha,\alpha'$, 
%either of the following cases can occur.
% \begin{itemize}
%     \item $\theta \geq \psi-\frac{l}{R}$. Then, $\Delta^\A(l)=\theta+\frac{l}{c}$ and therefore,
%     $$
%     \Delta^\A = \sup_{l\in[L^\min_1,L^\max_1]} \left\{\Delta^\A(l)\right\}=\theta+\frac{L^\max_1}{c},
%     $$
%     implying that supremum is achieved for $l=L^\max_1$. \TBD{This does not make sense, the condition depends on $l$.}
%     \item $\theta < \psi-\frac{l}{R}$. Then, $\Delta^\A(l)=\psi-\frac{l}{R}+\frac{l}{c}$ and therefore,
%     \begin{align}
%         \nonumber \Delta^\A &= \sup_{l\in[L^\min_1,L^\max_1]} \left\{\Delta^\A(l)\right\}=\psi-\frac{L^\min_1}{R}+\frac{L^\min_1}{c},
%         % \nonumber&=\Delta^\A(L^\min_1),
%     \end{align}
%     implying that supremum is achieved for $l=L^\min_1$.
% \end{itemize}

\begin{itemize}
    %\item $R(\psi - \theta) <L^\min_1$. Then, $\Delta^\A(l)=\theta+\frac{l}{c}$ and therefore,
    %$$
    %\Delta^\A = \sup_{l\in[L^\min_1,L^\max_1]} \left\{\Delta^\A(l)\right\}=\theta+\frac{L^\max_1}{c},
    %$$
    %and the supremum is achieved for $l=L^\max_1$. 
    % \TBD{This does not make sense, the condition depends on $l$.}
    
    \item if $R(\psi - \theta) \leq (1-\frac{R}{c})L^\min_1+\frac{R}{c}L^\max_1$ then 
    \begin{align}
    \nonumber \Delta^\A &=\sup_{l\in[L^\min_1,L^\max_1]} \left\{\Delta^\A(l)\right\}=\theta+\frac{L^\max_1}{c}
    \end{align}
    and the supremum is attained at $l=L^\max_1$;
    
    %\item $L^\min_1 \leq R(\psi - \theta) \leq L^\max_1$. Then, 
    %\begin{align}
    %\nonumber \Delta^\A &= \sup_{l\in[L^\min_1,L^\max_1]} \left\{\Delta^\A(l)\right\}\\
    %\nonumber&=\max\left(\theta+\frac{L^\max_1}{c},\psi-\frac{L^\min_1}{R}+\frac{L^\min_1}{c}\right),
    %\end{align}
    
    %\item $R(\psi - \theta) >L^\max_1$. Then, $\Delta^\A(l)=\psi-\frac{l}{R}+\frac{l}{c}$ and therefore,
    %\begin{align}
     %   \nonumber \Delta^\A &= \sup_{l\in[L^\min_1,L^\max_1]} %\left\{\Delta^\A(l)\right\}=\psi-\frac{L^\min_1}{R}+\frac{L^\min_1}{c},
        % \nonumber&=\Delta^\A(L^\min_1),
    %end{align}
    %implying that the supremum is achieved for $l=L^\min_1$.
    
    \item else \begin{align}
        \nonumber \Delta^\A &= \psi-\frac{L^\min_1}{R}+\frac{L^\min_1}{c},
        % \nonumber&=\Delta^\A(L^\min_1),
    \end{align} and the supremum is attained at $l=L^\min_1$ and not at $l=L^\max_1$.
    
\end{itemize}
Therefore, the supremum over $[L^\min_1,L^\max_1]$ to obtain $\Delta^\A$ can be achieved either at $l=L^\max_1$ or $l=L^\min_1$, depending on the parameter values.% depending on $\alpha,\alpha',c$ and $R,\theta$.

% \TBD{Wrong. If $\psi>\theta$: if $R(\psi-\theta) >L^\max_1$ the max is achieved for $l=L^\min_1$;  if $R(\psi-\theta) <L^\min_1$ the max is achieved for $l=L^\min_1$; else the max is at one of the two. If $\psi\leq \theta$ the max is for $l=L^\max_1$;}

% \TBD{The first case above is never possible because $l\leq \alpha^+(t)$.}
% Therefore, we observe that if  $\sup_t \{\beta^{\downarrow}(\alpha^+(t)+{\alpha'}^+(t)-l)-t   \}$ equals to $0$ or $\theta$, then the supremum over the packet sizes is achieved for $L_1^{\max}$, i.e., $\Delta^A = \frac{L_1^{\max}}{R}$ or $\Delta^A = \theta +\frac{L_1^{\max}}{R}$, correspondingly. However, if 
% $$\sup_t \{\beta^{\downarrow}(\alpha^+(t)+{\alpha'}^+(t)-l)-t   \}= \sup_t \{\frac{\alpha^+(t)+{\alpha'}^+(t)-l)}{R}-t\},$$
% then the supremum over the packet sizes is achieved for $L_1^{\min}$. 

% \DONE{So what ? We have not proven that both cases are possible !}

% Therefore, if not assuming Lipschitz continuity of the service curve, finding an upper bound on the response time of any packet of flow $1$ should be performed on a core basis. \TBD{???}
% \end{remark}

% that's all folks
\end{document}